%% file: main.tex
\documentclass[12pt]{article}
\usepackage[utf8]{inputenc}

\usepackage{amsmath}
\usepackage{amssymb}
\usepackage{bm}
\usepackage{booktabs}
\usepackage[hidelinks]{hyperref}
\usepackage{graphicx}
\usepackage[hmargin=1.0in,vmargin={1.0in,1.1in},centering]{geometry}
\usepackage{setspace}
\usepackage{tikz}
\tikzset{every picture/.style={line width=0.75pt}} 
\usepackage{subcaption}
\usepackage[color=yellow]{todonotes}

\usepackage{fontspec}
\newfontface\DejaVuSans{DejaVuSans.ttf}
\def\digamma{\mbox{\DejaVuSans\char"03DD}}

\newcommand{\fr}[2]{\frac{#1}{#2}}

\newcommand{\pr}[1]{\left(#1\right)}
\newcommand{\br}[1]{\left[#1\right]}

\newtheorem{proposition}{Proposition}

\newenvironment{proof}[1][Proof]{\noindent\textbf{#1.} }{\ \rule{0.5em}{0.5em}}

\newcommand{\covid}{COVID-19}

\newcommand{\sirextend}{SAIRD}
\newcommand{\econsir}{Econ-SIR}

\newcommand{\smoothstep}{\Lambda}
\newcommand{\Fcomp}{F^{c}}
\newcommand{\partialexpec}{h}

\newcommand{\infratexc}{\beta}
\newcommand{\infrisk}{\widetilde{I}}  
\newcommand{\cost}{\psi}
\newcommand{\deathrate}{\kappa}

\newcommand{\ecoutput}{y}
\newcommand{\td}{(t)}

\newcommand{\zreslaw}{\bar{z}^{P}} 
\newcommand{\npibinary}{b}
\newcommand{\npiall}{B}

\newcommand{\params}{\theta}
\newcommand{\obsv}{x}
\newcommand{\obsvall}{X}

\RequirePackage[sort]{natbib}
\title{Economics and Epidemics: Evidence from an Estimated Spatial Econ-SIR Model\thanks{We would like to thank our discussant, Kyle Herkenhoff, as well as Adrien Auclert, Timo Boppart, Daniel S. Hamermesh, John Hassler, Per Krusell, Laura Pilossoph, Simon Mongey, and seminar participants at Bonn ECONtribute Virtual Macro Workshop, Federal Reserve Board, IIES, Safegraph Consortium Seminar, and the Virtual Macro Seminar (VMACS) for stimulating discussions.  We thank the IZA for generous research support through the Coronavirus Emergency Research Thrust.  We thank SafeGraph for providing access to their data and the SafeGraph COVID-19 Data Consortium for helpful input. We thank Ray Sandza and Homebase for providing access to their payroll data. Mitman gratefully acknowledges support from the Ragnar S\"oderbergs stiftelse, and the European Research Council grant No. 759482 under the European Union's Horizon 2020 research and innovation programme. Philipp Hochmuth and Fabian Sinn provided excellent research assistance. David Sch\"onholzer collaborated on an earlier iteration of this project. We presented earlier versions of this research project under the titles ``Economic Activity and COVID-19 Transmission: Evidence from an Estimated Economic-Epidemiological Model,'' and "Economic Activity and COVID-19 Transmission."}}
\author{Mark Bognanni\thanks{Board of Governors of the Federal Reserve System. Email: mark.j.bognanni@gmail.com. The views stated herein are those of the authors and are not necessarily those of the Board of Governors of the Federal Reserve System.}  \and Doug Hanley \thanks{University of Pittsburgh. Email: doughanley@pitt.edu} \and Daniel Kolliner\thanks{University of Maryland. Email: d.kolliner@gmail.com} \and Kurt Mitman\thanks{CEMFI, IIES, Stockholm University, CEPR \& IZA. Email: kurt.MIT.man@gmail.com}}
\date{\today}
\begin{document}

\maketitle
\begin{abstract}
Economic analysis of effective policies for managing epidemics requires an integrated economic and epidemiological approach. We develop and estimate a spatial, micro-founded model of the joint evolution of economic variables and the spread of an epidemic. We empirically discipline the model using new U.S. county-level data on health, mobility, employment outcomes, and non-pharmaceutical interventions (NPIs) at a daily frequency. Absent policy or medical interventions, the model predicts an initial period of exponential growth in new cases, followed by a protracted period of roughly constant case levels and reduced economic activity.  Nevertheless, if vaccine development proved impossible, and suppression cannot entirely eradicate the disease, a utilitarian policymaker cannot improve significantly over the laissez-faire equilibrium by using lockdowns. Conversely, if a vaccine will arrive within two years, NPIs can improve upon the laissez-faire outcome by dramatically decreasing the number of infectious agents and keeping infections low until vaccine arrival. Mitigation measures that reduce viral transmission (e.g., mask-wearing) both reduce the virus's spread and increase economic activity.
\end{abstract}

\newpage

\renewcommand{\baselinestretch}{1.15}
\onehalfspacing
\setcounter{page}{1}
\section{Introduction}


The rapidly developing COVID-19 pandemic presents perhaps the most daunting challenge to economic policymakers since the Great Depression. By engaging in economic activity, individuals subject themselves to infection risk, while infected agents impose negative externalities on others by spreading the virus.
In the absence of a vaccine or cure, policymakers can combat the virus's spread only through ``non-pharmaceutical interventions'' (NPIs) (e.g., recommending social distancing, stay-at-home orders).
While NPIs may effectively slow the rate of new infections, they do so primarily by reducing economic activity.
Alternatively, allowing the virus to propagate unchecked may also entail economic costs, as rational agents will mitigate their exposure to life-threatening risks. 
Formulating a sound policy response to the pandemic thus requires quantifying the tradeoffs between health and economic activity, as well as accurate predictions of \emph{both} economic and epidemiological variables in response to alternative policies.
Managing an epidemic thus requires an integrated assessment model, combining economics and epidemiology and rigorously disciplined by data, analogous to the integrated climate-economic models for evaluating policies to combat climate change \citep{nordhaus1993rolling}.\footnote{We thank Per Krusell for suggesting this analogy.}

\begin{figure}
    \centering
    \includegraphics[width=\textwidth]{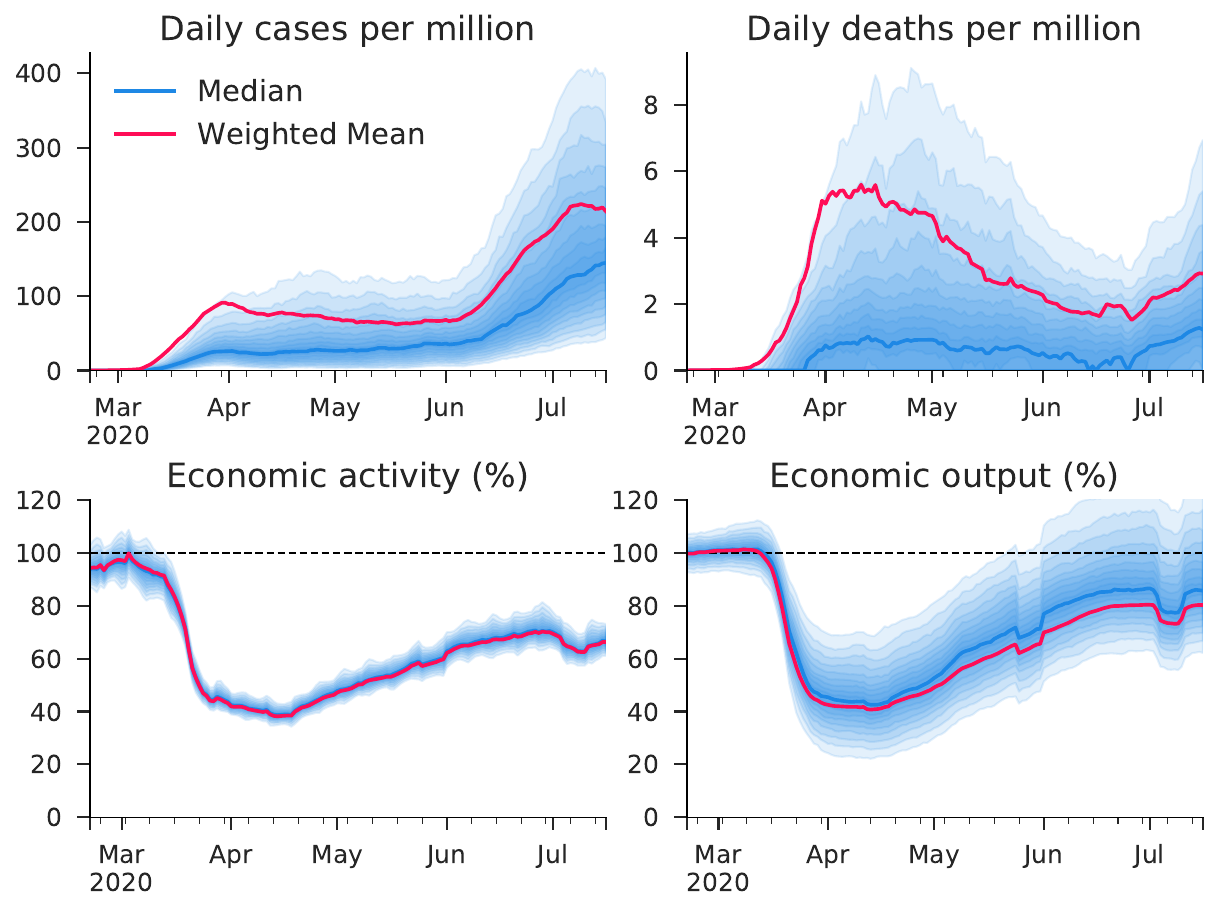}
    \caption{Clockwise from upper left: new reported \covid{ }cases, new reported \covid{} deaths, hours worked and economic activity. 
    In each panel, red lines are population-weighted means, the thick blue lines are the median county, and the light blue regions demarcate deciles of the  distribution across counties (all of these statistics computed pointwise).}
    \label{fig:data_cumul}
\end{figure}

This paper develops and estimates a parsimonious model of the joint determination of epidemiological and economic variables in an epidemic. 
We call our framework an ``economic SIR'' (Econ-SIR) model. 
The model features optimizing agents who endogenously respond to changes in both infection risk and NPIs. The model endogenously explains both the spread of the virus and the resulting impact on economy activity and output. The epidemiological block of the framework nests the standard SIR model \citep{Kermack1927} as the special case of a zero infection cost. 
When the infection cost is sufficiently high, the model endogenously gives rise to two phases of virus spread, with new infections initially growing exponentially before transitioning to a period of approximately linear growth. 
The virus's effective replication rate in the quasi-linear regime is just below one, leading to protracted dynamics of decreased economic activity and roughly constant spread of the virus. 

We are not the first to introduce optimizing behavior into models of epidemics. 
Seminal work by \cite{kremer1996integrating} introduced behavioral choice into models of HIV/AIDS. 
Over time, that strand of the economics literature progressed and became more quantitative (e.g., \citet{greenwood2019equilibrium,tox2019}). 
The focus of that literature, however, has mainly been on epidemiological and health outcomes---not economic ones \emph{per se}. In parallel, a strand of the epidemiological literature introduced the notion of ``behavioral change'' by agents in response to their environment and new information \citep{morin2013sir,fenichel2011adaptive}. See \cite{verelst2016behavioural} for an extensive review of this literature. The aforementioned papers focused on the decision of whether or not to engage in a particular activity with health consequences (e.g. sexual activity), while our framework focuses on decisions to participate in the marketplace and links those decisions to economic activity and output. 
Hence we would classify the earlier contributions as optimizing epidemiological models but not integrated-assessment models since they did not focus on explaining the trajectories of both the epidemic and the broader economy.

While other researchers have concurrently and independently formulated models similar to ours (e.g., \citet{farboodi2020internal,toxvaerd2020equilibrium,lones,brotherhood2020economic}), this paper is the first to formally bring such a model to the data to identify the extent to which the real world demands the Econ-SIR's divergence from pure epidemiological models.
We develop a novel estimation approach to fit the model to a panel of daily economic and epidemiological data for approximately 1000 U.S. counties, covering over 80 percent of the U.S. population.
Through revealed preference, we quantify both the disutility of contracting the disease and the economic consequences of mitigation policies---the two key trade-offs for policy analysis.
Our results highlight the importance of using an integrated assessment model for analyzing policy responses to epidemics, as the endogenous mitigation of private agents yields substantially different epidemiological dynamics than the standard SIR model.
While our quantitative results focus on the COVID-19 epidemic, the framework and estimation method have broader applicability for studying policy responses to future epidemics.


The estimated model closely matches the joint evolution of economic and epidemiological variables throughout the pandemic, both in the aggregate time series and lower-level units of analysis, such as U. S. states. 
In particular, the model can capture both the beginning of the pandemic and the spatial heterogeneity in the virus's ``summer resurgence.'' 
Looking beyond the data used directly for estimation, we find that our estimated model is consistent even with data on the seroprevalence of COVID-19 antibodies in numerous U.S. locations. 
At the estimated parameter values, individuals face a high infection cost, implying that the model's dynamics depart significantly from the standard SIR model. 
We find that a combination of endogenous mitigation by agents and the rollout of NPIs explains the decline in economic activity and output at the onset of the epidemic. 
These results are consistent with identified effects of ``stay-at-home'' orders on activity from empirical studies examining the rollout and reversal of NPIs at the county level \citep{goolsbee2020fear}.\footnote{See also earlier innovative empirical work on NPIs by \cite{adda2016economic}, evidence based on the Wuhan lockdown in China by \cite{fang2020human}, and recent cross-country evidence by \cite{mendiola2020}}. 
Furthermore, as in \cite{goolsbee2020fear}, we find that disentangling the endogenous mitigation of individuals from the effect of policy requires a county-level analysis (as opposed to state or national) as many counties enacted NPIs earlier than their corresponding states. 

Confident in our model's success in capturing the salient economic and epidemiological features of the epidemic, we proceed to use the framework as a laboratory to conduct counterfactual experiments. 
We characterize the implications of several policy scenarios for economic output and the virus's human toll. 
The first experiment examines the laissez-faire outcome of the model.
In particular, we simulate the model forward from August 1, 2020, assuming that all NPIs are reverted and no further NPIs are put into place. 
The model predicts that removing all restrictions on schools and restaurants would lead to an initial burst of economic activity and output, which would yield a jump in the effective replication rate and generate a surge in reported cases and a corresponding surge in daily deaths. 
Seeing the surge in infections, individuals then endogenously reduce economic activity, bringing the effective replication rate back below one by the early fall. 
Over the course of the first year of laissez-faire, the model predicts approximately 620,000 deaths from \covid{}, while output would remain more than 10\% below the pre-pandemic level.

Considering laissez-faire over the longer run, and still assuming the development of neither a vaccine nor a cure, the model reaches the ``slow burn'' regime sometime in 2021, where the behavior of the model is roughly linear on a day to day basis. 
In this phase, individuals reduce their economic activity until the effective reproductive factor ($\mathcal{R}_0$) is approximately 1. 
The pandemic then plays out over the next three to four years, with a roughly constant decline in daily deaths and cases and an increase in economic activity as the model approaches herd immunity. 
During this phase, \covid{} causes roughly 300,000 U.S. deaths per year, and economic output recovers roughly linearly from about 90 percent of pre-pandemic levels. 
These protracted dynamics imply a final toll of roughly 1.25 million \covid{} deaths over the next 4 years.

The PPF, while instructive about the tradeoffs a policymaker faces, is agnostic about how to value the tradeoff between output and lives. We proceed to compute the optimal adaptive lockdown policies (a Ramsey-like approach) of a utilitarian planner. The planner can only set the lockdown's strength as a function of the current number of infected individuals. Under the assumption that a cure or vaccine never arrives, we find that the optimal adaptive policy closely approximates the laissez-faire outcome. The intuition behind this result is that neither lockdown policies nor endogenous social distancing by agents can change the herd immunity threshold in the model (recall that when infections are close to zero the model dynamics converge to those of the basic SIR model). The only role for policy in this scenario is to minimize the ``overshoot'' of the herd immunity threshold to prevent ``unnecessary'' deaths and infections. The endogenous mitigation under laissez-faire generates very little overshoot at our estimated costs of infection, hence prescribing a small role for policy.\footnote{Thus far, we have assumed that our planner places the same value on life as do the economic agents. If the planner places a different value on life, e.g. from the notion of a statistical value of life, the optimal policy may be different.}

Given the stark predictions of the baseline model, we consider how the optimal adaptive policies change under two plausible extension. First, we rectify one of the unpleasant (and unrealistic) features of the standard SIR model---namely, we allow for the eradication of the virus.\footnote{In the standard SIR model with Poisson transition rates between states, the share of infected agents never reaches zero in finite time (absent a cure). As such, there will still be some measure of infected agents alive even after curtailing economic and social activity for an extended period (e.g., 100 years). After reversing the lock-down, the system will immediately revert to exponential growth. We allow for the share of infected individuals to fall to zero in finite time if the stock of infected individuals falls below a pre-specified threshold (e.g., if the percentage of infected implies fewer than one infected person in a county).} One need not take eradication literally, but we consider this thought experiment also to approximate an aggressive test, trace, and quarantine (TTQ) that can effectively control any community-based transmission of the virus. Second, we consider the possibility of the arrival of a vaccine.\footnote{We have also considered the arrival of a cure. The properties of the optimal adaptive policy under this alternative scenario are qualitatively the same as to the arrivals of a vaccine.}

With the possibility of eradicating the virus, a meaningful role for policy emerges. The endogenous social distancing by agents in the model is not strong enough to affect a quick reduction in infections to the eradication thresholds we consider.\footnote{Our eradication zone thresholds are motivated by the experiences of Taiwan and S. Korea at effectively either eliminating community transmission in the case of the former, or managing it with TTQ in the latter.} Now we find that a strict lockdown for two months followed by a graduate reopening over the course of a month can effectively bring infections down to the ``eradication zone'' whereby community transmission is eliminated or managed via TTQ. Note, given the spatial nature of the model, it is important that the lockdown be coordinated across locations, to prevent re-seeding via mobility. In this extension of the model, over the medium and long run, there is virtually no tradeoff between health and the economy. At one year (and longer horizons) one can maximize output (via temporary strict lockdowns followed by reopening) with minimal loss of life.

Finally, we consider the optimal adaptive policy when the arrival of a vaccine is near. We assume that a vaccine will arrive with certainty on August 1, 2021 and solve for the optimal adaptive policy until then. Under such a scenario, and given the current stock of infections, it is optimal to reduce activity to similar levels as under stay-at-home orders for two to four months. This significantly reduces the current stock of infections, after which the planner allows activity to increase until the effective replication rate is approximately 1. The planner achieves this by loosening NPIs to approximately the activity level estimated when restaurants are take-out only and schools are closed. NPIs are then rolled back once the vaccine arrives.

 Our paper relates to the mushrooming literature studying the epidemiological and economic response to the COVID-19 pandemic
(e.g. \cite{alvarez2020simple,farboodi2020internal,krueger2020macroeconomic,glover2020health,eichenbaum2020macroeconomics,guerrieri2020macroeconomic,berger2020seir,baker2020does,jones2020optimal,toxvaerd2020equilibrium,lones,brotherhood2020economic,mitman2020optimal,hagedorn2020corona,hur2020distributional,lubik2020go,kaplan2020pandemics}).\footnote{We apologize for any papers we have missed. Due to the rapidly changing landscape, this likely is not an exhaustive literature list.} 
Given the real-time and rapidly changing nature of those papers 
we think that instead of providing a discussion of each it is more instructive and expedient to summarize where we have innovated relative to that body of work and what our main contribution is.

Another strand of papers has focused on forecasting purely the epidemiological dimension of the pandemic's evolution in a panel of locations.  See \cite{schorfmoon} and \cite{thejesus}.

Relative to standard epidemiological models, we have added economic behavior --- what is sometimes called a behavioral SIR model.\footnote{We have opted for the term ``economic SIR'' model, as our agents are fully rational, optimizing agents.} 
We have estimated the model at a granular level, using data on individual movements based on GPS and data on economic outcomes---namely, employment---to discipline the economic-epidemiological connection.

\paragraph{}The rest of the paper is organized as follows. 
Section \ref{sec:econsir} presents the most parsimonious version of our \econsir{} model.
Section \ref{sec:twophases} focuses on the two distinct phases of virus spread to which the \econsir{} model gives rise.
Section \ref{sec:fullblownmodel} extends our \econsir{} model to facilitate a more rigorous quantification of the \covid{} pandemic.
Section \ref{sec:estimation} describes our estimation strategy and documents the estimated model's fit. 
Section \ref{sec:use_estimated_model} uses the estimated model to conduct and compare a variety of policy experiments. 
Section \ref{sec:conclusion} concludes. 



\section{A Minimal Economic-SIR Model}
\label{sec:econsir}

We begin by describing the simplest version of our economic SIR model to focus on the principal mechanism by which economic behavior and virus transmission interact. 
In Section \ref{sec:fullblownmodel}, we enrich the model with additional features particularly relevant to the \covid{} pandemic.

On the economic side of the model, susceptible agents can choose whether or not to participate in the economic marketplace, facing the potential risk of being exposed to infected individuals. 
They optimally trade off the gains from market participation, with the cost of potentially becoming infected with the virus.
The more widespread the infection is in the population, the higher the risk that market participants will become infected.
Further, as more infected agents participate in the marketplace, more individuals become infected.  
The model thus features bi-directional interaction between economic activity and viral transmission.

We set our models in continuous time $t\in[0,\infty]$, and adopt the following notational conventions: 
constants and parameters are denoted with lowercase Greek letters, e.g., \(\infratexc\);
stock variables, such as the measure of agents in a given  epidemiological state, are denoted with capital English letters, e.g., \(S\td\);
flow-variable counterparts to each stock variable use the same capital letter but add a ``\(\cdot\)'' over the character, e.g., \(\dot{S}\); 
other endogenous variables are denoted with lowercase English letters, e.g., \(n\td\).

\subsection{Epidemiological Structure}

\paragraph{Epidemiological States.} The model is populated with a time-invariant, unit measure of individuals.
An individual can be in one of three epidemiological states:  
Susceptible (\(S\)), Infected (\(I\)), or Recovered (\(R\)). 
Susceptible agents can contract the virus, but have not yet done so.
Infected agents have the virus and can spread it.
Recovered individuals have previously had the virus, can no longer spread it, and possess immunity from reinfection.
At a given time \(t\), $S\td,$ $I\td,$ and $R\td$ denote the measures of individuals in each epidemiological state.

\paragraph{Epidemiological state transitions.} Given initial values \(S_0\), \(I_0\), and \(R_0\), the measures of agents in each state evolve according to the following system of equations,
\begin{align}
    \dot{S} &= - n\td S\td 
    \label{eq:delta_S_sir} \\
        \dot{I} &= n\td S\td
    - \delta I\td 
    \label{eq:delta_I_sir} \\
        \dot{R} &= \delta I\td,
    \label{eq:delta_R_sir}
\end{align}  
in which Susceptibles contract the virus at rate \(n\td\) and Infecteds recover from the virus at rate $\delta$.
The number of new infections is given by 
\(N\td = n\td S\td\).
If we declared \(n\td\) equal to \(\infratexc I\td\) and stopped here,
then we would have the most standard SIR model.
We next describe the economic environment that endogenizes the rate of infection, \(n\td\).




\subsection{Economics and the Virus}
\label{eq:econvirus}

We now introduce the novel economic dimension of the model.
Agents face economic needs, which we model as a random variable with realizations denoted 
\(z \in (0, \infty)\).
Satisfying the economic need and receiving benefit equal to \(z\) requires making an excursion, while forgoing the excursion yields a benefit of only 0.
For a Susceptible, an excursion exposes the agent to infection risk.
For an Infected, an excursion exposes Susceptibles to increased infection risk.  
All else equal, we assume that agents prefer not to contract the virus and hence incur a cost \(\cost\) from infection.\footnote{As we discuss in \ref{sec:fullblownmodel}, this cost captures all disutility associated with infection.}
To capture the fact that not all economic needs are equally important, we model \(z\) as drawn from a distribution \(F(z)\).

Pausing at this level of generality, one can already see that the virus's economic implications will differ according to an agent's infection status.
\(I\) and \(R\) agents face no infection risk, so they may undertake excursions without fear of infection to satisfy any \(z\) (though we will assume that infected agents partially self-quarantine).
Susceptibles, however, face real infection risk and thus a tradeoff between the benefits of acquiring \(z\) with the potential cost of infection.
This tradeoff makes Susceptibles the model's key economic decision-makers.

\paragraph{To \texorpdfstring{$\mathbf{z}$}{z}, or Not To \texorpdfstring{$\mathbf{z}$}{z}?}  Conditional on taking an excursion, we model the rate of infection for Susceptibles as proportional to the economically active infected population share: 
\begin{align}
    \infrisk\td \equiv 
    \infratexc I\td a_I
    \label{eq:infrisk_def}
\end{align}
where 
\(\infratexc\) is a time-invariant rate of virus transmission per unit of Infected
and 
\(a_I\) is the rate at which infected agents take excursions.  
We refer to \(\infrisk\td\) as the ``infection risk'' (to an excursing Susceptible).

In light of their disutility from infection, Susceptibles optimally choose to satisfy a given economic need only if the benefit exceeds the expected cost of taking an excursion,
\begin{align}
    z\td 
    >
    \!\!\!
    \underbrace{
        \infrisk\td
        \cost \vphantom{\big(}
    }_{
        \substack{
            \text{expected cost}\\
            \text{of excursion}
        }
    }
    \!\!\!
    \equiv \bar{z}\td.
    \label{eq:to_z_or_not_to_z_sir}
\end{align}
\(\bar{z}\td\) represents the reservation value of
\(z\td\) for Susceptibles to excurse, which varies endogenously over time through its dependence on \(I\td\).

\paragraph{Economic Activity's Endogenous Response to Infection Risk.} Let \(a_{S}\td\) denote the rate of economic activity by a Susceptible--the rate at which the agent undertakes excursion.   
The optimizing decision rule in \eqref{eq:to_z_or_not_to_z_sir} implies that
\(a_{S}\td\) is given by the rate at which Susceptibles face economic needs \(z\td\) greater than \(\bar{z}\td\).
Denoting the complementary cumulative distribution function of \(z\) as \(\Fcomp(z) = 1 - F(z)\), the activity rate of Susceptibles is given by
\begin{align}
    a_{S}\td 
    = 
    1 - F\big( \bar{z}\td \big) 
    = 
    \Fcomp \big( \bar{z}\td \big)
    = 
    \Fcomp \big( \infrisk\td \cost \big).
    \label{eq:activeshare_s_sir}
\end{align}
With \(\bar{z}\td\) as given in  \eqref{eq:to_z_or_not_to_z_sir}, equation \eqref{eq:activeshare_s_sir} reveals the the key channel by which the infectiousness of the environment effects the economic activity of Susceptibles:
as the virus becomes more widespread (\(I\td\) increases), infection risk rises 
(\(
    \infrisk\td = 
    \infratexc I\td a_I
\) increases), 
which causes Susceptibles to endogenously curtail their economic activity  
(as long as \(\cost > 0\), \(\bar{z}\td\) increases with \(\infrisk\td\) and \(\Fcomp(\bar{z}\td)\) falls).
Noting that in the early stages of an epidemic the vast majority of agents belong to \(S\td\), this is the primary channel by which the virus affects aggregate activity.
Thus, the model contains a channel by which the virus can suppress economic activity even in the absence of outside interventions.  

\paragraph{Virus Transmission Rate's Endogenous Response to Economic Activity.}
The rate of infection for Susceptibles, $n\td$, is given by the product of their infection risk per excursion with the rate at which they take excursions.
Using \eqref{eq:infrisk_def} and \eqref{eq:activeshare_s_sir}, the infection rate is thus
\begin{align}
    n\td 
    = 
    \infrisk\td a_{S}\td 
    = 
    \infrisk\td 
    \Fcomp \big( \infrisk\td \cost \big)
    \label{eq:endog_trans_sir}
\end{align}
The term \(\Fcomp( \infrisk\td \cost)\), derived from the optimizing economic behavior of Susceptibles, introduces a nonlinear relationship between \(I\td\) and the new infection rate.
Using \eqref{eq:endog_trans_sir}, the flow rates \(\dot{S}\) and \(\dot{I}\) in \econsir{} (see equations \eqref{eq:delta_S_sir} and \eqref{eq:delta_I_sir}) 
become 
\begin{align}
    \dot{S} 
        &= 
        - \infrisk\td 
        \Fcomp \big( \infrisk\td \cost \big)
        S\td
    ~~,
    \qquad
    \dot{I} 
    = 
    \infrisk\td 
    \Fcomp \big( \infrisk\td \cost \big)
    S\td 
    - 
    \delta I\td
    \label{eq:flow_SI_sir}
\end{align}
The term 
\(
    N\td 
    = n\td S\td 
    = 
    \infrisk\td 
    \Fcomp \big( \infrisk\td \cost \big)
    S\td
\) 
is the number of new infections.  
Equations \eqref{eq:endog_trans_sir} and \eqref{eq:flow_SI_sir} characterize the key channel by which the
economic activity of Susceptibles affects the infectiousness of the environment:
as Susceptibles reduce their economic activity (\(a_S\td\) falls),
new infections decrease (\(n\td\) falls), 
which causes the number of infectious agents to fall
(\(\dot{I}\) falls so \(I\td\)).

\paragraph{Discussion.} From equations \eqref{eq:activeshare_s_sir} and \eqref{eq:flow_SI_sir}, one can see that not only does virus prevalence have implications for economic activity, but economic activity has implications for virus prevalence.
Lower economic activity by Susceptibles reduces the infection rate, which in turn reduces the number of new infections and the size of the Infected pool.  
When \(I\td\) increases, the endogenous reduction of activity by Susceptibles will tend to mitigate the rate of new infections.



\begin{figure}[t!]
    \centering
    \resizebox{1.0\textwidth}{!}{
    \input{diagrams/sir_vs_econsir_big_tikzcode}
    }
    \caption{Structure of SIR and Economic SIR Models}
    \label{fig:sir_figure}
\end{figure}

\subsection{Policy: Nonpharmaceutical Interventions in \econsir{}} 
\label{sec:sirpolicy}

The primary policy lever exercised in response to \covid{} has been the imposition of NPIs, such as lockdown orders.
Particularly in the U.S., such orders have typically restricted the economic transactions that may take place (e.g. grocery shopping may be allowed out of necessity while tattoo parlors are required to close). 
We model such NPIs as a minimum \(\zreslaw\td\) that all agents internalize when deciding whether or not to undertake an excursion.
Hence the \emph{effective} reservation value of \(z\) for agents of type \(\tau\) (\(\in \{S, I, R\}\)) becomes
\begin{align}
    \bar{z}_{\tau}^{*}\td 
    &= 
    \max \big\{ 
    \bar{z}_{\tau}\td\,, \zreslaw\td 
    \big\}.
    \label{eq:zstar}
\end{align}
And hence activity by agents of type \(\tau\) becomes 
\begin{align}
    a_{\tau}\td 
    =
    \Fcomp \big( \bar{z}_\tau^*\td \big).
\end{align}

A few important aspects of how such a policy works are worth pointing out. 
First, for \(S\) agents, the endogenous \(\bar{z}_S\td\) forms an effective lower bound for \(\bar{z}^*\): if a policymaker chooses a \(\zreslaw\td\) below agents' endogenous \(\bar{z}\td\), then the policy is inoperative and has no effect on agents of type \(\tau\).  
Policy options are thus notably asymmetric. 
Stated in real-world terms, policymakers can force people to stay home, but they cannot force people to go out.

Second, shutdowns are a blunt tool.
Sufficiently restrictive shutdowns will reduce the activity of \emph{all} agents.
While such measures can mitigate or suppress the virus in the model, they are clearly a brute force way of doing so in light of the fact that, in principle, new infections could be driven to zero by driving just \(a_I\) to zero. 

\subsection{Economic Activity and Output}

We will later treat economic activity and output as observables when estimating the model.
However, in the data we cannot separately observe the activity of agents by epidemiological state, so the  observables will correspond to the model's aggregate values.  
To fix ideas, it is helpful to first define these aggregate concepts within the basic \econsir{} model.

Aggregate activity, denoted \(a\td\), is the measure-weighted mean of activity rates by each group
\begin{align}
    a\td 
    &= 
    a_{S}\td S\td + 
    a_{I}\td I\td + 
    a_{R}\td R\td 
    \label{eq:aggact_sir_constructive}
\end{align}
In the absence of NPIs, \(a_R\td = 1\) and 
\(a_I\td = a_I = \Fcomp(\bar{z}_I)\) 
and one can write \(a\td\) as\footnote{The following expression comes from substituting for \(a_{S}\td\) and using the fact that
\(S\td + I\td + R\td = 1\).}
\begin{align}
    a\td 
    &= 
    1 -
    \!\!\!\!
    \underbrace{F\big(\bar{z}_S\td\big)}_{
        \substack{
            \text{rate of forgone} \\ 
            \text{excursions by \(S\)}
        }
    }
    \!\!\!\!
    S\td
    -
    \!\!\!\!\!\!\!
    \underbrace{F(\bar{z}_I)}_{
        \substack{
            \text{rate of forgone} \\ 
            \text{excursions by \(I\)}
        }
    } 
    \!\!\!\!\!\!\!
    I\td
    \label{eq:total_activity_rate_sir}
\end{align}
When \(I\td\) equals zero, the virus is behaviorally irrelevant and activity occurs at its full rate of 1. 

Output is the rate at which value is accumulated from activity. 
The model differentiates between activity and output because whenever agents restrain their economic activity, they optimally sacrifice their least valuable activities first (forgo the lowest values of \(z\), recall \eqref{eq:to_z_or_not_to_z_sir}).
The optimizing behavior leads to a potentially non-linear relationship between activity and output, with the exact relationship determined by the form of \(F\).

Letting \(f(z) = F'(z)\) and 
\(\partialexpec(\bar{z})\) be the partial expectation of \(z\), defined as
\begin{align}
    \partialexpec(\bar{z}) 
    &=
    \int_{\bar{z}}^{\infty}
        z f(z) d z
    = 
    \mathbb{E}[z \,|\, z > \bar{z}] \Fcomp(\bar{z})
\end{align}
the output rate is given by
\begin{align}
    \ecoutput\td 
    &=
    \partialexpec\big( \bar{z}_S\td \big)
    S\td
    + 
    \partialexpec\big( \bar{z}_I\td \big)
    I\td
    + 
    \partialexpec\big( \bar{z}_R\td \big)
    R\td
    \label{eq:output_sir}
\end{align}
Equation \eqref{eq:output_sir} makes clear the differing output rates that result from the differing behavior between agents who are concerned with infection and agents who are not.
In our estimated model, we use \(F(z)\) such that $\log(z)$ is normally distributed with mean $-\fr{\sigma^2}{2}$ and variance $\sigma^2$, in which case $\mathbb{E}[z]=1$. 
Given this assumption, the variance parameter $\sigma^2$ summarizes the relationship between output and activity, as shown in \autoref{fig:actout}.

\begin{figure}
    \centering
    \includegraphics[width=0.5\textwidth]{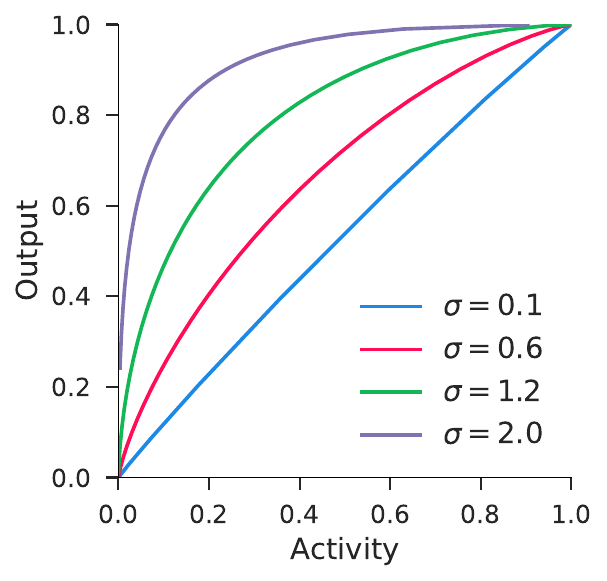}
    \caption{Relationship between economic output and activity given various values of $\sigma$, the standard deviation of $\log(z)$.}
    \label{fig:actout}
\end{figure}

\section{The Spread of the Virus in \econsir{}:  Exponential Growth and ``The Slow Burn''}
\label{sec:twophases}

In this section we describe the dynamics of the Econ-SIR model, in particular the nature of the two phases of virus spread to which the model endogenously gives rise: an initial phase of exponential growth following by a phase of approximately linear growth.
When $\cost=0$ Econ-SIR collapses to the standard SIR model whose dynamics are well-known from the epidemiological literature. 
As we describe below, increasing the cost of infection $\cost$ will lead to a more muted region of exponential growth and a more protracted phase of approximately linear growth.

\subsection{Phase 1: Exponential Growth of Infections}

The baseline rate at which an Infected agent in the model reproduces the infection in the broader population, which we call the baseline reproductive rate \(\mathcal{R}_0\), is given by
\begin{align}
    \mathcal{R}_0 = \fr{\beta}{\delta}.
\end{align}
Namely, \(\mathcal{R}_0\) is the ratio of the rate at which Infecteds transmit the virus per unit of Susceptibles to the rate at which the Infecteds recover and stop transmitting the virus altogether.
The baseline reproductive rate will obtain when \(S\td\) is near 1.

Taking into account the transmission mitigation from reduced activity by Susceptibles, the \emph{effective reproductive rate} in \econsir{} is
\begin{align}
    \mathcal{R}_e = a_S\td \mathcal{R}_0 \,.
\end{align}
Using \eqref{eq:delta_I_sir} and \eqref{eq:endog_trans_sir}, the growth rate of infections in the model is given by
\begin{align}
    \frac{\dot{I}\td}{I\td} 
    = 
     \frac{
        \infratexc I\td a_{S}\td 
        S\td
        - \delta I\td
    }{I\td}
    =
    \infratexc a_{S}\td S\td - \delta
    \label{eq:I_growthrate}
\end{align}
For very small values of $I\td$ or $\cost$, the activity rate of Susceptibles will be $a_{S}\td \approx 1$. Thus if initial infections $I_0$ are very small, to first-order dynamics of SIR and Econ-SIR are equivalent.
Further, if $S\td\approx 1$, Equation \eqref{eq:I_growthrate} then implies exponential growth of the infected population share according to rate
\begin{align}
    \frac{\dot{I}\td}{I\td} 
    = \beta - \delta.
\end{align}
which is positive whenever $\mathcal{R}_0 > 1$.

\subsection{Phase 2: The Slow Burn} 

As the virus spreads and \(I\td\) continues to increase, $a_{S}\td$ endogenously declines and plays a more important role in mitigating the the pace of new infections.
Here we develop intuition for the second phase of virus spread in which $I\td$ is approximately constant, i.e. \(\dot{I}\td \approx 0\) .

Recall that \(\dot{I}\td\) is  given by
\begin{align}
    \dot{I}\td
    =
    \infratexc I\td a_{S}\td 
        S\td
        - \delta I\td ,
\end{align}
it follows that \(\dot{I}\td = 0\) when
\begin{align}
    \infratexc a_{S}\td S\td
    = \delta .
    \label{eq:I_nochange}
\end{align}
Considering the case where the infected share has been low enough that \(R\td\) is still negligible so that 
\(
    S\td = 1 - I\td - R\td 
    \approx 1 - I\td
\),
substituting for \(a_{S}\td\), and dividing through by \(\infratexc\),
one can write the condition in \eqref{eq:I_nochange} as
\begin{align}
    \br{1-F(\beta I\td \cost)} 
    (1-I\td)
    = \frac{\delta}{\infratexc}
    = \fr{1}{\mathcal{R}_0}.
\end{align}
Recalling that 
\(\bar{z}\td = \infratexc I\td \cost\), one can express the previous relationship in terms of $\bar{z}\td$ as 
\begin{align}
    \br{1-F(\bar{z})} \pr{1-\fr{\bar{z}\td}{\beta \cost}} 
    = 
    \fr{1}{R_0}
    \label{eq:dougramblings1}
\end{align}
For large $\beta \cost$, equation \eqref{eq:dougramblings1} has the approximate solution for the endogenous object \(\bar{z}\td\) of 
\begin{align}
    \bar{z}\td \approx F^{-1}\pr{1-\fr{1}{\mathcal{R}_0}} .
    \label{eq:zbar_approx_slowburn}
\end{align}
Again using \(\bar{z}\td = \infratexc I\td \cost\), equation \eqref{eq:dougramblings1} implies that the pseudo-steady-state entails a constant fraction of agents in the infected state given by 
\begin{align}
    I\td = 
    \fr{1}{\beta \cost} \cdot F^{-1}\big(1 - \mathcal{R}_0^{-1} \big)
\end{align}
Since new infections \(N\td\) must equal outflows, \(\delta I\td\), the constant levels of new infections are given by 
\begin{align}
    N\td 
    = \fr{\delta}{\infratexc \cost} \cdot 
    F^{-1}\big(1 - \mathcal{R}_0^{-1} \big)
    =
    \fr{1}{\cost \mathcal{R}_0} \cdot 
    F^{-1}\big(1 - \mathcal{R}_0^{-1} \big) .
\end{align}
The expression for \(\bar{z}\td\) in \eqref{eq:zbar_approx_slowburn} immediately implies that the activity rate of Susceptibles satisfies
\(
    a_{S}\td 
    = 
    1 - F\left( F^{-1}( 1 - \mathcal{R}_0^{-1} ) \right)
    =
    \mathcal{R}_0^{-1}
\)
and hence 
\begin{align}
    \mathcal{R}_e = a_{S} \td \mathcal{R}_0 = 1
    \label{eq:slowburn_reproduction}
\end{align}

Thus, for sufficiently large $\beta\psi$, and while $R\td$ remains relatively small, we observe an effective reproduction rate of approximately 1. 
Absent some other intervention, this ``slow burn'' regime will last until $R\td$ becomes appreciably large, meaning ``herd immunity'' dynamics begin to have an impact.



\subsection{Comparing SIR and \econsir{} Epidemiological Dynamics} 

Here we highlight the key differences in the behavior of the SIR and \econsir{} models.
To understand the dynamics of both models it is instructive to understand the the curve in ($S,I$) space where $\dot{I}\td=0$, known as the nullcline of $I\td$.
Recall that \(\dot{I}\td\) is  given by
\begin{align}
    \dot{I}\td
    =
    \infratexc I\td a_{S}\td 
        S\td
        - \delta I\td ,
\end{align}
it follows that the nullcline is characterized by
\begin{align}
    \infratexc a_{S}\td S\td
    = \delta .
\end{align}
In the SIR model, since $a_S\td=1~\forall t$, the nullcline of $I\td$ is independent of $I\td$ and in given by a vertical line at $S=1/\mathcal{R}_0$, which is the herd immunity threshold (HIT).

\begin{proposition}
When $\cost>0$ and $F(0)=0$, the nullcline of $I\td$ is characterized by an upward sloping curve in $(S,I)$ space, denoted $I^N(S)$, that originates at $(1/\mathcal{R}_0,0)$. Furthermore, the slope of the nullcline is decreasing in $\psi$.
\end{proposition}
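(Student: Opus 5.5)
We write the nullcline condition explicitly and treat it as an implicit equation in \((S,I)\). With \(a_I\) held fixed (no NPIs), \(a_S = \Fcomp(\infratexc a_I I \cost)\). Dividing \(\dot{I}=0\) by \(\delta I\) for \(I>0\), and then taking the limit \(I\to 0\), the nullcline is the zero set of
\begin{align}
G(S,I;\cost) \equiv \mathcal{R}_0\, S\, \Fcomp\big(\infratexc a_I I \cost\big) - 1 .
\end{align}
This gives \(I^N(S)\) implicitly. Because \(\Fcomp\) is strictly decreasing on the support, we can also solve it explicitly:
\begin{align}
I^N(S) = \fr{1}{\infratexc a_I \cost}\, F^{-1}\!\left(1-\fr{1}{\mathcal{R}_0 S}\right),
\qquad S \in [1/\mathcal{R}_0, 1].
\end{align}
This uses \(\cost>0\), so the division is legitimate.

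The first step is to locate the origin of the curve. At \(I=0\), the assumption \(F(0)=0\) gives \(\Fcomp(0)=1\), so \(G=0\) forces \(S=1/\mathcal{R}_0\). Equivalently, \(F^{-1}(0)=0\) in the explicit formula, so the curve passes through \((1/\mathcal{R}_0,0)\). This is exactly where \(F(0)=0\) is needed; without it the intercept would be shifted.

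The second step is monotonicity. The map \(S\mapsto 1-1/(\mathcal{R}_0 S)\) is strictly increasing, and \(F^{-1}\) is increasing. Hence \(I^N\) is increasing in \(S\), i.e. upward sloping. Differentiating the explicit formula gives
\begin{align}
\fr{dI^N}{dS} = \fr{1}{\infratexc a_I \cost}\cdot \fr{1}{f\big(F^{-1}(1-1/(\mathcal{R}_0 S))\big)}\cdot \fr{1}{\mathcal{R}_0 S^2} > 0 .
\end{align}
The same sign follows from the implicit function theorem, \(dI/dS = -G_S/G_I\), since \(G_S = \mathcal{R}_0 \Fcomp>0\) and \(G_I = -\mathcal{R}_0 S f(\cdot)\infratexc a_I\cost<0\).

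The third step is the comparative static in \(\cost\). The only place \(\cost\) enters the slope expression is the multiplicative factor \(1/\cost\); the argument of \(f\) is independent of \(\cost\) once it is written in terms of \(S\). So at any fixed \(S\) the slope is strictly decreasing in \(\cost\), and it diverges as \(\cost\to 0\). That limit matches the vertical SIR nullcline at \(S=1/\mathcal{R}_0\).

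The main obstacle is regularity rather than algebra. We need \(F\) continuous and strictly increasing with a density \(f>0\) on \((0,\infty)\), so that \(F^{-1}\) is well defined and differentiable and \(f\) never vanishes. These conditions hold for the lognormal \(F\) used in the paper, and I would state them as standing assumptions.

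A smaller point is the meaning of "slope decreasing in \(\psi\)". I would interpret it as a comparison at fixed \(S\), as above. An alternative reading compares slopes at a fixed \(\bar{z}\), i.e. along \(I\cost\) constant; this gives the same \(1/\cost\) scaling, since \(f\) is evaluated at the same point in both cases.
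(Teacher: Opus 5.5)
Your proposal is correct and follows the paper's proof: you invert the nullcline condition to $I^N(S)=\frac{1}{\beta\psi}F^{-1}\bigl(1-\frac{1}{\mathcal{R}_0 S}\bigr)$, use $F^{-1}(0)=0$ for the intercept, differentiate to get a positive slope, and read off the $1/\psi$ factor. Your derivative, with $\mathcal{R}_0 S^2$ in the denominator, is the correct one; the paper's display puts it in the numerator, which does not change the sign.

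One bookkeeping point: because $n=\beta a_I I\,a_S$, keeping $a_I$ inside $F$ also forces a factor $a_I$ in front, i.e. $G=\mathcal{R}_0 a_I S\,F^{c}(\beta a_I I\psi)-1$. That would move the intercept to $1/(a_I\mathcal{R}_0)$, so the statement as written presumes $a_I=1$ (or $\beta$ net of $a_I$), as the paper's nullcline derivation implicitly does.
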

\begin{proof}
Substituting \eqref{eq:activeshare_s_sir} into \eqref{eq:I_nochange} and rearranging, the nullcline is characterized by
\begin{equation*}
     I^N  = \frac{1}{\infratexc\cost}F^{-1}\left(1-\frac{1}{\mathcal{R}_0S}\right)
\end{equation*}
Taking the limit as $S\rightarrow 1/\mathcal{R}_0\Rightarrow I^N\rightarrow 0$. Differentiating $I^N$ with respect to $S$ yields:
\begin{equation*}
    \frac{\partial I^N}{\partial S}  
    = 
    \frac{1}{\infratexc\psi}
    \frac{\mathcal{R}_0 S^2}{
        f\left(F^{-1}\left(1-\frac{1}{\mathcal{R}_0S}\right)\right)
    } 
    > 0
\end{equation*}
where the positivity follows since all parameters are positive and the pdf $f$ is non-negative. 
The relationship is also decreasing in $\psi$.
\end{proof}

Assuming initially that no agents have immunity to the virus $R_0=0$, if the initial seeding of the $I_0$ is such that $I_0<I^N(1-I_0)$, then the nullcline will describe the point that is the ``peak'' of the epidemic in terms of the stock of the number of infected agents and the effective replication rate is equal to unity, $\mathcal{R}_e =1$. The differences between SIR and \econsir{} are characterized graphically using the phase diagrams shown in \autoref{fig:phase}.

\begin{figure}
    \centering
    \includegraphics[width=\textwidth]{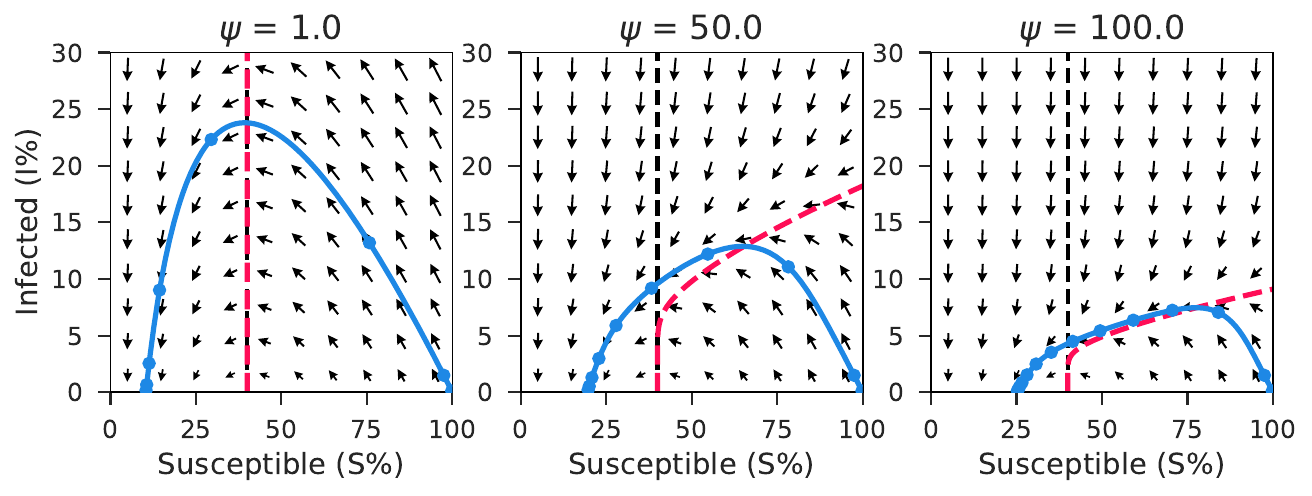}
    \caption{Phase diagrams for \econsir{} models under alternative infection costs $\psi$. 
    Blue lines represent sample time paths, dots along the time paths are spaced at one week intervals, black vertical dashed lines represent the herd immunity threshold, red dashed lines are nullclines of $I(t)$.}
    \label{fig:phase}
\end{figure}


In this example, the parameters are chosen so that \(\delta = \beta/2.5\), which yields \(R_0 = 2.5\).
In the SIR model, \(I\td\) then increases until   \(S\td\) falls below 0.4, which is the value at which \(\dot{I}\) turns negative.\footnote{Reduction of \(I\td\) mass in SIR does not occur until hitting ``herd immunity.''} Reduction of \(I\td\) mass in \econsir{} occurs well before reaching herd immunity, as agents try to avoid infection (as is visible from the bending of the nullcline from a verticle line). The strength of the mitigation increases with the cost of infection $\cost$. This results in both slower model dynamics --- in SIR herd immunity is reached in 4-5 weeks, whereas in the rightmost example it takes 8-9 weeks. Further, the stronger mitigation results in smaller overshoot of the herd immunity threshold. Summing up, \econsir{} generates more protracted dynamics and less overshoot relative to the SIR model. The extent of overshoot will become important later as this is the main externality that a policy maker is able to correct with lockdown policy.


\section{The Spatial Econ-\sirextend{} Model}
\label{sec:fullblownmodel}

This section extends our \econsir{} model along multiple empirically-relevant dimensions towards rigorously quantifying the policy tradeoffs posed by the \covid{} pandemic.
We leave some of the mathematical specifics to Appendix \ref{sec:saird_model_details}.

First, we add two features to the epidemiological environment that are particularly relevant for modeling \covid{}: asymptomatic virus carriers and mortality risk.
It has been well documented that many \covid{} carriers show few or no symptoms, sometimes only for a few days prior to becoming more seriously ill and sometimes agents recover with no increase in severity.  
As we now know, \covid{} poses mortality risk uncommonly high for a virus so easily transmitted.
Accordingly, when agents initially contract the virus in the extended model we assume that they move to an asymptomatic state (\(A\)).
Asymptomatic agents in the model unknowingly transmit the virus.
We model asymptomatic agents as typically unaware of their infectiousness, and hence they behave similarly to Susceptibles.\footnote{We model this as an activity rate \(a_{A}\td\) that is a trailing moving average of the activity rate of Susceptibles.} We should note that we keep $\cost$ to denote the cost of being infected (as opposed to the mortality cost) to capture the fact that even non-fatal cases can entail significant medical costs to individuals \citep{kniesner2020forgotten}.
Asymptomatic agents recover directly from the \(A\) state at rate \(\gamma\) or become symptomatic and infectious, moving to \(I\) at rate \(\lambda\).
Infectious agents die (move to the \(D\) state) at rate \(\kappa\).
The other epidemiological states have the same interpretation as before.
Individuals in the \(I\) state, being generally symptomatic, we model as having some awareness of their status and choosing to partially self-quarantine by undertaking actions no less valuable than some level $\bar{z}_{I}$.
Activity among \(I\) agents is then given by 
$a_I = \Fcomp(\bar{z}_{I})$.
We refer to the epidemiological state space of the model as \sirextend{}.

A second important feature of the real world is the presence of many separate locations, each with a distinct outbreak severity and virus transmission environment.
Furthermore, real world agents travel across locations, potentially spreading the virus from one region to another.
We add a spatial dimension to the model consisting of \(J\) distinct locations, each with its own measures of agents in the five epidemiological states (denoted with subscripts, as in \(S_j\td\)) and its own values of 
of \(\infratexc_j\) and \(\cost_j\).
All other parameters are the same in all locations.
Denote the rate of economic activity by infectious agents in location \(j\) as
\begin{align}
    e_j\td 
    = 
    a_{I,j}\td I_j\td + 
    a_{A,j}\td A_j\td .
\end{align}
If location \(j\) were an island, then the infection risk to an excursing Susceptible would be 
\(\infrisk_j\td = \infratexc_j e_j\td\). 
However, allowing for movement across locations, we model the infection risk per \(S_j\td\) excursion as a weighted average of the local force of infection and the population-weighted national average force of infection,
\begin{align}
    \infrisk_j\td 
    = 
    (1-\alpha) \infratexc_{j} e_{j}\td
    + 
    \alpha \sum_{k} w_k \infratexc_{k} e_{k}\td ,
    \label{eq:spatial_activity}
\end{align}
where \(\alpha \in [0,1]\) and \(w_k\) is a weight proportional to location \(j\)'s relative population. 
We also allow for an information friction in which agents perceive their infection risk to be determined according to the form of \eqref{eq:spatial_activity}, but with a value \(\alpha^*\) that may differ from the true \(\alpha\).
Denote an \(S_j\) agent's \emph{perceived} infection risk as  
\(\infrisk_j^{*}\td\).

The economic decision agents face, ``to \(z\), or not to \(z\),'' is similar to before.
Susceptibles can take an excursion with (perceived) expected benefit of 
\(
    z\td - 
    \infrisk_j^*\td
    \cost_j ,
\)
or not.
However, we enrich the environment to allow for a ``no-contact consumption'' option, by which agents can forgo an excursion and still obtain some of the value from a given \(z\), namely \(d \cdot z\td\), where \(d \in [0,1]\).
This model mechanism corresponds to the availability of delivery or takeout services for a variety of goods.
With the delivery option, Susceptibles optimally take an excursion only if
\(z_j\td > \infrisk_j\td \cost_j / (1-d) = \bar{z}_j\td\).

We allow for the economic environment to evolve over time with trends in both the value of the delivery option (\(d\td\)) and the safety of activity 
(\(\infratexc\td\)).
For \(\infratexc\td\), we represent the evolution via a common factor \(\tilde{\infratexc}\td\)
that proportionally scales the values of \(\infratexc_j\) everywhere.
We model the trends as deterministic processes
\begin{align}
    d\td
    &=
    \bar{d} \cdot \smoothstep( \lambda_{d} t )
    \qquad 
    \text{and}
    \qquad 
    \tilde{\infratexc}\td
    =
    1 + \big( \bar{\infratexc} - 1 \big) 
    \smoothstep ( \lambda_{\infratexc} t )
    \label{eq:adaptation_decay}
\end{align} 
where \(\bar{d}\) and \(\bar{\infratexc}\) represent the terminal values of the processes, 
\(\lambda_{d}, \lambda_{\infratexc} \geq 0\) govern the speed of the transitions to the terminal values, and $\smoothstep$ is the ``smoothstep'' sigmoid function, which is increasing and for which \(\smoothstep(0) = 0\).\footnote{The smoothstep function has a simple polynomial expression over the transition domain and is commonly used in computer graphics due to its ease of computability.} 
Accounting for the common trend component, the value of the infection rate parameter in \(j\) at time \(t\) is given by  
\(
    \infratexc_j\td 
    = 
    \infratexc_j \cdot 
    \tilde{\infratexc}_j\td .
\)


NPI policy operates in fundamentally the same way in the extended model as in the \econsir{} model (see Section \ref{sec:sirpolicy}).
The activity of all agents in the model is potentially constrained by NPIs. 
This includes recovered \(R\) agents since NPIs have not been enforced differentially based on a person's immunity status.
When taking the model to data and estimating the severity of realized NPIs, we will allow for three different NPI severities (\(\bar{z}^{P_1}\), \(\bar{z}^{P_2}\), and \(\bar{z}^{P_3}\)), corresponding to school closures, restaurant take-out only orders, and stay-at-home orders (the most broadly used NPIs).

We model the distribution of economic needs as  \(\mathrm{Lognormal}(\mu_z, \sigma_z)\).\footnote{With 
    \(z \sim \mathrm{Lognormal}(\mu_z, \sigma_z)\), our parameters \(\mu_z\) and \(\sigma_z\) represent the mean and standard deviation of the random variable \(\log(z)\).
}
In addition to having support \((0, \infty)\), the Lognormal distribution allows for known expressions for partial and conditional expectations, which facilitates the calculation of quantities such as \eqref{eq:output_sir}.
When taking the model to the data there will be two additional considerations for the handling of \(z\). 
Because we will normalize output in the data, we parameterize the distribution to have \(\mu_z\) fixed and equal to \(-\sigma_z^2/2\), which ensures \(E[z]=1\).
Lastly, we will estimate the model from daily data, which necessitates some special considerations for the systematic fluctuations of activity and hours with day of the week.
To accommodate the systematic day-of-week variation, we allow day-specific values of \(\mu_z\).

\section{Estimating Spatial Econ-\sirextend{}'s Parameters}
\label{sec:estimation}


This section describes our approach to estimating the Econ-\sirextend{} model's parameters and shows that the estimated model can generate the paths of economic and epidemiological variables that we observe over the course of the  \covid{} pandemic.
Going forward, let \(\params\) denote the set of model parameters.

\subsection{Data}

Our estimation uses a panel of daily county-level observations over the period from February 15, 2020 to July 15, 2020 on NPIs, new \covid{} cases, new \covid{} deaths, point-of-interest visits (economic activity), and hours worked.\footnote{Appendix 
    \ref{sec:dataappend} contains additional details on the data.
}
After removing counties with insufficient data, the estimation uses 921 U.S. counties for inference, accounting for about 80 percent of the U.S. population.\footnote{We 
    use only counties with populations of at least 50,000. 
    Because of missing or sparse observations in the Homebase hours data, we exclude a few additional counties that meet our minimum population threshold. 
}
The economic variables are normalized county-by-county.
For the next section, it will be useful to denote the vector of county \(i\)'s observables at time \(t\) as
\begin{align}
    \mathbf{\obsv}_{i,t}
    =
    \big[
        Activity_{i,t}\,,~
        Hours_{i,t}\,,~
        NewCases_{i,t}\,,~
        NewDeaths_{i,t}
    \big]'
\end{align}
and let 
\(\obsvall \equiv \{\mathbf{\obsv}_{i,t}\}_{i,t}\) 
denote the full panel of the observations.

The NPI data consists of three different types of NPIs: stay-at-home orders, restaurants and bars takeout-only orders, and school closures.
While we observe the enactments of NPI orders, an NPI's implications for behavior must be inferred.
Hence, an NPI in location \(i\) at time \(t\) of type \(k\) enters our data as a binary variable  \(\npibinary_{i,t,k}\), with a value of 0 for ``off'' and 1 for ``on.''
The model represents the relevance of a type-\(k\) NPI through the value of the parameter \(\bar{z}^{P_k}\), which we include in \(\params\) and estimate for each NPI type. 
For the next section, it will be useful to denote the full panel of NPIs as 
\(\npiall \equiv \{b_{i,t,k}\}_{i,t,k}\) 
and let \(\npiall_{0:t}\) be the panel of all NPIs up through date \(t\).

\subsection{Likelihood}

Given \(\params\) and the panel of enacted NPIs, the \sirextend{} model generates a deterministic path of values for epidemiological and economic variables in each location.\footnote{The 
    initial conditions for the epidemiological states are relevant as well, however we set them as a deterministic function of the elements of \(\params\) that we have already introduced.
    See Appendix \ref{sec:initialstate}.
}
In particular, the model generates paths for newly infected and deceased agents, as well as economic activity and output.
We take the deterministic structural model, along with measurement errors, to be the generating process for these four data series.
The measurement errors create a probabilistic relationship between the data and model-generated paths that forms the basis of the likelihood function.
We next describe our model for the data generating processes of the four variables of interest and then follow with a description of the resulting likelihood function.



\paragraph{Economic Data Generating Process.} We write \(a_{i,t}^{\params, \npiall_{0:t}}\) to denote the model's simulated value for economic activity in location \(i\) at period \(t\), where the superscripts emphasize the simulated path's dependence on the the particular values of the parameters and NPIs.
For both of the economic variables we assume mean-zero additive Gaussian measurement errors around the model's value. 
For economic activity, the data generating process is thus 
\begin{align}
    Activity_{i,t}
    &= 
    a_{i,t}^{\params, \npiall_{0:t}}
    + 
    \epsilon_{i,t,a} 
    ~, \quad 
    \epsilon_{i,t,a} \sim
    \mathrm{Normal}\big( 0, \sigma_{a}^2 / w_{i} \big),
    \label{eq:measurement_itj}
\end{align}
where \(w_{i}\) is a weight proportional to county \(i\)'s relative population among counties in the sample.
The model for \(Output_{i,t}\) is analogous to \eqref{eq:measurement_itj}, with the model object on the right-hand side instead being 
\(\ecoutput_{i,t}^{\params, \npiall_{0:t}}\).
We estimate the series specific variances \(\sigma_{a}^2\) and \(\sigma_{y}^2\).



\paragraph{Health Data Generating Process.} Modeling the relationship between the model objects and health variables introduces some additional challenges.
First, the data on new \covid{} cases and deaths are discrete counts and their scale varies considerably with county populations.
Meanwhile, the model's predictions for new cases and deaths take the form of continuous population shares.
Taking new cases as an example, 
one can straightforwardly map the model-implied flows into expected counts for a county with population $pop_i$ as
\(
    pop_i \times N_{i}\td\).\footnote{Note 
    that the observed value \(\obsv_{i,t,j}\) satisfies 
    \(\obsv_{i,t,j} = n_i p_{i,t,j}\) for a true population fraction \(p_{i,t,j}\).
The 
    relevant model flows are given in equations  \eqref{eq:delta_I} and \eqref{eq:delta_D} as 
    \(    
        N_{i}\td 
        =
        \lambda A_i\td
    \)
    and
    \(
        \dot{D}_{i}\td 
        =
        \kappa I_i\td.
    \)
}
The expected count can then discipline the \emph{mean} of a discrete probability distribution over the positive integers.



Second, the data are subject to systematic measurement error of at least two types.
The first type is reporting delays: the tendency of new infections or deaths occurring at time \(t\) to not show up in the data until \(t + d\).
We examined the empirical distribution of reporting delays in locations that separately report both day-of-death and ``new deaths,'' which lead us to incorporate a delay of \(d = 10\) days for both cases and deaths (see \autoref{fig:deathdelay} in the Appendix).

The second type of systematic measurement error in the health data is underreporting. 
We account for underreporting by assuming that asymptomatic cases are generally not reported (i.e. agents in \(A\td\) do not show up in new cases data) and that symptomatic cases (new entrants to \(I\td\)) are systematically underreported, particularly in the early phase of the pandemic.
We model the evolution of underreporting of new cases using a sequence of reporting factors \(r_{t} \in [0,1]\) that converge to 1 at an exponential rate.
In the limit, cases are correctly reported.\footnote{While 
    we build under-reporting of cases into the model and estimation, we do not do so for deaths.
    We estimate the model to best match the \emph{reported} data for \covid{} deaths (including the reporting delays).
    A number of journalistic outlets have documented evidence of ``excess deaths'' in multiple locations, which exceed the number of reported \covid{} deaths in those locations.
    To the extent that official death totals are meaningfully under-reported, the true mortality rate in our model (\(\deathrate\)) should be higher and readers of this would want to accordingly adjust upwards our mortality predictions.
    }
Accounting for the systematic delays and underreporting, we have the following implication that will center the DGP for new cases,
\begin{align}
    E[NewCases_{i,t}] 
    = 
    r_{t-d} \cdot pop_{i} \cdot 
    N_{i,t-d}^{\params, \npiall_{0:t-d}}.
    \label{eq:epi_model_data_center}
\end{align}


Lastly, having accounted for the systematic dimensions of the measurement error, we model the idiosyncratic variation as arising according to a  Conway-Maxwell-Poisson (CMP) distribution.\footnote{The CMP  
    distribution of \cite{conway1962} is also known as the COM-Poisson distribution.
    See \cite{sellers2012} and \cite{daly2016} for useful recent overviews of the distribution's features.
}
The \(\mathrm{CMP}(\lambda, \nu)\) distribution extends the Poisson distribution by allowing the center and dispersion to be specified separately.
The parameter \(\nu\) governs the distribution's dispersion, allowing us to model the difference in measurement accuracy between cases and deaths.
Thus, we model the new cases data as
\begin{align}
    NewCases_{i,t} 
    \sim 
    \mathrm{CMP}\big( 
        \lambda_{C}, \nu_{C} 
    \big)
\end{align}
where \(\nu_{C} = 0.1\) and, conditional on \(\nu_{C}\),  \(\lambda_{C}\) is chosen to fix the mean of the CMP distribution at the value prescribed by \eqref{eq:epi_model_data_center}.
We model the data on new deaths analogously, but with \(\nu_D = 100\).
The differing dispersion parameters reflect the view that the deaths data are considerably more reliable.

\paragraph{Location-specific parameters.} All parameters except \(\{\infratexc_{i}, \cost_{i}\}_{i}\) are common to all locations.
We shrink the location-specific values of \(\infratexc_{i}\) and \(\cost_{i}\) towards common values by modeling them as random effects with 
\(
    \infratexc_{i} \sim LogNormal(\bar{\infratexc},         
        \sigma^2_{\infratexc})
\) 
and 
\(
    \cost_{i} \sim LogNormal(\bar{\cost},         
        \sigma^2_{\cost}).
\)

\paragraph{Initial conditions.} We initiate the virus for county \(j\) at the date when a positive number of confirmed cases first occurs in the data.
At that date, we set \(I\td = NewCases/Pop\) and 
\(A\td = \digamma I\td\), where
$\digamma$ is a function of other parameters (see Appendix \ref{sec:initialstate}).\footnote{At 
    the initialization date, \(R\td = D\td = 0\) and 
    \(S\td = 1 - I\td - A\td\).
}

\paragraph{Likelihood.}
Accounting for the DGPs and random effects, the 
likelihood takes the form
\begin{align}
    L(\params | \obsvall, \npiall)
    \propto
    p(\obsvall | \params, \npiall)
    \prod_{i}
    p\big(
        \infratexc_i | \bar{\infratexc}, \sigma^2_{\infratexc} 
    \big)
    \,
    p\big(
        \cost_i | \bar{\cost}, \sigma^2_{\cost}
    \big).
    \label{eq:likfull}
\end{align}
We assume the measurement errors are idiosyncratic in the sense of being independent across time, location, and data series, thus 
\(p(\obsvall|\params, \npiall)\) decomposes as
\(
    p(\obsvall | \params, \npiall)
    =
    \prod_{i,t} 
    p(\mathbf{\obsv}_{i,t} | \params, \npiall_{0:t})
\)
where 
\begin{align}
    \begin{split}
    p(\mathbf{\obsv}_{i,t} | \params, \npiall_{0:t})
    =\, 
    &\phi \big(
        Activity_{i,t} | 
        a_{i,t}^{\params, \npiall_{0:t}}, 
        \sigma_{a}^2 / w_{i}
    \big)
    \cdot
    \phi \big(
        Output_{i,t} |
        \ecoutput_{i,t}^{\params, \npiall_{0:t}}, 
        \sigma_{\ecoutput}^2 / w_{i}
    \big)
    \\
    \cdot\,
    &g(NewCases_{i,t} | \lambda_{C}, \nu_{C})
    \cdot
    g(NewDeaths_{i,t} | \lambda_{D}, \nu_{D})
    \end{split}
\end{align}
for \(\phi(\cdot | \mu, \varsigma)\) the pdf of a Normal distribution with mean \(\mu\) and variance \(\varsigma\),
\(g(\cdot|\lambda,\nu)\) the probability mass function of a CMP distribution, 
and
\(\lambda_{C}\) and \(\lambda_{D}\) set as described above.
We estimate \(\params\) as the argmax of (the \(\log\) of) \(L\) in \eqref{eq:likfull}.

\paragraph{Discussion.} A few aspects of the estimator warrant further discussion. 
First, the simulated trajectories from the model might be interpreted as impulse responses to a ``\covid{} shock.''
The entirety of the paths reflect a single \covid{} infection shock that perturbs the measures of \(S\td\) and \(I\td\) in each county at the beginning of the pandemic and the policy paths in each county.
In other words, the paths have no mechanism for correcting mid-course deviations from the data.
In the language sometimes used when assessing DSGE models, one could say that our model fit results exclusively from \emph{internal} propagation mechanisms following the infection impulses in early March.

Second, for the parameters common across locations, the estimator prioritizes fitting the dynamics of higher population areas.
The economic variables are normalized county-by-county, but the weights \(w_i\) in \eqref{eq:measurement_itj} decrease the relative measurement error as county population increases.
A similar relationship is implicitly induced by the CMP distribution for the health variables. 
One can show that increasing the mean of the CMP while holding \(\nu\) fixed at a given value results in probability mass more concentrated around the distribution's center.
Thus the relative size of the idiosyncratic variation falls with higher expected counts, which, all else equal, occur in higher population areas.
The greater ``weight'' on matching the dynamics in large population counties helps to ensure good fit to aggregates.  




\subsection{Fit}

We show the estimated Econ-\sirextend{} model's success at fitting the data along multiple dimensions, beginning with the time-series of aggregates and period-by-period cross section.

\paragraph{Time series.}  The two panels of \autoref{fig:modelfit} compare the data and model predictions for the four variables of interest.  In light of this fact, we think the model's ability to generally match the paths of all 4 variables over a roughly 5 month period, including case resurgences in multiple states, based on only initial conditions is fairly remarkable.

\begin{figure}[t]
    \centering
    \begin{subfigure}{.49\textwidth}
      \centering
      \includegraphics[width=1.0\textwidth]{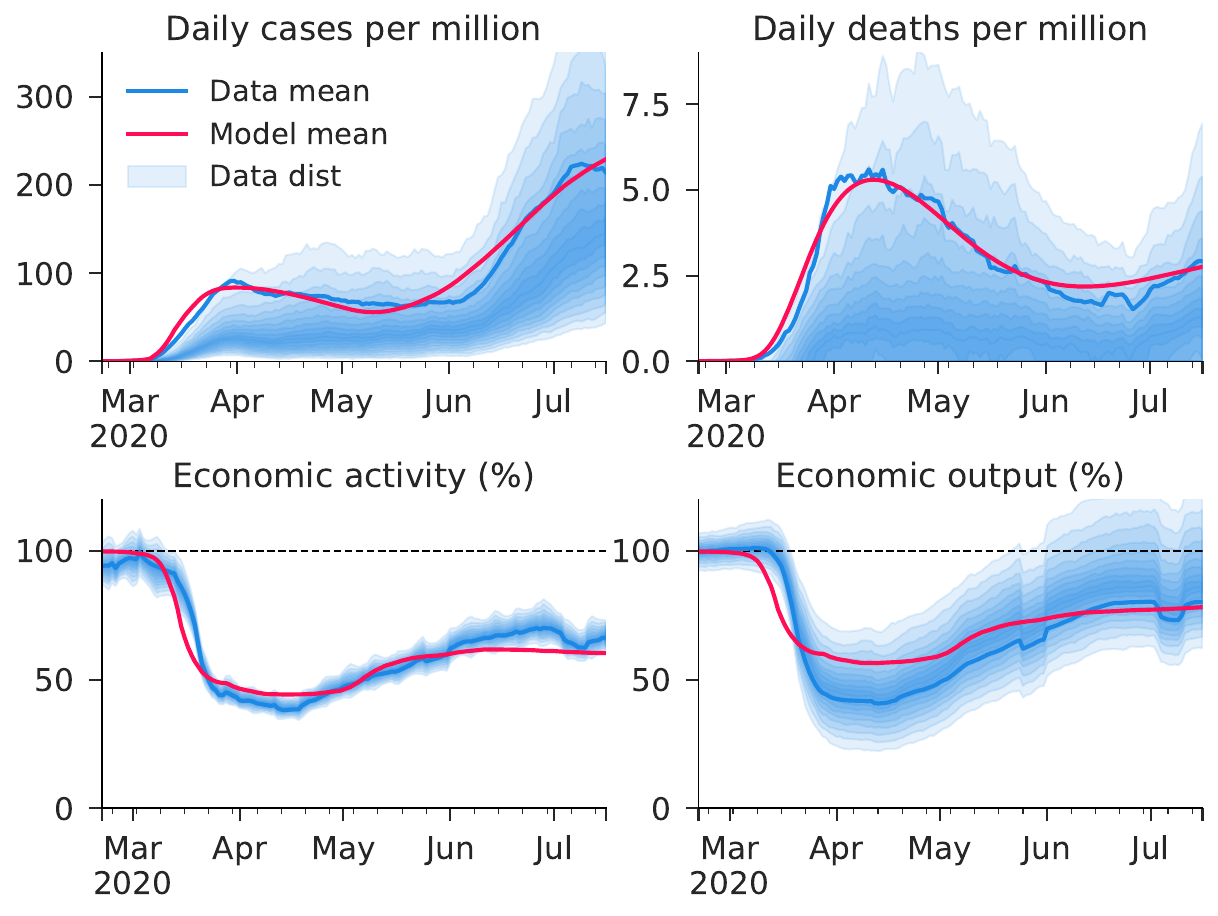}
      \caption{Cross-section from the data.}
      \label{fig:modelfit1}
    \end{subfigure}%
    ~~
    \begin{subfigure}{.49\textwidth}
      \centering
      \includegraphics[width=1.0\textwidth]{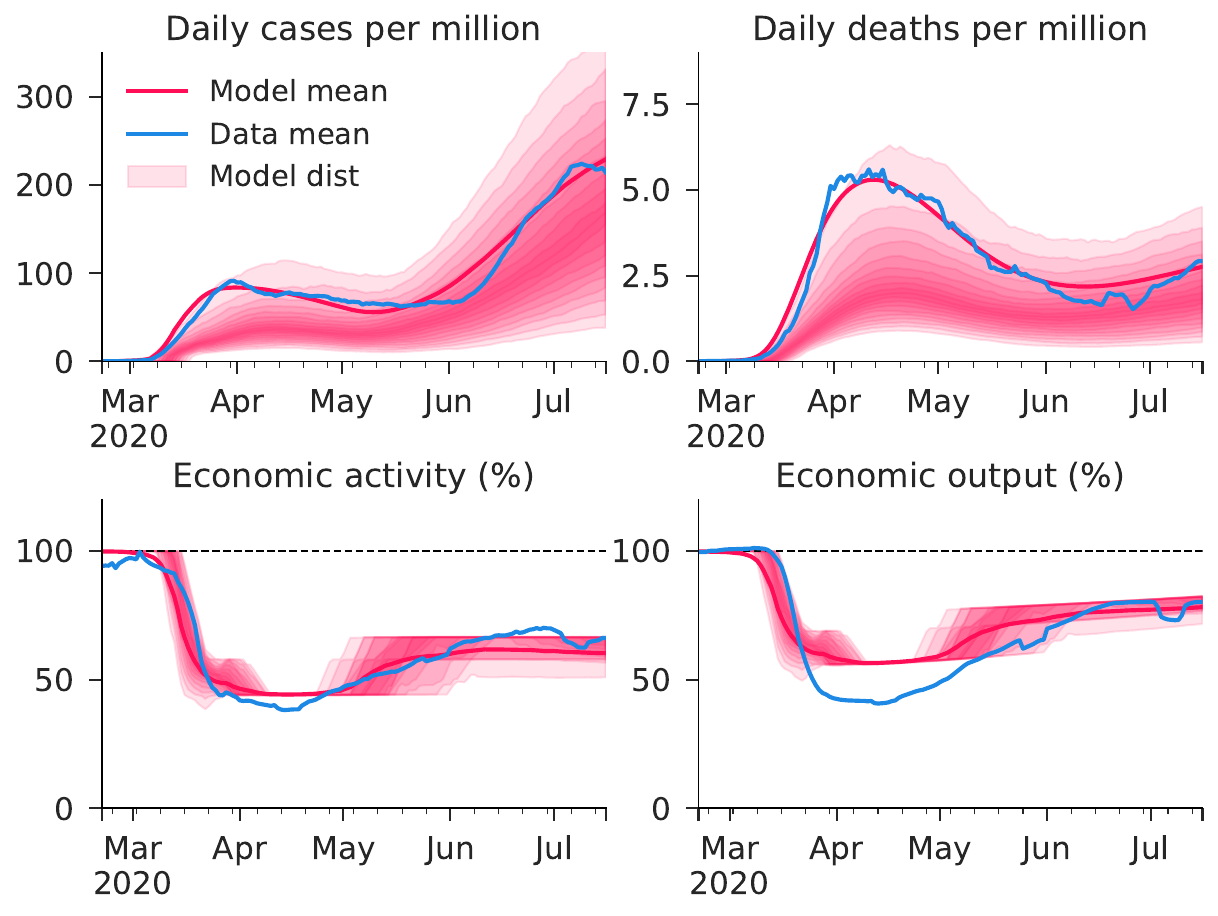}
      \caption{Cross-section from the model.}
      \label{fig:modelfit2}
    \end{subfigure}
    \caption{Data vs model: time series and cross sectional distributions.}
    \label{fig:modelfit}
\end{figure}

\paragraph{Cross sectional units.} While \autoref{fig:modelfit} demonstrates the model's fit to the aggregate time series, it is also important to show that the model accurately captures the time series evolution at lower levels of aggregation.\footnote{We would not want to get the average right (in the form of aggregates) while getting all individual units very wrong.}
\autoref{fig:modelfit_twostate} compares the model and data for two states of particular interest: Florida and California.
Appendix \ref{sec:statefit} contains analogous figures for each of the 44 states in our sample.\footnote{Our data does not include any counties in six states because of lack of coverage in the hours worked data.}

\begin{figure}[t]
    \captionsetup[subfigure]{labelformat=empty}
    \centering
    \begin{subfigure}{.49\textwidth}
      \centering
      \caption{Florida}
      \includegraphics[width=1.0\textwidth]{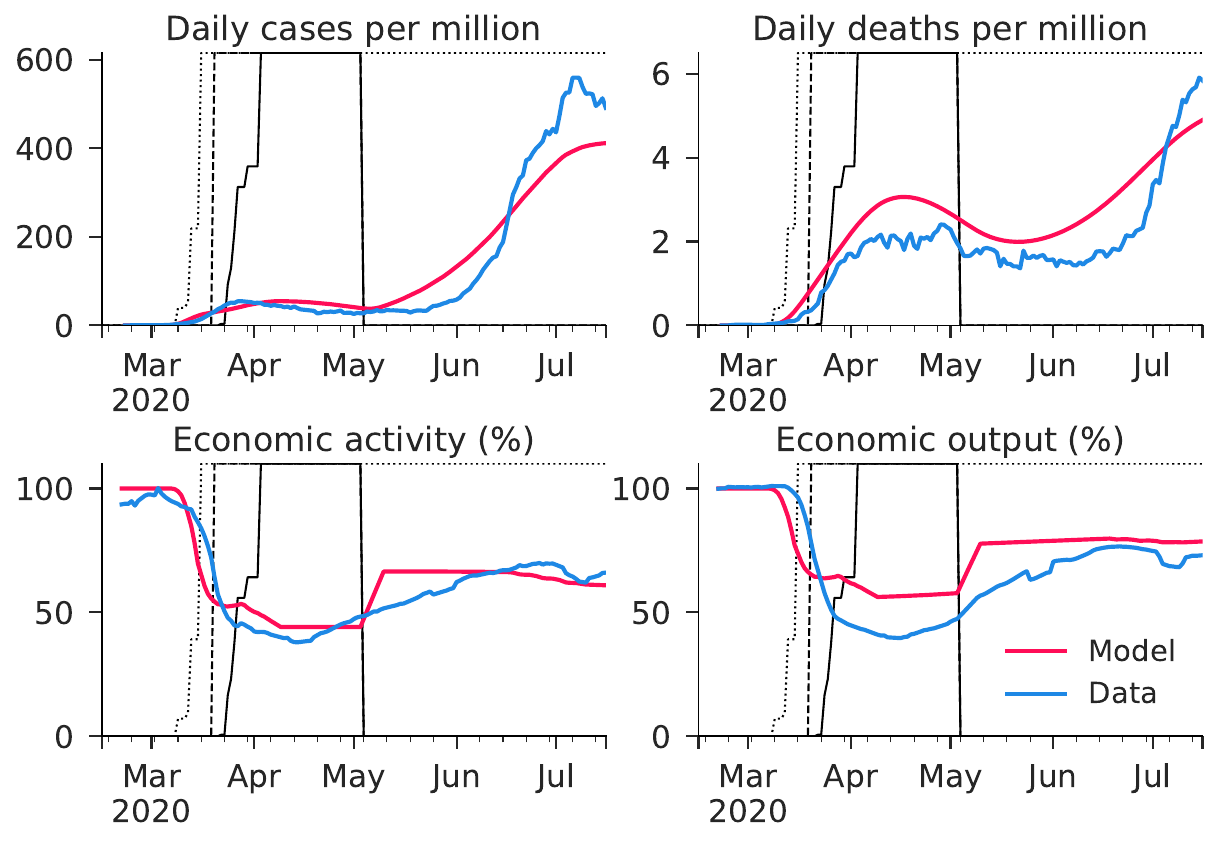}
      \label{fig:flo-rida}
    \end{subfigure}%
    ~~
    \begin{subfigure}{.49\textwidth}
      \centering
      \caption{California}
      \includegraphics[width=1.0\textwidth]{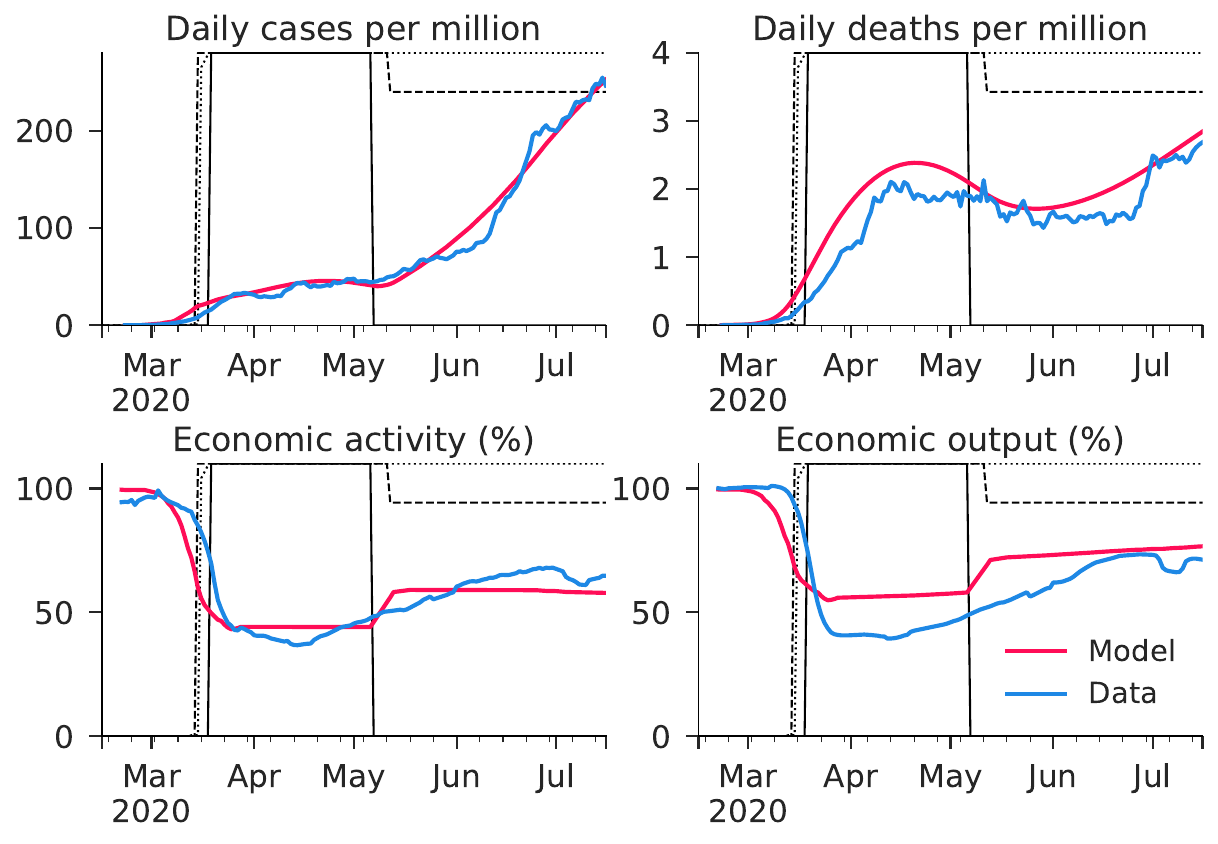}
      \label{fig:cali}
    \end{subfigure}
    \caption{State-level fit. Solid black line indicates stay-at-home order, dashed black line indicates restaurant take-out-only order, dotted black line indicates school closure.}
    \label{fig:modelfit_twostate}
\end{figure}

\paragraph{Seroprevalence.}  Since the model assumes underreporting of cases in the data, the model predicts that true cases always exceed the reported case data. 
Using seroprevalence studies we can quantitatively validate the model's view of the number of unreported \covid{} cases. This is an especially useful validation of the model mechanisms, as our estimation does not target these quantities at the few times and locations for which they are available. 

The CDC reports the results from 10 local seroprevalence studies, each conducted at two different points in time.\footnote{See \url{https://www.cdc.gov/coronavirus/2019-ncov/cases-updates/commercial-labs-interactive-serology-dashboard.html}.
}
Because our model provides daily county-level predictions for \covid{} exposure, we can generally construct the model's assessment of exposure in the same location, and at the same dates, as each seroprevalence study.
The first panel of \autoref{fig:reported_actual} shows the model's aggregate implications for true cases vs reported cases.
The second panel shows the model's implications for exposure in the locations and on the dates of the available seroprevalence studies.\footnote{The 
    values are taken from the CDC's Commercial Laboratory Seroprevalence Survey Data. 
    The CDC offers the following broad summary of the nature of the underlying data,
    ``The survey includes people who had blood specimens tested for reasons unrelated to COVID-19, such as for a routine or sick visit during which blood was collected and tested by commercial laboratories in participating areas from each of the 10 sites.'' 
    As such the data collection should not reflect \covid{} cases that are either patently symptomatic or have already resolved via mortality.
}
We find this a particularly interesting dimension of the model to evaluate because our estimation procedure does not ``target'' or in any way use these features of the data. Further, given the uncertainty over what fraction of past infected people demonstrate antibodies, the fact that the model slightly over-estimates the data point estimates is reassuring.




\begin{figure}[t]
    \centering
    \begin{subfigure}{.45\textwidth}
      \centering
      \includegraphics[width=1.0\textwidth]{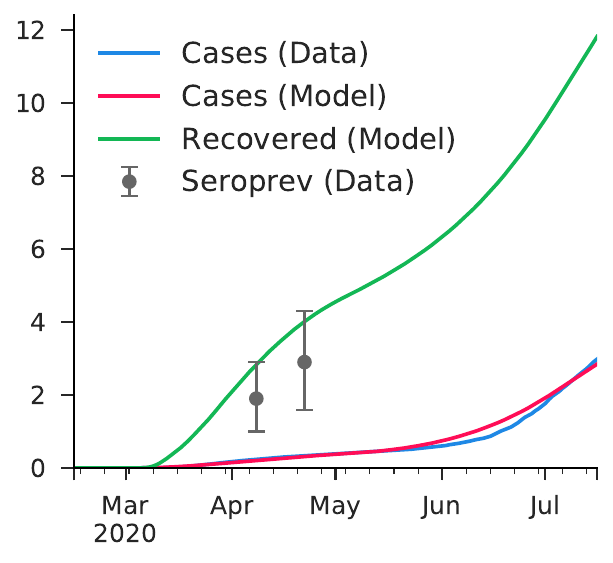}
      \caption{South Florida}
      \label{fig:reported_actual}
    \end{subfigure}%
    \begin{subfigure}{.55\textwidth}
      \centering
      \includegraphics[width=1.0\textwidth]{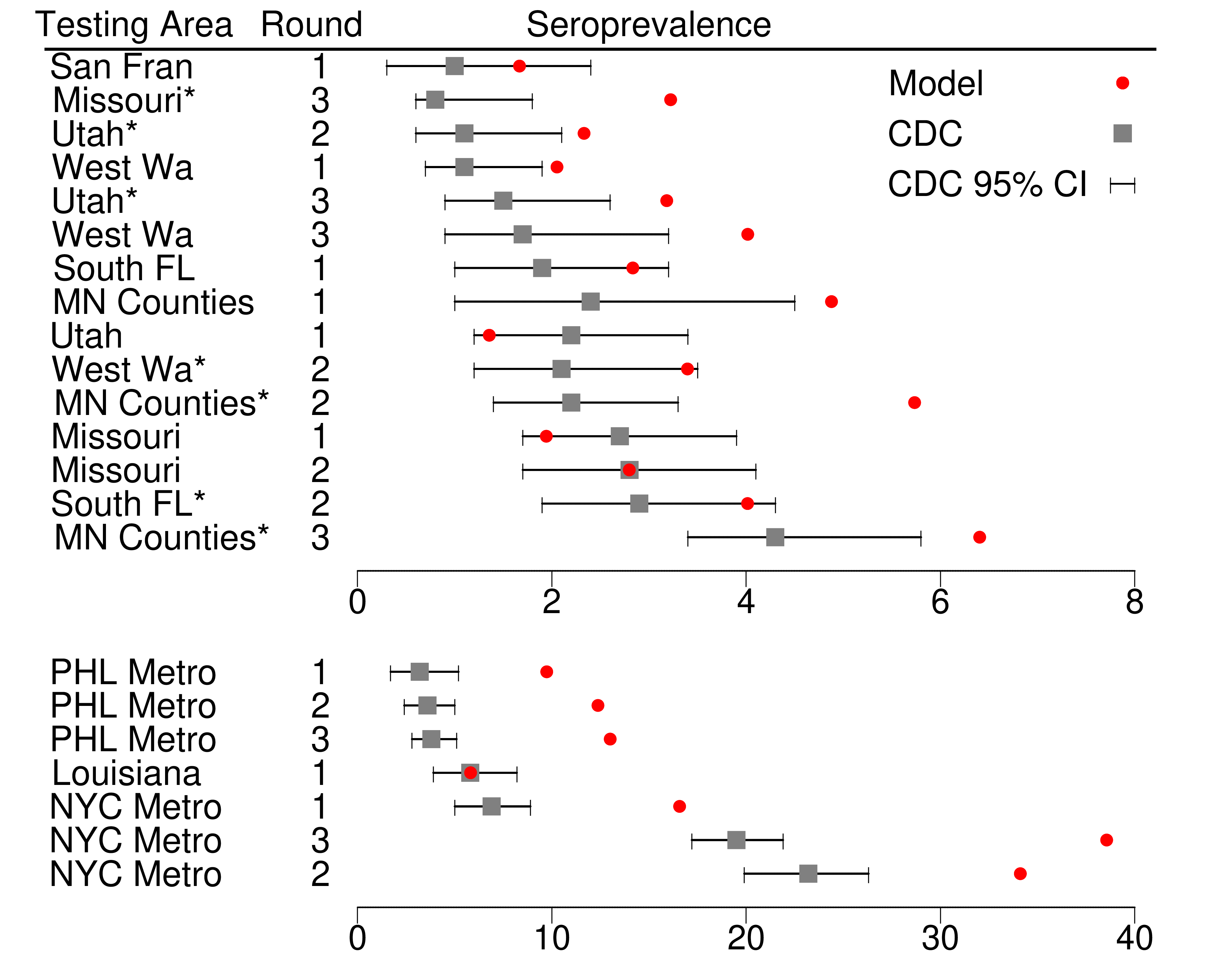}
      \caption{Seroprevalance Studies}
      \label{fig:sero}
    \end{subfigure}
    \caption{Seroprevalence in the model and data.}
    \label{fig:test}
\end{figure}

\section{Counterfactuals and Policy Experiments}
\label{sec:use_estimated_model}

In this section we examine the quantitative implications of the estimated Econ-\sirextend{} model for the trajectories of epidemiological and economic variables under different policy scenarios.

\paragraph{Eradication Zone.} Under some policies, the model's implications depend importantly on whether or not it is possible to eradicate the virus by driving the number of carriers sufficiently low.
An artifact of standard SIR-based models is that it is not possible to eradicate a virus until the population reaches herd immunity.
This is because some measure of agents remain infected forever in this class of models and, as long as a large share of Susceptibles remains, the virus will propagate in the presence of any positive measure of infectious agents. 
We will consider a version of the model where \(A\td\) and \(I\td\) jump discontinuously to $0$ if the total measure of infectious agents ($A\td + I\td$) falls below a particular threshold.
At which point, we say that the virus has been eradicated. 

When using this variant of the model, we entertain two sizes for this ``Eradication Zone,'' one at smaller (and hence weaker) one at 100 per million and a stronger one at 400 per million. Under reasonable asymptomatic transmission and case reporting assumptions, these $A+I$ numbers correspond very roughly to reported new case rates of 1 and 5 per million per day, respectively. Both of these zones are still fairly stringent, as daily reported case number in the US have been between 100 and 200 per million for some time. They are closer to numbers seen in S. Korea which are typically around 1 per million but spike up to 5-10 per million during small outbreaks.

\subsection{Laissez-Faire}
\label{sec:laissezherd}

\begin{figure}[t]
    \centering
    \includegraphics[width=\textwidth]{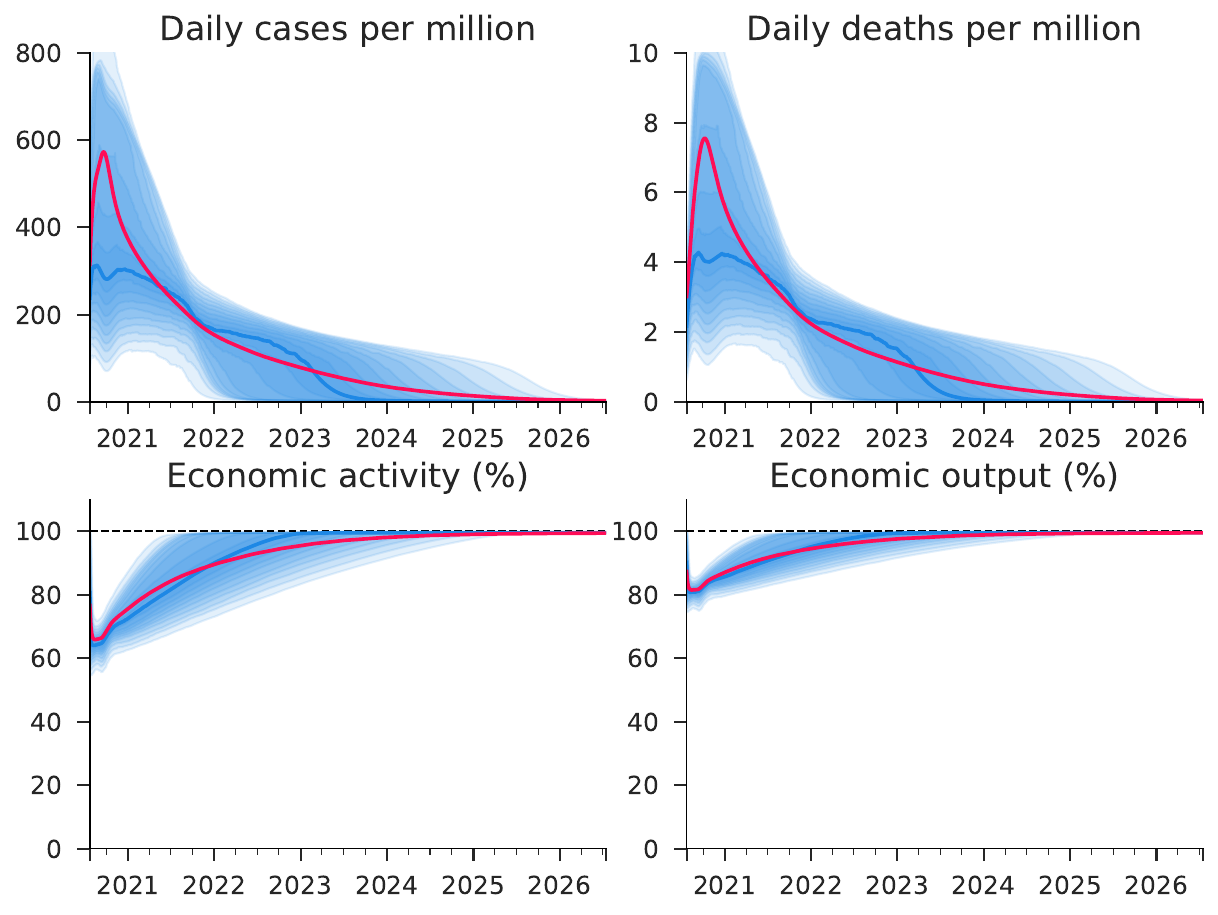}
    \caption{\textbf{Laissez-Faire.} Red lines are population-weighted means, thick blue lines are for the median county, and the light blue shaded bands indicate percentiles of the cross sectional distribution of counties.}
    \label{fig:laissezfaire_long}
\end{figure}

We first examine the scenario in which policymakers do nothing, i.e. laissez-faire policy.
This scenario demonstrates the behavior that the model absent NPIs simply from the endogenous responses of Susceptibles with respect to the force of infection. As in other SIR-based models, the virus continues to spread until the population reaches herd immunity. As discussed in Section \ref{sec:econsir}, infections peak earlier than the herd immunity threshold --- at the point at which the inflow rate of new infections equals the outflow rate of infectious agents.
In the Econ-\sirextend{} model achieving herd immunity in the full cross-section requires about 71 percent of the population having been infected. 
From a policy perspective, the important questions about the laissez-faire option are how long it takes to get to herd immunity, what are the costs of doing so, and, perhaps most importantly, what are the other options?

\autoref{fig:laissezfaire_long} shows the long term trajectories of economic and epidemiological variables under laissez-faire policy in the Econ-\sirextend{} model.
The critically different implication of our \econsir{}-based model, relative to a standard SIR model, is the protractedness of the path back to normality.
The endogenous virus avoidance of Susceptible agents, who reduce activity even in the absence of formal lock downs, generates a path of roughly 3 years to reach herd immunity, after which infections steadily die out over the subsequent 2 years.
Along this path, activity and output remain significantly below their pre-pandemic levels.
Our results crystallize the notion that ``reopening'' is easier said than done. 
As long as people fear infection from the virus, they will continue to curtail their economic activity and output will remain below normal levels.
As the model approaches herd immunity, economic activity increases and approaches its baseline level, ultimately yielding dynamics similar to those in models with a more simplistic infection mechanism, albeit occuring at a considerably later date in \econsir{}.
As this process plays out, cumulative mortality over the five years is about 0.38 percent of the population, which equates to roughly 1.25 million deaths in the United States. 

\subsection{Alternative Policies}
\label{sec:altpol}

With the model's baseline behavior as context, we next consider the outcomes under various policies.
In each case, we assume the policy is in effect from July 12, 2020 forward.


\begin{figure}[t]
    \centering
    \includegraphics[width=\textwidth]{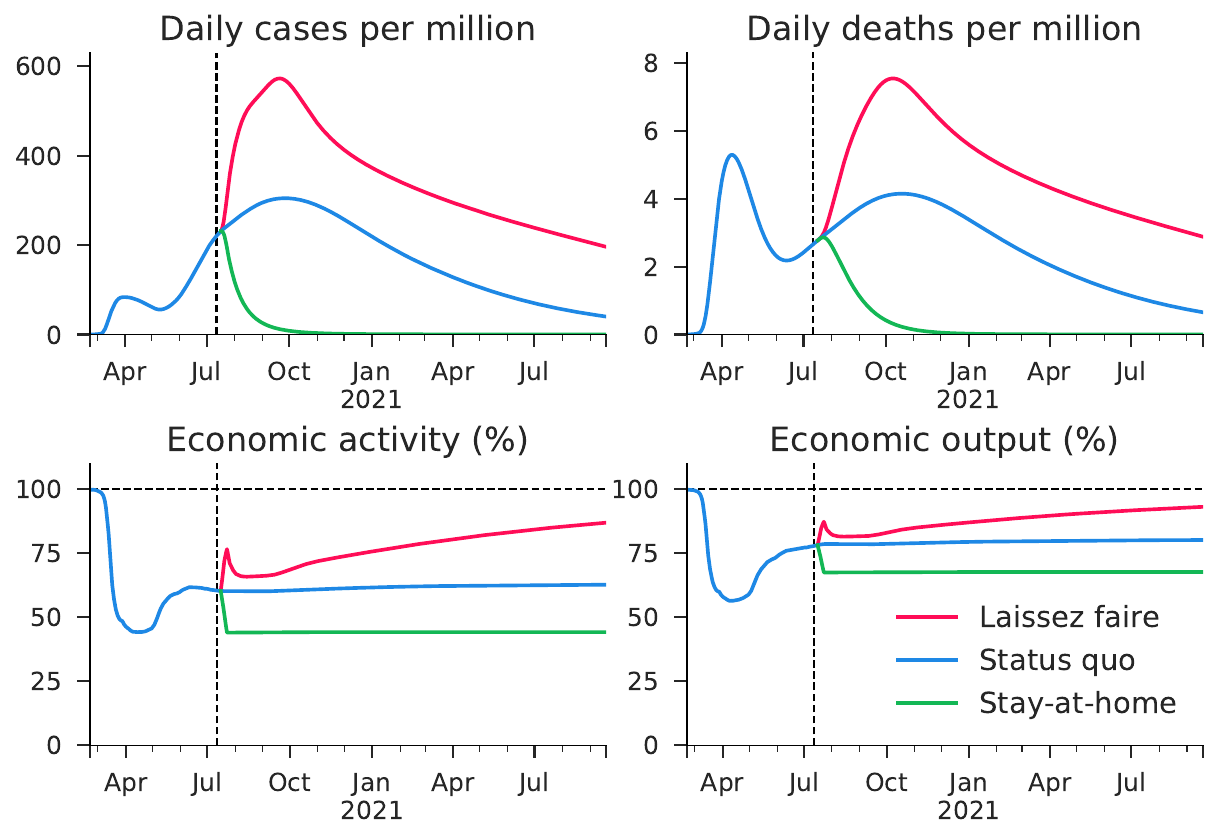}
    \caption{Forward simulations under alternative policies.}
    \label{fig:altpols}
\end{figure}

\begin{itemize}
    \item \textbf{Status quo.} The status quo option entails the continuation of whatever NPI policies were in place at the end of our estimation period.
    \item \textbf{Stay-at-home indefinitely.} (Self explanatory).
    \item \textbf{Severe temporary lockdowns, followed by full reopening.} It has been suggested that a severe temporary lockdown would be an effective way to reduce U.S. infection levels to the point where test-and-trace strategies would be viable for containment.  
    We consider the implications of such a severe shutdown if it were followed by simply a naive reopening.
    \item \textbf{Optimal adaptive local lockdowns.} We next consider optimal policy rules that enforce location-specific lockdowns of varying strength as a function of the local share of infectious agents.
\end{itemize}

\autoref{fig:altpols} shows the predicted trajectory of our benchmark variables under laissez-faire, status quo, and perpetual stay-at-home.
From these trajectories one can see the main tradeoffs that the model allows.
While laissez-faire generates the highest output (even despite some degree of private mitigation by agents), it does so at the cost of vastly more infections than the other options over the near term. 
At the other end of the spectrum, our estimated stay-at-home orders would be sufficient to dramatically curtain the virus's spread, they do at the cost of also dramatically curtailing economic output.

The long run implications of a temporary severe lockdown (if followed by a full reopening), depend critically on whether or not the eradication zone is ``fair game.''
\autoref{fig:templockdowns} in the appendix  
shows the model outcome given a fairly stringent lockdown lasting for two months.
In the short run, the lockdown is effective at reducing infection rates at the cost of reducing output by about 50 percent.
If the eradication zone exists, then this is an extremely effective long run strategy for knocking out the virus and fully normal activity resumes within 4 months. 
However, with no eradication zone, infections come surging back after only a few months.
On the economic side, there is a brief period of high output at the early phase of the virus resurgence, followed by a subsequent collapse due to renewed avoidance.


We next consider optimal policy rules that enforce location-specific lockdowns of varying strength as a function of the local share of infectious agents.
We optimize a policy rule of the form
\begin{align}
    \bar{z}_{j,t} 
    =
    \bar{z}^{\max} \cdot 
    \smoothstep \big( A_{j,t-1} + I_{j,t-1} \big)
    \label{eq:optruleform}
\end{align}
where \(\smoothstep(s)\) is the smoothstep sigmoid function with left edge of 0 and right edge of \(\bar{s}\).
The parameters to optimize in \eqref{eq:optruleform}
are the maximal lockdown intensity, \(\bar{z}^{\max}\), and the value of \(\bar{s}\) of local virus prevalence that triggers the maximal lockdown. 
For each simulation, we compute welfare via
\begin{align}
    W(t_0, T) 
    = 
    \sum_{t=t_0}^{T} \sum_{j=1}^{J} 
    \big( \ecoutput_{j,t} - \cost_{j} N_{j,t} \big).
    \label{eq:welfare}
\end{align}
and determine \((\bar{z}^{\max}, \bar{s})\) jointly as
\begin{equation}
    (\bar{z}^{\max}, \bar{s})
    = 
    \underset{\bar{z}^{\max}, \bar{s}}{\text{argmax}} 
    ~~ W(t_0, T)
    \label{eq:optpolicyprob}
\end{equation}
where \(t_0\) is the end of the estimation sample, and \(T\) is 365 days after \(t_0\). 
For each location, the model-implied trajectories of output \(\ecoutput_{i,t_0:\bar{T}}\) and new infections 
\(N_{i,t_0:\bar{T}}\) depend on the sequences \(\bar{z}_{i,t_0:\bar{T}}\) generated by the policy rule.
We will invoke the welfare calculation in \eqref{eq:welfare} in additional experiments going forward.
\(W\) is our preferred measure for comparing alternative policies, as it is the most internally-model-consistent summary of how agents value trajectories of the model's variables. 
Nevertheless, to see the model-implied tradeoffs in possibly more concrete dimensions, we also summarize the ramifications of alternative policies in terms of output and deaths.
One can then form judgments about a given policy's appropriateness without fully accepting the primacy of \(W\).

\begin{figure}[t]
    \centering
    \includegraphics[width=\textwidth]{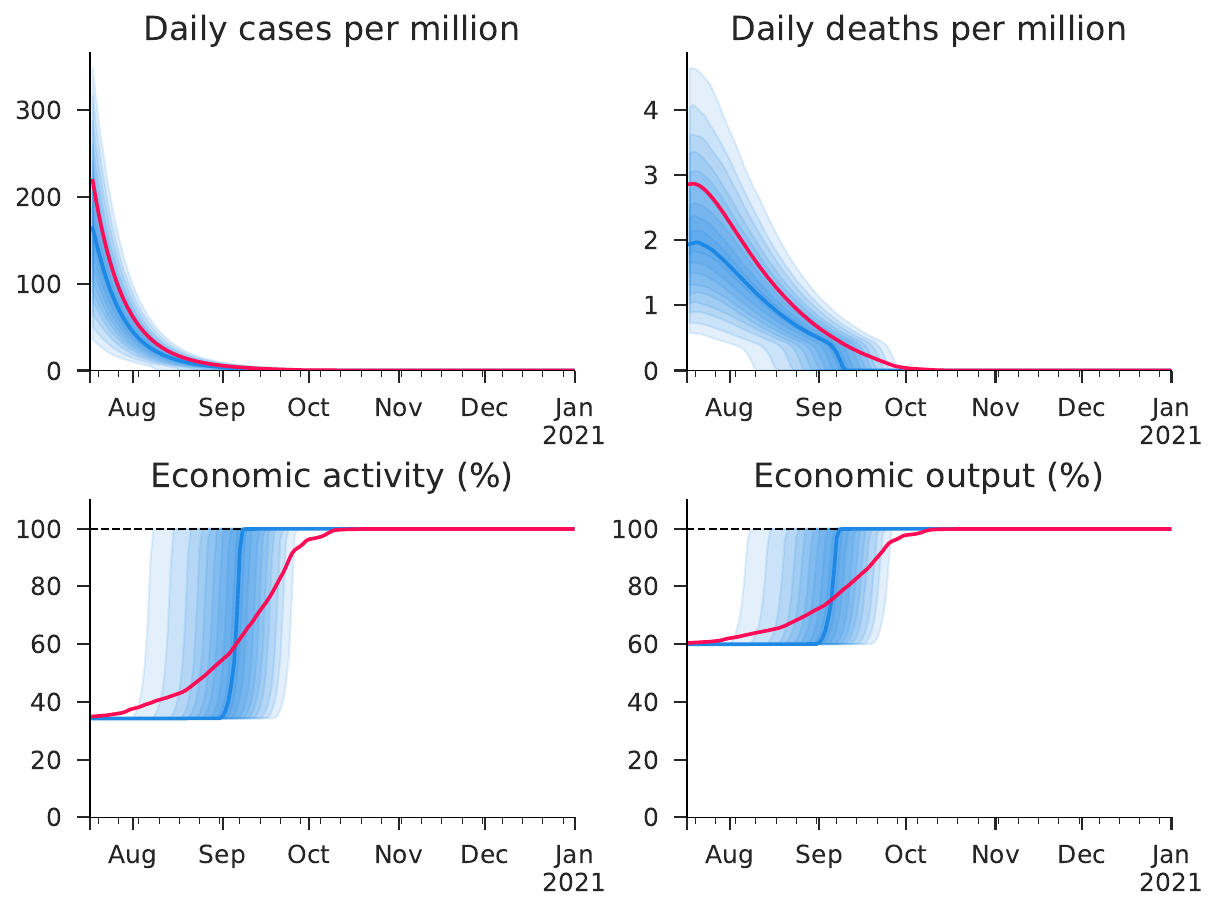}
    \caption{Optimal adaptive lockdown policy, with eradication zone.}
    \label{fig:opt_adapt_lock_kz}
\end{figure}

In the presence of a weak eradication zone (\autoref{fig:opt_adapt_lock_kz}), we estimate that the optimal policy takes the form of a lockdown level \(\bar{z}^{\max} = 1.05\) and \(\bar{s} = 212\) infectious agents per million. Meanwhile, with a strong eradication zone, we find the optimal policy to be \(\bar{z}^{\max} = 1.08\) and \(\bar{s} = 883\).
To make the implications more concrete, such a policy (when active in a given location) would restrict economic activity to a level somewhat lower than that of our estimated stay-at-home order. The threshold for introduction is based on total active cases per million. Mapping this into daily reported cases per million, we can divide by about 4 to account for asymptomatic cases and under-reporting and divide by 20 ($\approx 1/\delta$) to go from a stock to a daily flow (so a total of 80). In this case, the thresholds map to $2.7$ (for 212) and $11$ (for 883).



Within the model, the nature of the tradeoff between deaths and output depends importantly on both the time horizon and the eradication zone.
We follow others in the literature and refer to the set of achievable outcomes for deaths and output as the pandemic possibilities frontier, or PPF.
\autoref{fig:ppfs} plots the PPFs pertinent to 1 month and 1 year horizons under optimal adaptive lockdown with and without the eradication zone assumption.
Without the eradication zone, (top row) we find classical tradeoffs between output and deaths.
Note that the lower right most point in each PPF pertains to laissez-faire policy.  
If eradication is possible (bottom row), then the estimated model implies a PPF with an upward sloping region at a one year horizon.   
This is achievable because once the virus is eradicated, output returns to normal levels, which can in principle outweigh sharp initial drops in output associated with a temporary stringent lockdown to eradicate the virus.
It turns out, that such an approach yields gains in both dimensions, generating the upward sloping portion of the PPF.

\begin{figure}[t]
    \centering
    \includegraphics[width=\textwidth]{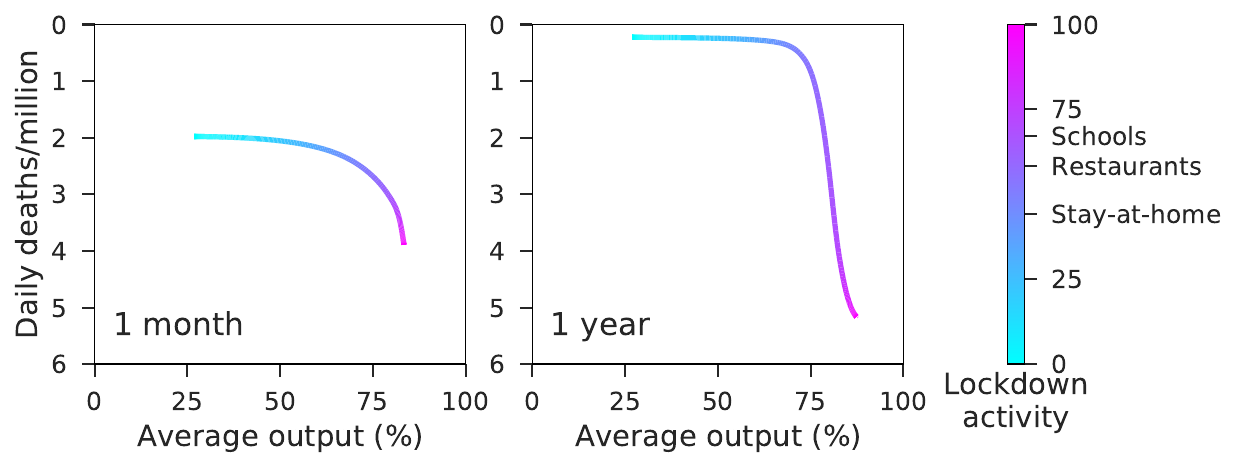}
    \includegraphics[width=\textwidth]{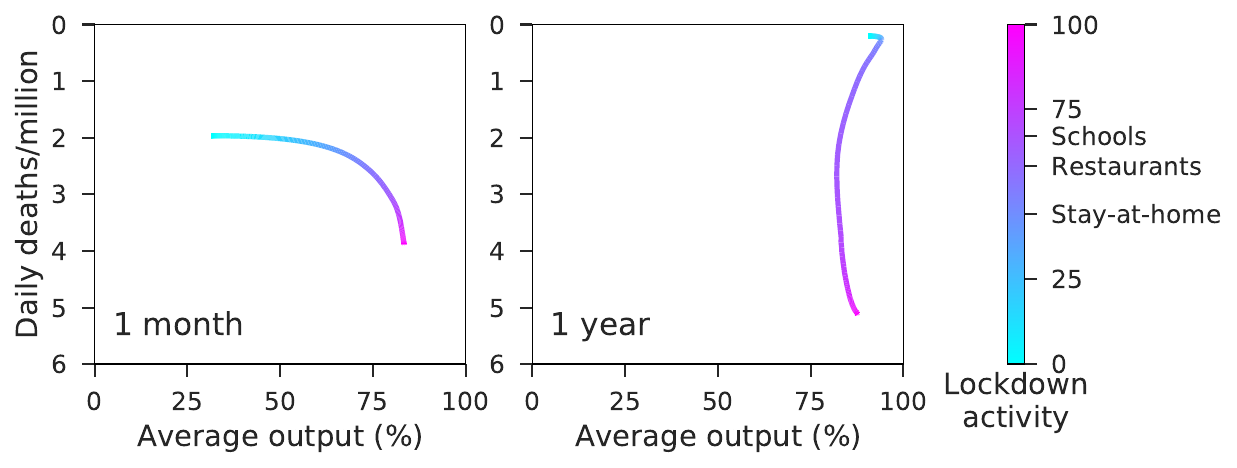}
    \caption{Forward looking PPF under adaptive lockdown policy. Top row of panels assumes no eradication zone exists. 
    Bottom row of panels assumes a strong eradication zone.}
    \label{fig:ppfs}
\end{figure}



\subsection{Optimal Policy when a Vaccine is Near}

The previous section's results underscore the fact that there is little scope for NPIs to decrease the virus's final toll if eradication is impossible and a vaccine never arrives.
In this section we consider optimal policy when a vaccine is likely in the relatively near term --- one year from now. The question then becomes, would any NPIs be worthwhile over such a time horizon?

One can develop the basic intuition for why NPIs could yield welfare gains under such a scenario by re-examining the laissez-faire path in \autoref{fig:laissezfaire_long}.  
Along the laissez-faire path, cases and deaths are frontloaded. 
With a vaccine ready at time \(T_{vax}\), we know that the virus's total cost beyond \(T_{vax}\) will be small regardless of how many infections have occurred by that date.
In that case, it could conceivably be optimal to prevent the burst of infections and deaths that occur in the pandemic's early stages since the vaccine renders it unnecessary ever to incur them at all.
One can contrast this with the scenario in which no vaccine could ever be developed, in which case simply shifting the timing of the costs (i.e. whether negatives are frontloaded or backloaded) would be largely irrelevant (for a sufficiently patient planner).

\begin{figure}[t]
    \centering
    \includegraphics[width=\textwidth]{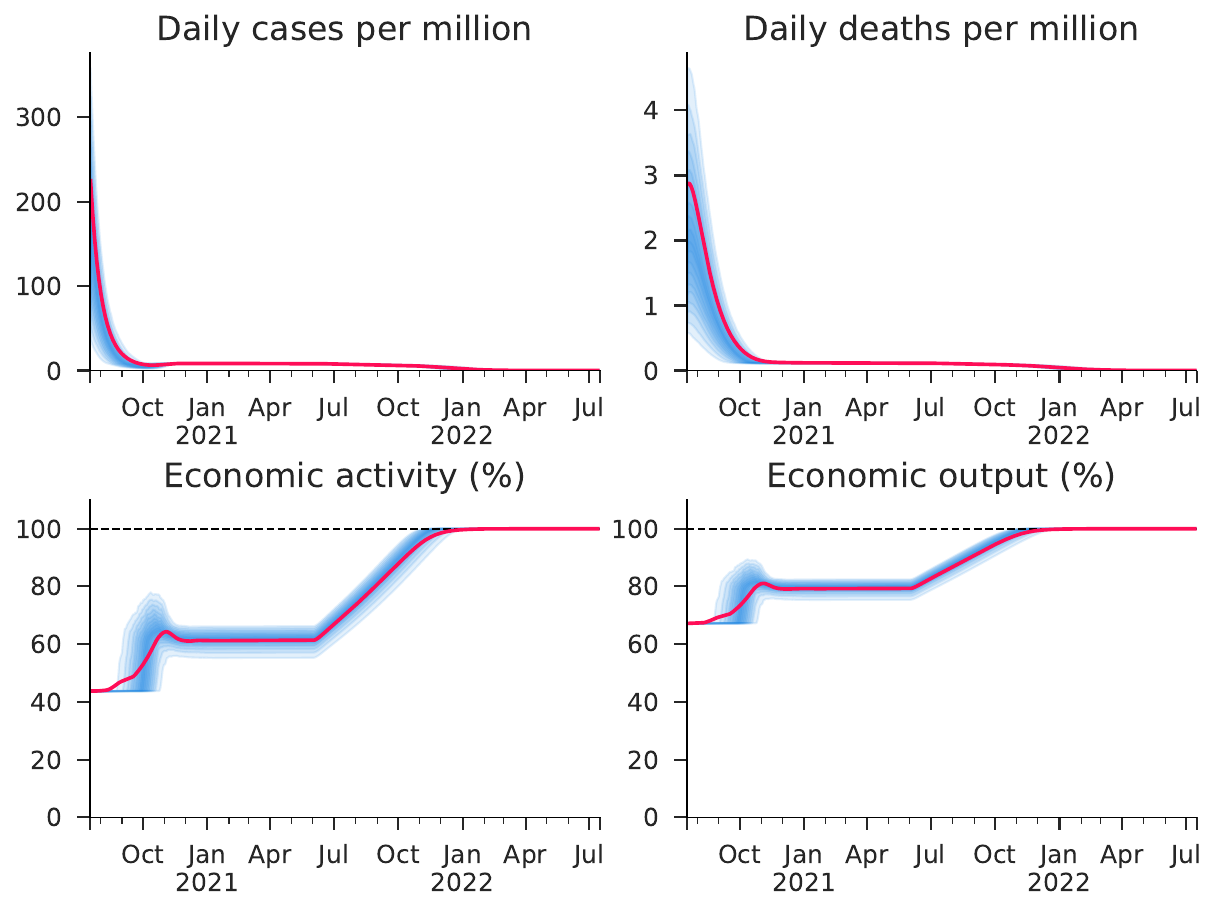}
    \caption{Optimal adaptive lockdown with vaccine 1 year away.}
    \label{fig:optadaptlock_vax1y}
\end{figure}

In particular we compute an optimal policy of the form of \eqref{eq:optpolicyprob}, tailored to the case of a vaccine 1 year away, and beginning from the observed state of the world in mid-July 2020.  
\autoref{fig:optadaptlock_vax1y} shows the trajectories of the four benchmark variables under the optimal adaptive NPI policy when a vaccine is one year away.\footnote{We also considered the cases of 6 months and 2 years, which turned out to give results very similar to the 1 year case presented here. However, if it takes more than 4 years for a vaccine to arrive the optimal policy again resembles the laissez-faire equilibrium.}   
The resulting trajectories of cases and activity are particularly informative for understanding the form of the policy.\footnote{When the vaccine arrives, we model it as being administered to 2 million people per day, or 0.6 percent of the population.  In the model, vaccinated agents are a mass of \(S\td\) shifted directly to \(R\td\).}  
In light of the initial stock of infectious agents, the policy begins by imposing a considerable reduction in activity to levels similar to our estimated effects of stay-at-home orders (activity slightly above 40 percent, output just below 70 percent).  
This phase remains in effect in most locations for 2 to 4 months, depending on the initial stock of infectious agents, during which time the activity reduction is great enough to dramatically reduce new infections and the stock of infectious agents.
After this period, the optimal NPI loosens modestly, but is not removed entirely.  
The policy then prescribes a roughly 8 month period of reduced activity at levels similar to our estimated restaurants-takeout-only orders (activity around 60 percent, output around 80 percent).   
Having already reduced the stock of infectious agents, the somewhat higher activity level remains sufficient to keep the effective reproduction number around 1, and thus keep cases from-reentering an exponential growth regime.
Once the vaccine arrives and is distributed, the measure of Susceptibles begins to fall and NPIs can be safely progressively weakened while maintaining low levels of new infections.  
Finally, activity and output would return to normal in the fourth quarter of 2021.


\subsection{The Costs of the Pandemic}

\autoref{fig:welfarecost} shows the welfare cost of the pandemic, from July 2020 onward, under a variety of scenarios. 
Lower values in the figure indicate higher welfare. 
So that one might see the way different policies tradeoff the two welfare channels, we decompose the total cost along each path into the cost of infections and cost of lost output.  
Relative to the other policy options we consider, the key feature of the laissez-faire policy is that its relatively modest output losses are outweighed by its large disease cost, provided that a vaccine is on the horizon or eradication is possible.  
Conversely, the other options typically entail somewhat larger output costs, particularly early on, but with the benefit of large reductions in total infections.

\begin{figure}[t]
    \centering
    \includegraphics[width=0.92\textwidth]{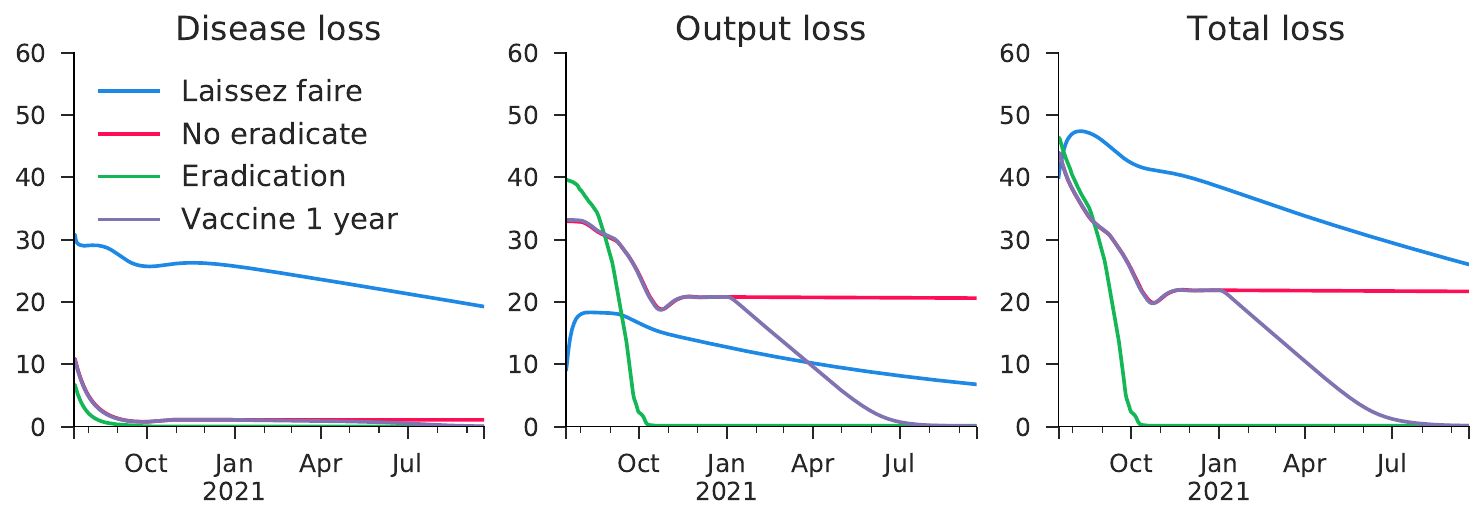}
    \caption{Period by period costs of the pandemic under various policy scenarios.}
    \label{fig:welfarecost}
\end{figure}


\section{Conclusion}
\label{sec:conclusion}

In this paper we developed and estimated a spatial model of the joint evolution of economic variables and the spread of an epidemic across U.S. counties using high-frequency granular data. The model predicts a significant endogenous reduction in economic activity by agents in response to the spread of the virus --- highlighting the importance of using an integrate-assessment model for evaluating policy responses to epidemics.
Absent pharmaceutical advances to vaccinate against or treat the virus, our estimated model predicts a protracted march towards herd immunity. Infections and deaths continue to increase roughly linearly for the next 3 to 5 years, with an ultimate death toll of around 1,250,000.

Lockdowns effectively lower the spread of the virus at the cost of lower economic output --- giving rise to a pandemic possibility frontier. From a welfare perspective, however, lockdowns are desirable only under two conditions. First, if the policymaker expects a vaccine or cure to arrive within the next 2-3 years, lives saved from locking down outweigh the output costs. Second, suppose there is the possibility to curtail community transmission (either through eradication or management through testing, tracing, and quarantine). In that case, a strict lockdown for a few months can affect a quick return to pre-pandemic levels of activity. Policies that lower the rate of transmission (e.g., mask wearing), however, can successfully raise output and save lives.

One caveat to our conclusions is that we have assumed that individuals receive permanent immunity to the virus after recovering. Recent medical evidence suggests that reinfection is rare but possible (albeit in a milder form). While beyond the scope of the current analysis, temporary immunity would likely increase the welfare benefits of both finding a vaccine and curtailing community transmission through NPIs.


\clearpage



\bibliographystyle{chicago}
\bibliography{covid} 

\clearpage

\appendix


\renewcommand\thefigure{\thesection.\arabic{figure}}    
\setcounter{figure}{0}

\section{County-Level Data at Daily Frequency}
\label{sec:dataappend}

For the empirical analysis and model estimation, we build a panel of daily county-level observations on two economic variables, two measures of virus spread, and NPIs.
While the economic variables can be measured at even finer geographic granularity, we aggregate each of them to the county level to match the finest available granularity of the epidemiological data. We use daily observations from January 1, 2020 through May 24, 2020.

\paragraph{Epidemiological Data.} On the epidemiological side, we obtain daily county-level data on confirmed and probable \covid{} cases and deaths from the \textit{The New York Times}.\footnote{Data from \textit{The New York Times}, based on reports from state and local health agencies. See \url{https://www.nytimes.com/interactive/2020/us/coronavirus-us-cases.html}.} 
The \textit{The New York Times} collects this data on a recurring basis from local public health authorities.\footnote{Note that it is unclear in the mortality data if the date corresponds to when the death occurred or when the death was officially reported. 
In our model estimation we account for this uncertainty by allowing for  measurement error in recorded mortality outcomes.}
This is with one exception: the NYT data only records deaths for NYC as a whole, although the Bronx and Queens are distinct counties. 
We collect the county-level observations directly from New York City's website and add them to our data set.

One shortcoming of the epidemiological data is that both cases and deaths are subject to reporting delays, i.e. a discrepancy between when the medical event occurred for the individual and the day to which the event is ascribed or reported in the data. 
A few states publicly report the time series of assigned days of death, in addition to the more commonly available ``new deaths.''
We use the observed gap between the two series in some of these states to inform our our choice of reporting delay during estimation.
\autoref{fig:deathdelay} shows how the average reporting delay has evolved over time in three such states. 

\begin{figure}[t]
    \centering
    \caption{Average Delay in \covid{} Death Reporting}
    \includegraphics[width=0.6\textwidth]{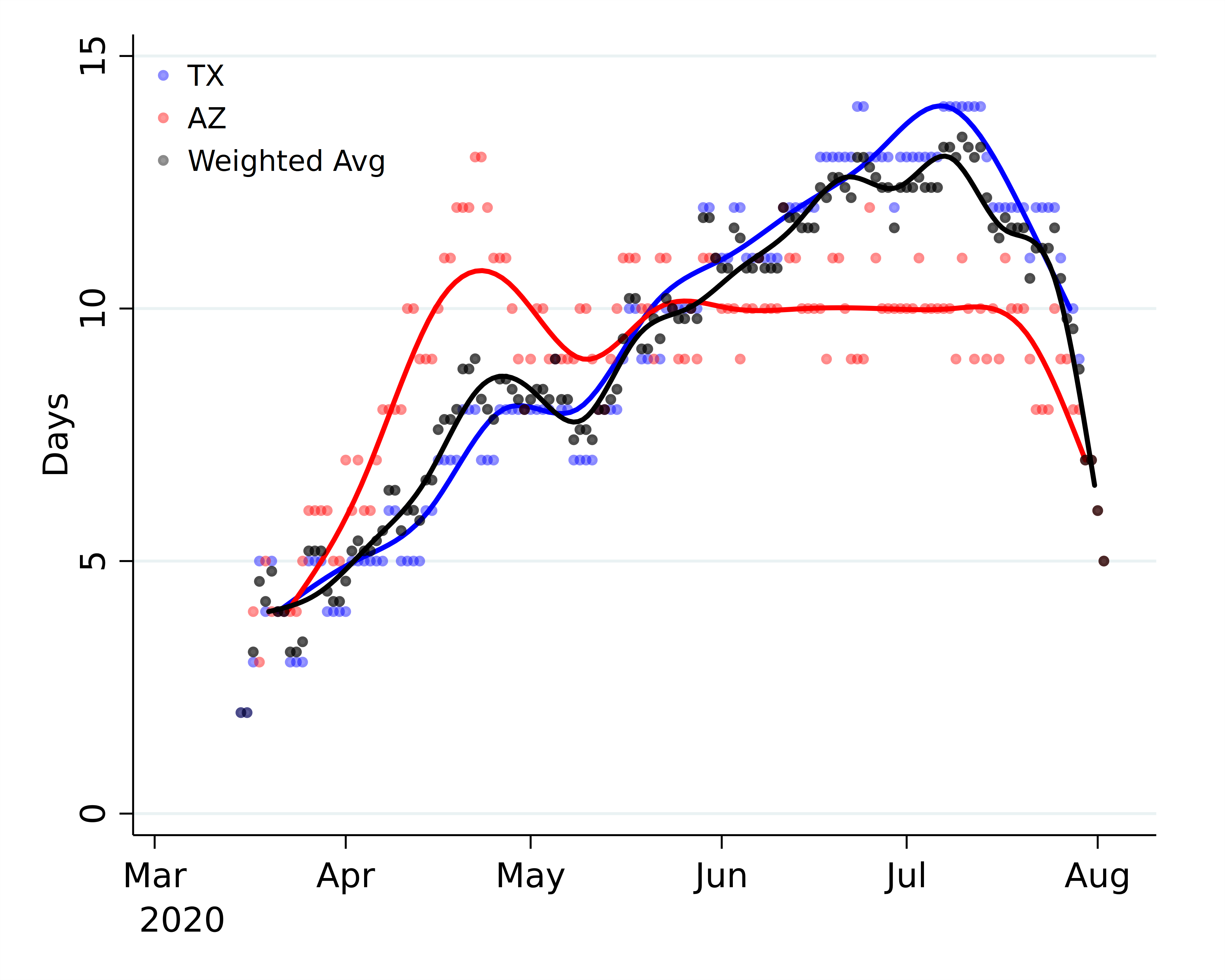}
    \label{fig:deathdelay}
\end{figure}

\paragraph{Economic Data.} Our high-frequency economic variables consist of detailed data on both foot traffic and hours worked. 
We obtain data on foot traffic to over 3.6 million consumer ``Points-of-Interest'' (POIs), including grocery stores, restaurants, hospitals, and an array of other public and private establishments from SafeGraph Inc. 
SafeGraph collects the POI data as anonymized GPS location information from a large panel of more than 45 million smartphones in the U.S. 
We view this data as a highly accurate measure of a variable we will refer to throughout the paper as ``economic activity'' or just ``activity.'' 
Being derived from a panel of smartphones, the Safegraph data obviously does not represent a random sample of the U.S. population. However, \cite{squire:2019} investigated potential sources of sampling bias in the SafeGraph data. 
The data are well representative at the county level (when compared to U.S. Census data from the American Community Survey) along a number of demographic dimensions, such as educational attainment and household income, although very low-income individuals (less than $\$10,000$ annual income) tend to be under-represented in the data.\footnote{The SafeGraph analysis does not address bias in age sampling, but we suspect that given that the panel is based on smartphone users that very old households would tend to be under-represented.}

We obtain hours worked from a large worker-firm matched data set provided by Homebase. 
Homebase is a scheduling and time clock software provider. 
The data consist of daily time-card records at the establishment level. 
The time-cards include information on hours worked and wages. 
The data are anonymized, but do specify the zip code and industry of the establishment. 
The data are provided in real-time when the time cards are reported by the business that uses the software. 
As we detail in the appendix, small establishments and establishments in the food services and retail sectors tend to be overrepresented in the Homebase data. 
We refer the interested reader to \cite{kurmann} for a more thorough discussion of the representativeness of the Homebase data and how it compares to the Quarterly Census of Employment and Wages conducted by the U.S. Census. 

\paragraph{Policy.} Finally, to accurately estimate the role of NPIs, it is important to know when NPIs were in place in each location.
To do so we use data from multiple sources, including other researchers and our own collection efforts using local and national news sources.
A challenging aspect of the NPI accounting is that NPIs orders  have been separately enacted at both state and local levels, with local orders often leading state-level action in the early phase of the pandemic.  
Because our model operates at the county level, we must document NPI orders from both state and local levels of government so that the effective policy environment for a given county is correctly characterized at each date. 

To get a sense of how much the effective policy environments differ from those one would infer from simply using state-level orders, the left panel of \autoref{fig:stayathome_state_vs_local} shows, at each date, the share of our sample's population affected by a ``stay-at-home'' order and whether or not the order was uniform across the state.
One can see considerable discrepancies between the population shares subject to orders from any level of government and population shares subject to "within-state homogeneous orders."
The right panel focuses on the discrepancies, showing the gap between the left panel's two lines at each date. 
One can see that, on multiple dates, the policy environment faced by nearly 25 percent of the population would not be correctly characterized by a researcher recording only state-level NPIs. 

\begin{figure}[t]
    \centering
    \caption{Population Shares Under ``Stay-at-home'' Orders -- Statewide vs. Locality-specific}
    \includegraphics[width=0.8\textwidth] {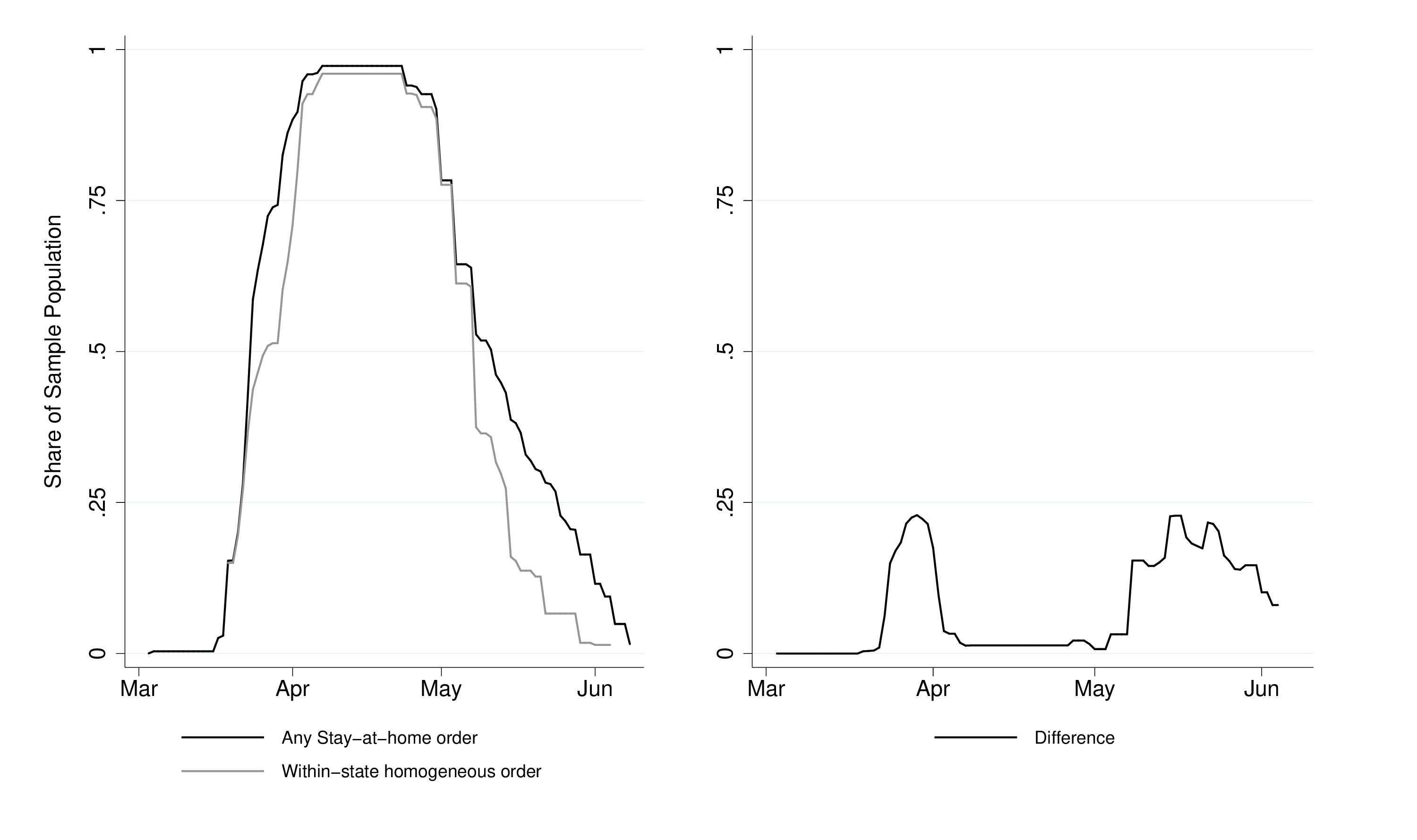}
    \label{fig:stayathome_state_vs_local}
\end{figure}


\begin{table}[]
    \centering
\begin{tabular}{lllllll} \hline
NAICS & Firms HB & Firms QC & Wages HB & Wages QC & Weekly Wage HB & Weekly Wage QC \\ \hline
21 &     & 0.00 &     & 0.01 &       & 2,406 \\
22 &     & 0.00 &     & 0.01 &       & 2,641 \\
23 &     & 0.08 &     & 0.06 &       & 1,193 \\
31 &     & 0.04 &     & 0.12 &       & 1,419 \\
42 &     & 0.06 &     & 0.06 &       & 1,612 \\
44 & 0.14 & 0.11 & 0.15 & 0.07 & 331    & 636 \\
48 & 0.01 & 0.03 & 0.01 & 0.04 & 363    & 1,086 \\
51 &     & 0.02 &     & 0.05 &       & 2,506 \\
52 &     & 0.05 &     & 0.12 &       & 2,867 \\
53 &     & 0.04 &     & 0.02 &       & 1,256 \\
54 & 0.05 & 0.13 & 0.04 & 0.12 & 336    & 1,949 \\
55 &     & 0.01 &     & 0.05 &       & 3,001 \\
56 &     & 0.06 &     & 0.05 &       & 834 \\
61 & 0.04 & 0.01 & 0.02 & 0.02 & 158    & 973 \\
62 & 0.08 & 0.17 & 0.06 & 0.13 & 282    & 963 \\
71 & 0.02 & 0.02 & 0.02 & 0.01 & 181    & 712 \\
72 & 0.37 & 0.07 & 0.46 & 0.04 & 205    & 420 \\
81 & 0.04 & 0.09 & 0.02 & 0.02 & 313    & 758 \\
99 & 0.25 & 0.02 & 0.22 & 0.00 & 242    & 1,070 \\ \hline
\end{tabular}    \caption{Comparison of Homebase and Quarterly Census of Employment and Wages}
    \label{tab:my_label}
\end{table}


\newpage

\section{Econ-SAIRD Model Details}
\label{sec:saird_model_details}

\subsection{Epidemiological Environment}

\paragraph{Epidemiological State Transitions.} 
As in \econsir{}, the \(S\) agents contract the virus at rate \(n\td\).
The \(S\) category has no inflows from other states and hence the measure of Susceptibles decreases over time according to 
\begin{align}
     \dot{S}\td &= - n\td S\td.
     \label{eq:delta_S}
\end{align}

We now assume that newly infected agents transition to the asymptomatic state.
This feature of the model is consistent with reports that, when initially infected, agents are often asymptomatic but able to spread the virus for multiple days prior to developing symptoms. 
Asymptomatic agents transition to \(I\) at rate \(\lambda\).
Furthermore, since many virus carriers never exhibit symptoms at all, \citep{russell2020estimating}, we assume that agents in the \(A\) state can recover without ever developing symptoms, transitioning to \(R\) at rate \(\gamma\).
The measure \(A\td\) thus changes according to 
\begin{align}
    \dot{A}\td 
    &= n\td S\td - \lambda A\td - \gamma A\td
    \label{eq:delta_E}
\end{align}

The measure \(I\td\) increases with the inflow of \(\lambda A\td\). 
As before, \(I\) agents recover at rate \(\delta\). 
We now also assume that \(I\) agents die at rate \(\kappa.\)
\(I\td\) thus evolves according to
\begin{align}
    \dot{I}\td 
    &= \lambda A\td - \delta I\td - \kappa I\td
    \label{eq:delta_I}
\end{align}

Lastly, the \(R\) and \(D\) states are absorbing states, with their measures only increasing over time, as they both have inflows but no outflows.\footnote{We are implicitly assuming that past infection to \covid{} provides future immunity, at least over the time horizon relevant to the model estimation and usage.}
With asymptomatic and infected agents recovering at rates \(\gamma\) and \(\delta\) respectively, \(R\td\) increases at rate
\begin{align}
    \dot{R}\td &= \gamma A\td + \delta I\td.
    \label{eq:delta_R}
\end{align}
\(D\td\) grows with the deaths of \(I\) agents, and hence
\begin{align}
    \dot{D}\td &= \kappa I\td.
    \label{eq:delta_D}
\end{align}
Given the parameters governing epidemiological state transition rates, \(\lambda, \delta, \kappa,\) and \(\gamma\), as well as the  infection rate of Susceptibles, \(n\td\), 
equations \eqref{eq:delta_S}--\eqref{eq:delta_D} characterize the epidemiological dynamics in the economic-\sirextend{} model.

\subsection{Activity and Output}

\paragraph{Economic Activity.}  The rate of activity is dictated by the recovered agents and the active shares of Susceptibles, Asymptomatics, and Infecteds.
One can then characterize the rate of activity either constructively, as in equation \eqref{eq:totalactivity_con},  
or as a deviation from the full activity rate,
as in equation \eqref{eq:total_activity_rate}:
\begin{align}
    a\td 
    &= 
    a_{S}\td 
    \big[ S\td + A\td \big] 
    +
    a_I I\td + R\td 
    \label{eq:totalactivity_con}
    \\
    &=
    1 -
    \!\!\!\!\!\!
    \underbrace{F(\bar{z}\td)}_{
        \substack{
            \text{rate of foregone} \\ 
            \text{excursions by \(S\) \& \(A\)}
        }
    }
    \!\!\!\!\!\!\!\!
    \cdot
    \big[ S\td+A\td \big]
    - 
    \underbrace{
        F(\bar{z}_{I}) I\td 
        - D\td
    }_{
        \substack{
            \text{missing excursions by} \\ 
            \text{quarantined and deceased}
        }
    }
    \label{eq:total_activity_rate}
\end{align}

\paragraph{Economic Output.}  Similarly, output can be equivalently characterized either constructively or in terms of shortfall from output when there are no infections: 
\begin{align}
    \ecoutput\td
    \ = \ &
    \overbrace{
        \mathbb{E}[z\td] 
    }^{
        \substack{
            \text{``no virus''} \\ \text{output rate}
        }
    }
    - 
    ~
    \big[ S\td + A\td \big]
    \cdot
    \!
    \!
    \overbrace{
        F(\bar{z}\td)
    }^{
        \substack{
            \text{\(S\td\) foregone} \\ 
            \text{excursion rate}
        }
    }
    \!\!
    \cdot
    \overbrace{
        \mathbb{E}[z\td | z\td \leq \bar{z}\td]
    }^{
        \substack{
            \text{value of foregone} \\
            \text{excursion}
        }
    } 
    \\
    & - 
    \underbrace{
        F(\bar{z}_{I}) I\td
        \mathbb{E}[z\td | z\td\le\bar{z}_I]
    }_{
        \substack{
             \text{foregone value by quarantined \(I\)}
        }
    }
    \ - \ D\td \cdot \mathbb{E}[z\td]
    \notag
\end{align}

\subsection{Epidemiological Statistics}

Once a person enters the Asymptomatic state, their transitions are stochastic but nonetheless exogenously determined by parameters. 
This makes it relatively straightforward to calculate standard epidemiological statistics. 
For instance, the case fatality rate (CFR), the number of deaths per confirmed cases, which in our model corresponds to the probability of dying conditional on reaching the Infected state is
\begin{align}
    CFR = \frac{\kappa}{\delta+\kappa}
\end{align}
Similarly, the infection fatality rate (IFR), the number of deaths per total cases, which (somewhat confusingly) corresponds to the probability of dying conditional on reaching the Asymptomatic state is
\begin{align}
    IFR = \frac{\lambda}{\gamma+\lambda} \cdot CFR = \frac{\lambda}{\gamma+\lambda} \cdot \frac{\kappa}{\delta+\kappa}
\end{align}
Next we turn to the various reproductive rates. Because there are many time varying quantities in our model (both from endogenous and exogenous sources), it is difficult to express in closed form how many additional individuals an infected person would expect to infect. However, we can answer that same question assuming conditions at that time prevailed in perpetuity. In this case we find
\begin{align}
    \mathcal{R}_e(t) = \mathcal{R}_0 \cdot e(t) \cdot a_S(t) S(t)
\end{align}
where we again have $\mathcal{R}_0 = \beta/\delta$. Notice that this figure reflects the fundamental reproductive number $\mathcal{R}_0$ as well as contributions from the exursions of Asymptomatic and Infected $e(t)$, the activity of Susceptibles $a_S(t)$, and the share of Susceptibles $S(t)$.


\section{Additional Estimation Details}

\subsection{Initial State}
\label{sec:initialstate}

We do not observe the full underlying state of each county at the start of their outbreak.
From the data, we know the initial number of confirmed cases (infected people) and deaths (generally zero), from which we can set \(I_0\) and \(D_0\).  
However, we do not know the number of asymptomatic or recovered individuals. 
We assume that \(R_{0} = 0\) and that \(A_0\) is proportional to \(I_0\), with $\digamma_0 \equiv A_0/I_0$ denoting the ratio between the two. 

Identifying $\digamma_0$ from the data is challenging, so, 
rather than treat it as an additional free parameter, we derive a theory-driven estimate of it as a function of other model parameters.
Letting \(\digamma\td \equiv A\td / I\td\), the law of motion for $\digamma\td$ is
\begin{align}
    \frac{\dot{\digamma}\td}{\digamma\td} 
    &= 
    n\td 
    \frac{S\td}{A\td} 
    - \lambda - \gamma - \lambda 
    \frac{A\td}{I\td} 
    + \delta + \kappa 
    \\
    &= \infratexc S\td 
    \left( a_A\td + a_I \frac{1}{\digamma} \right) 
    - \lambda - \gamma - 
    \lambda \digamma\td 
    + \delta + \kappa
\end{align}
In the early stages of an outbreak, 
$S\td \approx 1$ and 
$a_A\td \approx a_S\td \approx 1$. 
Further assuming $a_I \approx 0$ and imposing stationarity of \(\digamma\td\) ratio yields the estimate
\begin{align}
    \digamma_0 
    = \frac{\infratexc+\delta+\kappa-\lambda-\gamma}{\lambda}
    .
\end{align}

\subsection{Parameter Estimates}

\autoref{tab:parameterestimates} gives the estimates of the model parameters. 

\begin{table}
    \centering
    \begin{tabular}{cl}
    \toprule
    Model Parameter & Value \\
    \midrule
    $\mu[\beta]$ & 0.6167 \\
    $\sigma[\beta]$ & 0.3533 \\
    $\mu[\cost]$ & 354.5 \\
    $\sigma[\cost]$ & 224.1 \\
    $\lambda$ & 0.0527 \\
    $\gamma$ & 0.0827 \\
    $\delta$ & 0.0460 \\
    $\kappa$ & 0.0006 \\
    $\sigma$ & 0.2893 \\
    $f_0$ & 10.024 \\
    $z_i$ & 1.7039 \\
    $\bar{z}^L_1$ & 0.8440 \\
    $\bar{z}^L_2$ & 0.9028 \\
    $\bar{z}^L_3$ & 0.9996 \\
    $\sigma_a$ & 0.1202 \\
    $\sigma_o$ & 0.2116 \\
    \bottomrule
    \end{tabular}
    \caption{Estimates of model parameters.}
    \label{tab:parameterestimates}
\end{table}

\section{A Historical Counterfactual: No May Reopening}

\begin{figure}[t]
    \centering
    \includegraphics[width=\textwidth]{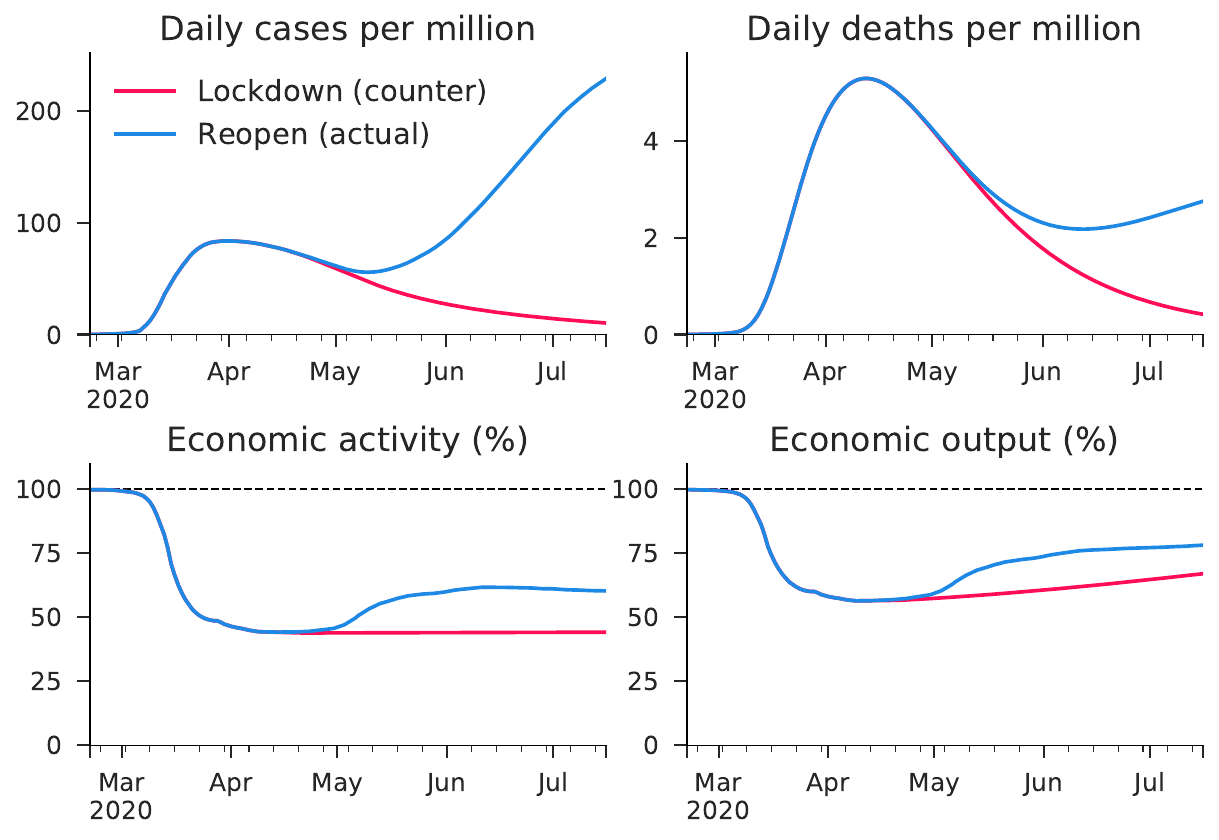}
    \caption{\textbf{No May Reopening.} Counterfactual path in which lockdown continued indefinitely versus actual path in which reopening occurred.}
    \label{fig:repopen_vs_lockdown}
\end{figure}

With the estimated model in hand, one can straightforwardly consider a variety of counterfactuals.
In \autoref{fig:repopen_vs_lockdown} we consider the possibility of continued lockdowns rather the reopenings implemented around the country in May.  


\newpage
\clearpage

\subsection{The Roles of Mitigation and Activity Reduction}
\label{sec:disturbingthetrends}

\begin{figure}[t]
    \centering
    \includegraphics[width=\textwidth]{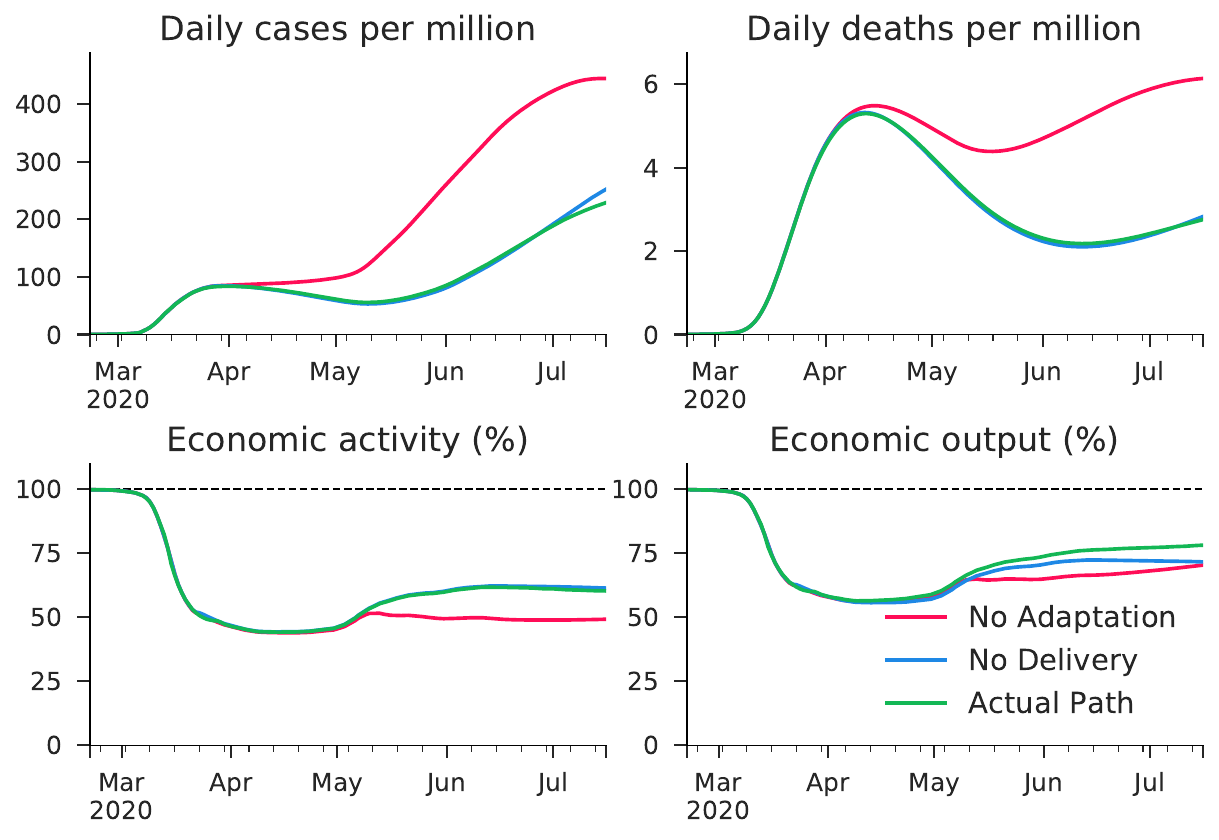}
    \caption{\textbf{Decomposing the Role of Technological Trends.}}
    \label{fig:trenddecomp}
\end{figure}

The estimated model allows for the possibility that agents have taken two types of precautions over time. 
First, agents have adopted measures to reduce virus transmission per unit of activity, modeled via declining \(\infratexc_t\).
The most salient real world example of such \(\infratexc\)-reducing measures is perhaps the wearing of face masks, but this mechanism in the model should be interpreted as  also encompassing myriad other measures, such as the installation of plexiglass barriers at grocery stores, shifting restaurant dining outdoors, and simple six foot social distancing. 
Second, agents have increased their use of services that allow consumption with less infection risk, namely delivery options for a variety of goods, modeled via the \(z\)-substitute option. Here we consider the counterfactual in which these measures had not taken place.\footnote{This 
    is straightforward to implement by setting \(\infratexc_t = \infratexc_0\) for all \(t\).
}
\autoref{fig:trenddecomp} shows the model's predicted evolution of the main variables in the absence of these trends.

\newpage
\clearpage
\begin{center}
\Huge{ONLINE APPENDIX}
    \setcounter{page}{1}
\end{center}

\section{Additional Figures}










\begin{figure}[h!]
    \centering
    \includegraphics[width=\textwidth]{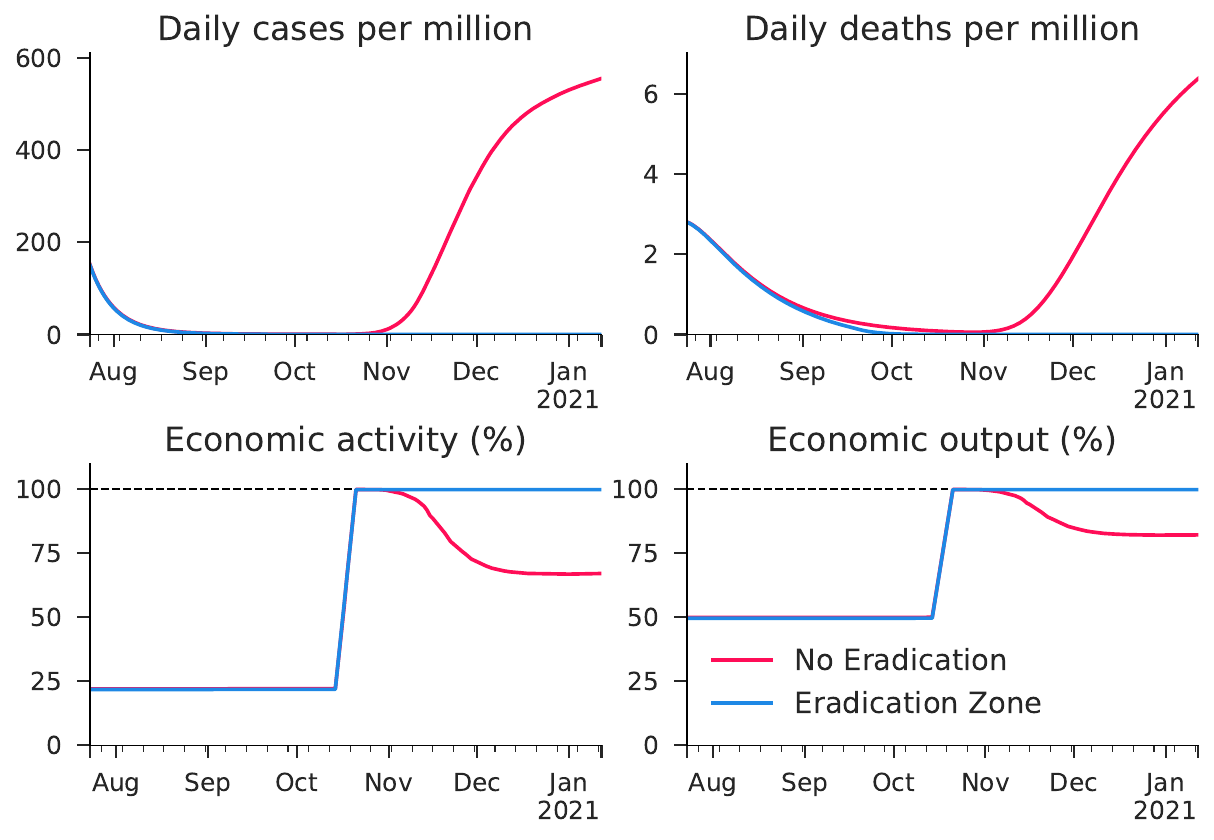}
    \caption{Strong temporary lockdown with and without eradication zone.}
    \label{fig:templockdowns}
\end{figure}



\begin{figure}[th!]
    \centering
    \includegraphics[width=\textwidth]{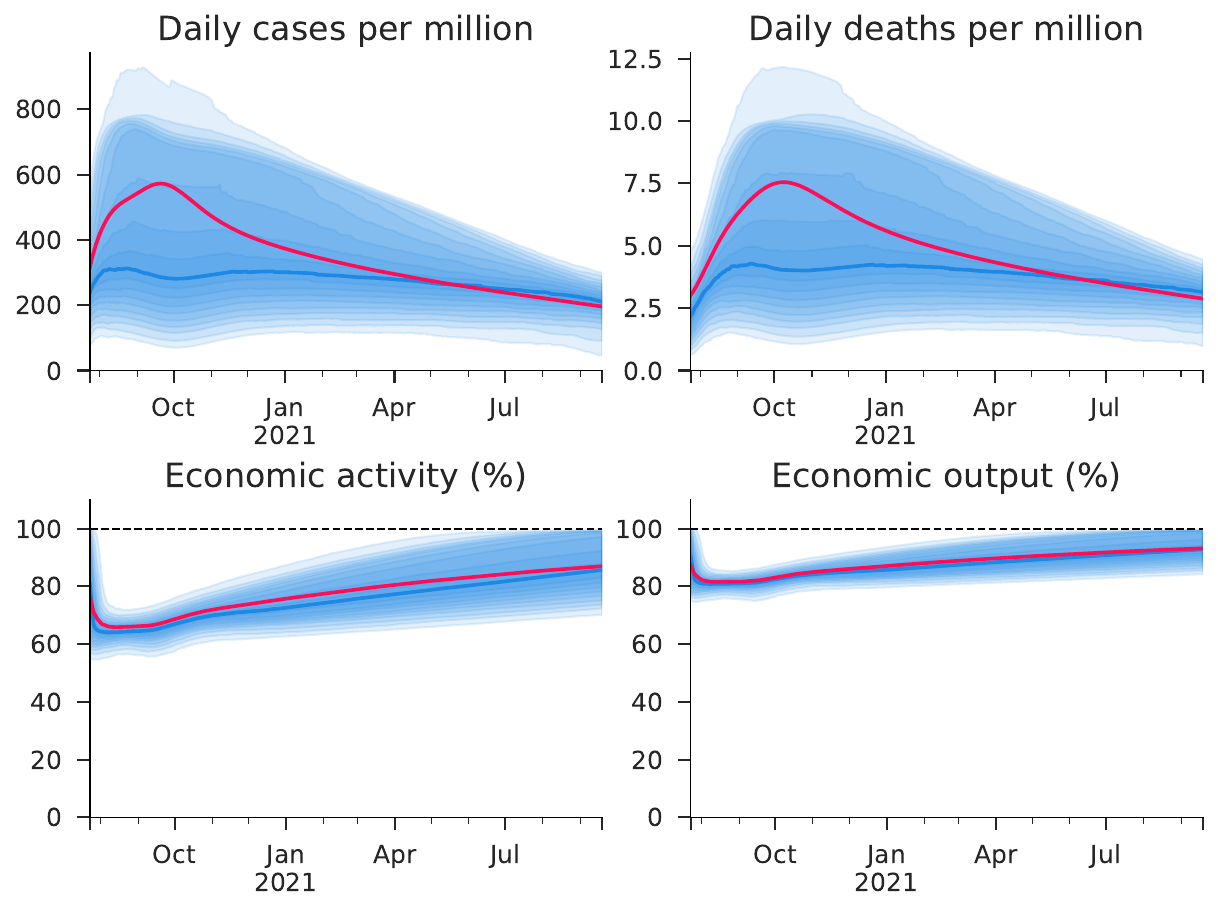}
    \caption{\textbf{Laissez-Faire until herd immunity---One year horizon.} Red lines are population-weighted means, thick blue lines are for the median county, and the blue shaded bands indicate percentiles of the cross sectional distribution of counties.}
    \label{fig:laissezfaire_short}
\end{figure}

\begin{figure}
    \centering
    \includegraphics[width=\textwidth]{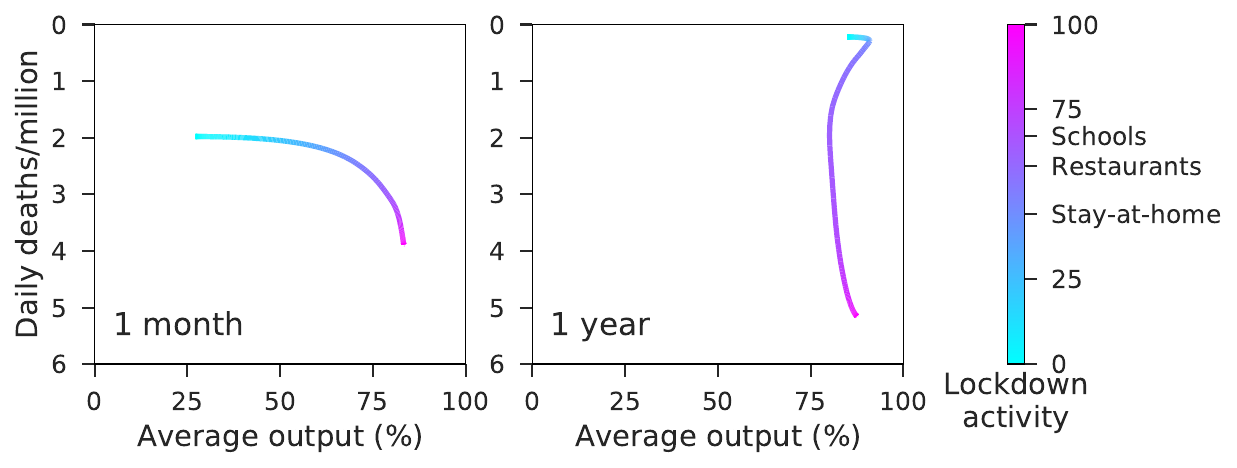}
    \caption{Forward looking PPF under adaptive lockdown policy: weak eradication zone.}
    \label{fig:ppfs_weakkz}
\end{figure}

\begin{figure}
    \centering
    \includegraphics[width=\textwidth]{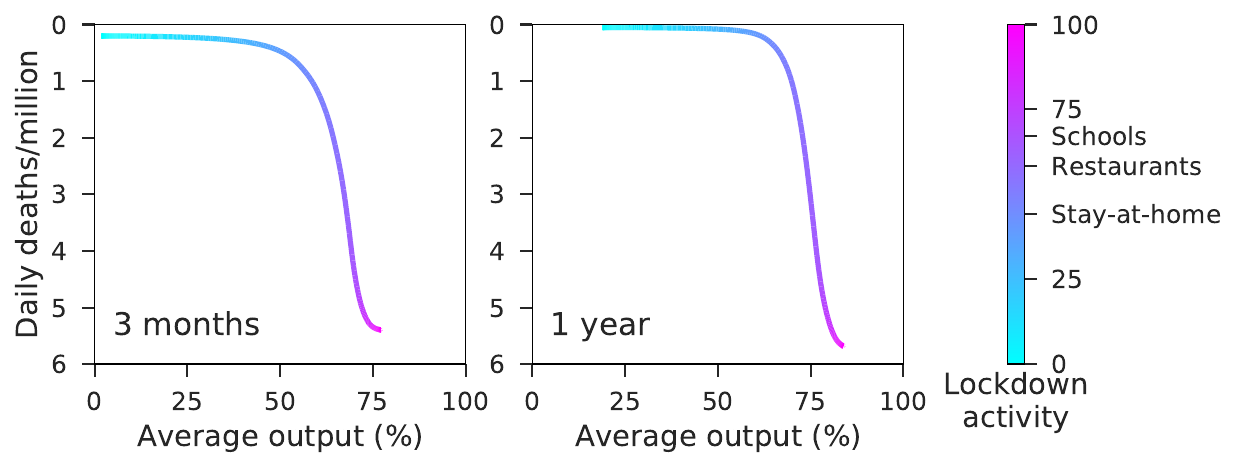}
    \includegraphics[width=\textwidth]{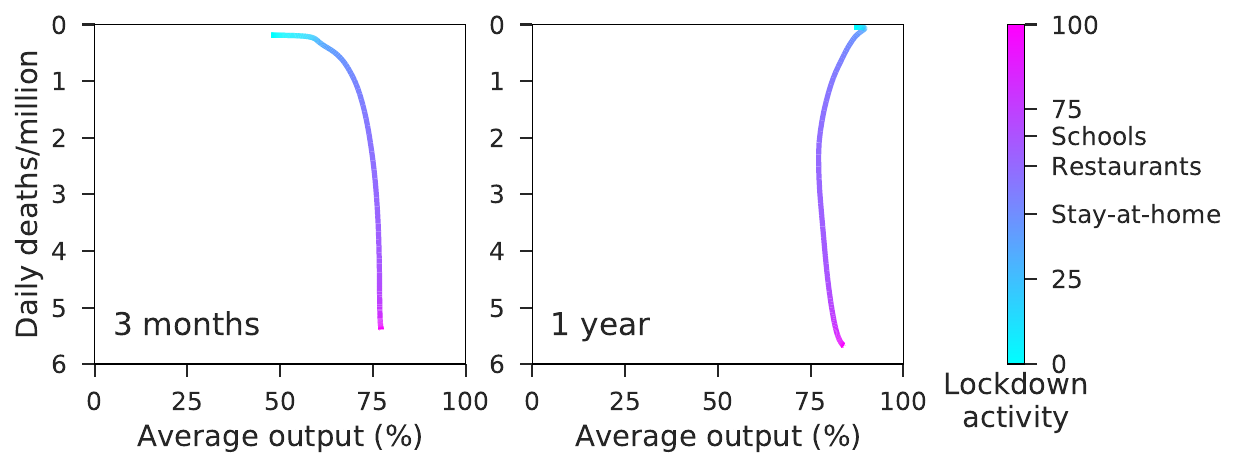}
    \includegraphics[width=\textwidth]{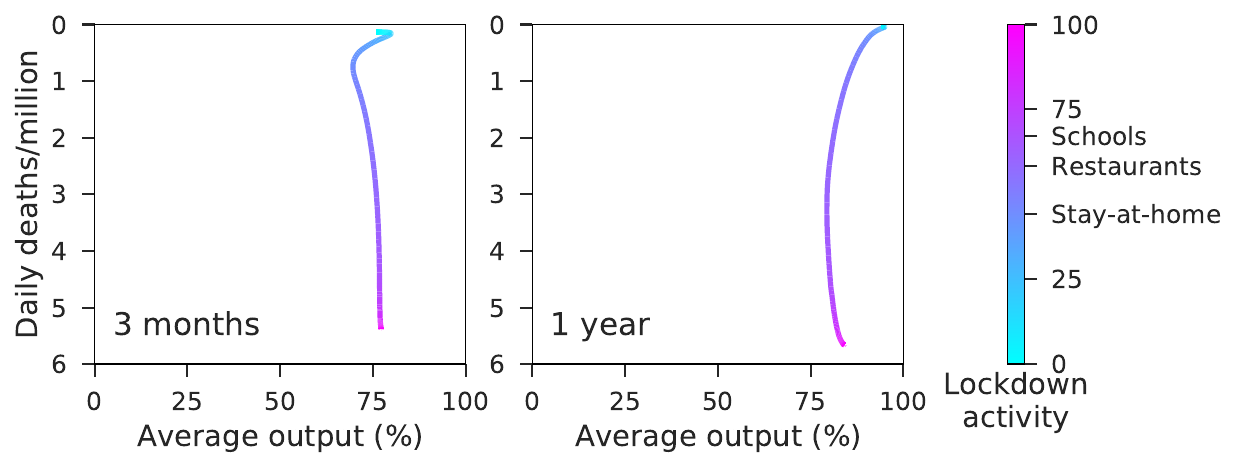}
    \caption{Counterfactual PPF under adaptive lockdown policy: (a) without eradication, (b) with weak eradication zone, and (c) with strong eradication zone.}
    \label{fig:ppfs_counter}
\end{figure}





\clearpage
\newpage






\section{State Fit}
\label{sec:statefit}

\begin{figure}[ht!]
    \captionsetup[subfigure]{labelformat=empty}
    \centering
    \begin{subfigure}{.48\textwidth}
      \centering
      \caption{Alabama}
      \includegraphics[width=1.0\textwidth]{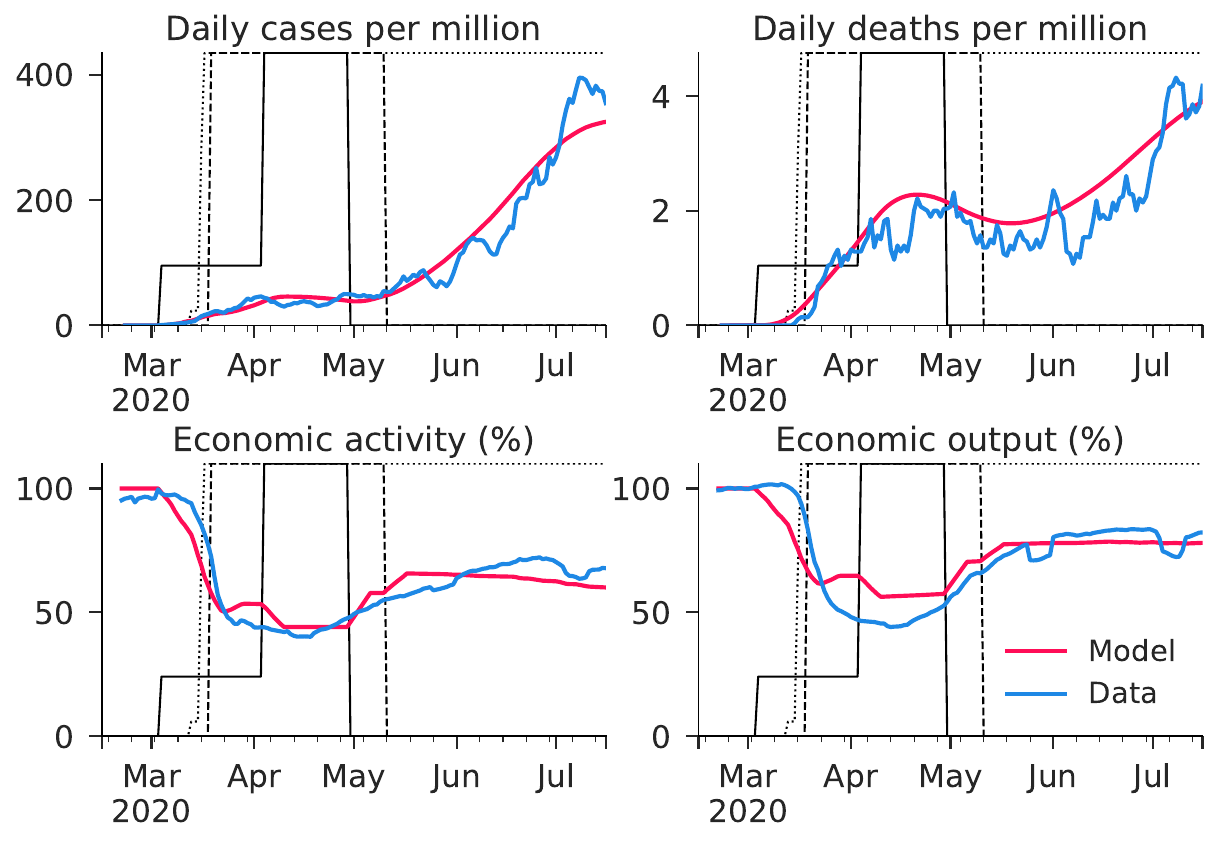}
    \end{subfigure}
    ~~
    \begin{subfigure}{.48\textwidth}
      \centering
      \caption{Alaska}
      \includegraphics[width=1.0\textwidth]{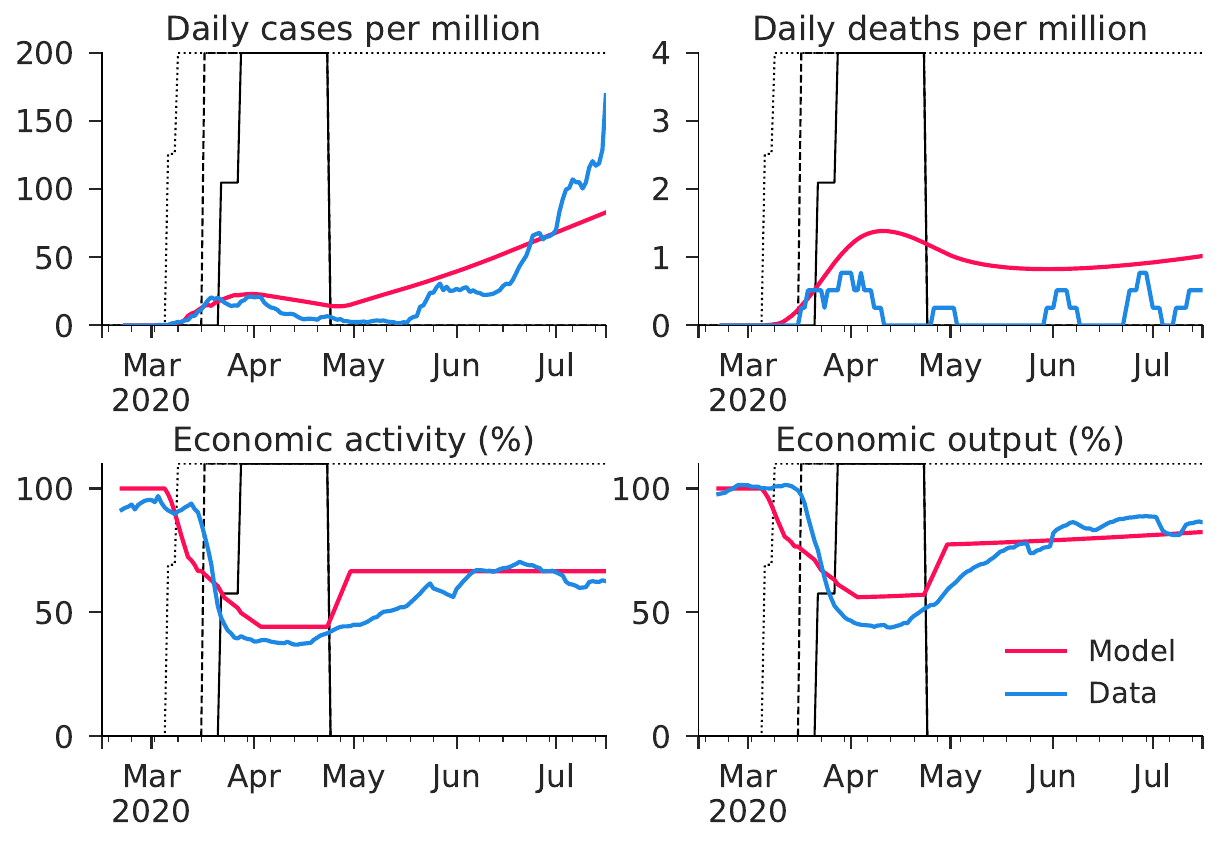}
    \end{subfigure}
    
    \bigskip

    \begin{subfigure}{.48\textwidth}
      \captionsetup[subfigure]{labelformat=empty}
      \centering
      \caption{Arizona}
      \includegraphics[width=1.0\textwidth]{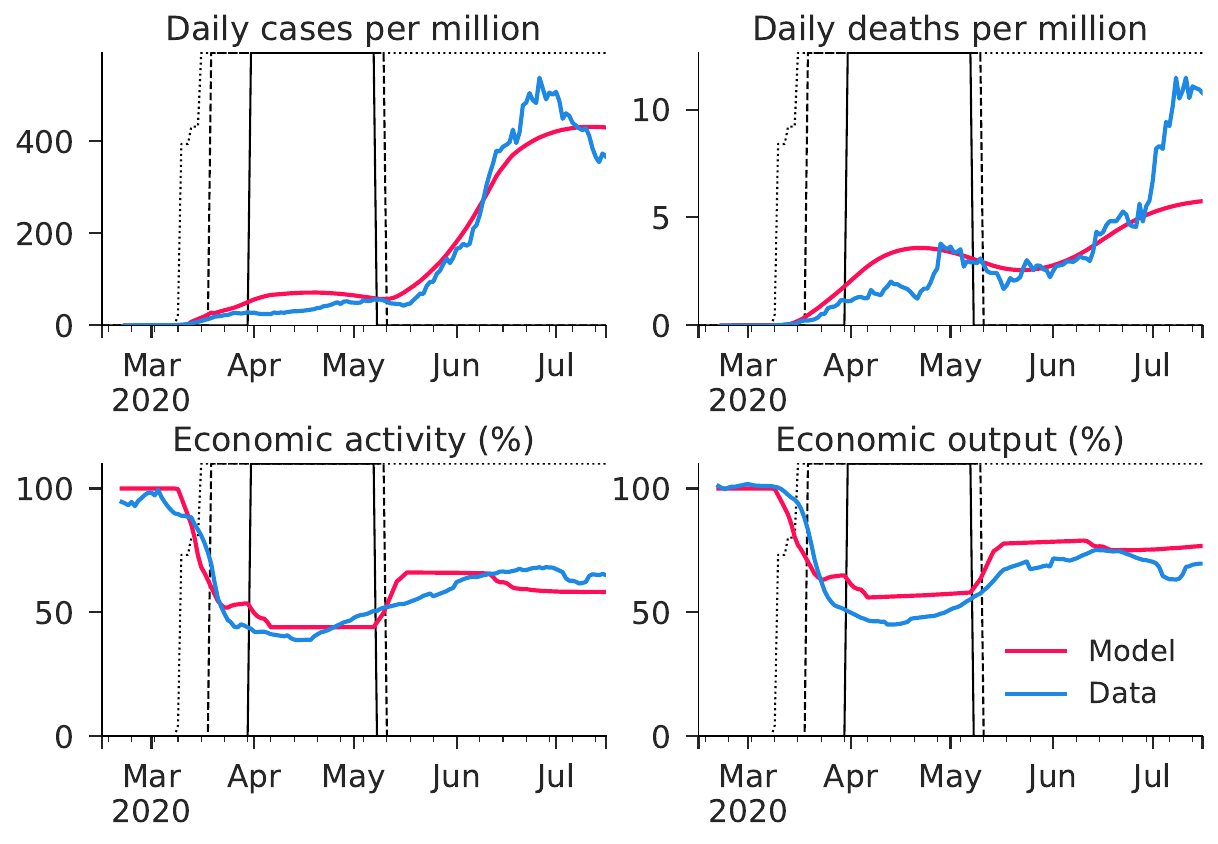}
    \end{subfigure}
    ~~
    \begin{subfigure}{.48\textwidth}
      \centering
      \caption{Arkansas}
      \includegraphics[width=1.0\textwidth]{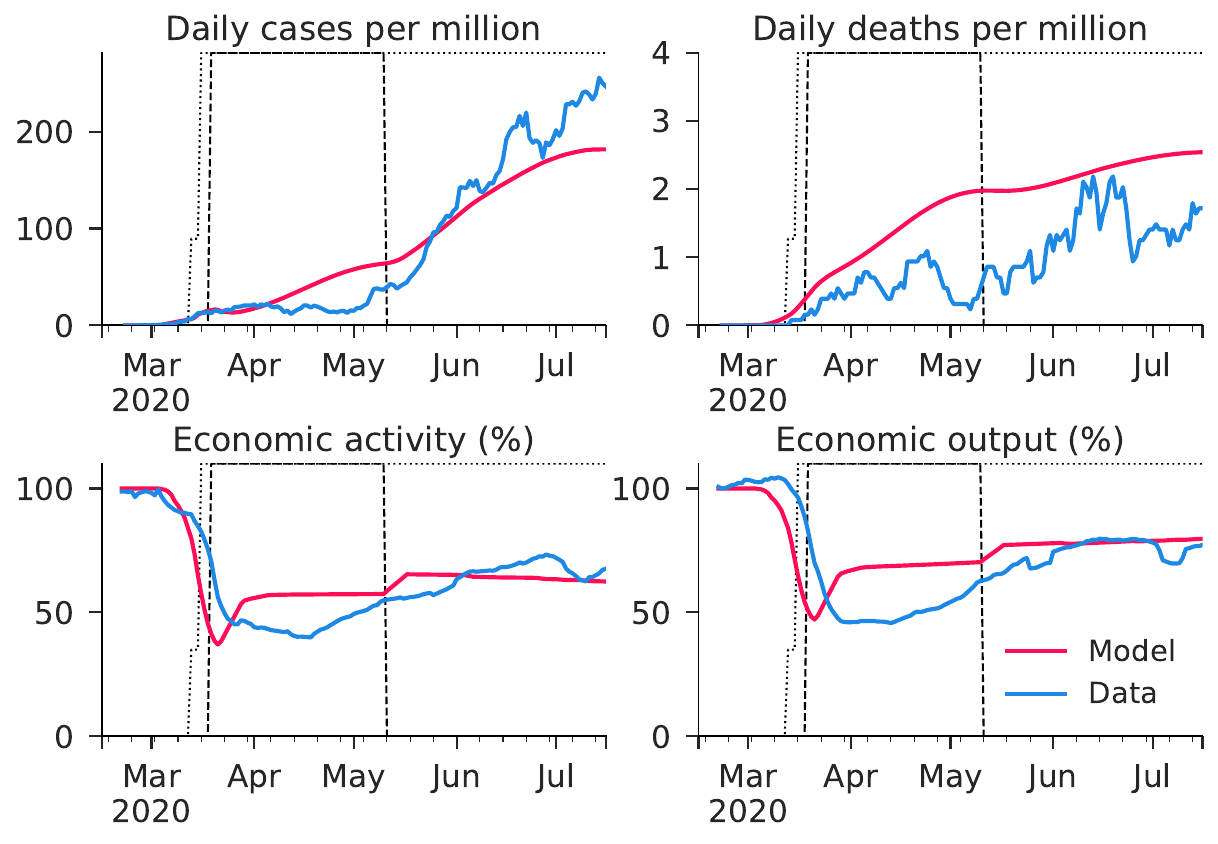}
    \end{subfigure}
    
    \bigskip

    \begin{subfigure}{.48\textwidth}
      \centering
      \caption{California}
      \includegraphics[width=1.0\textwidth]{figures/states/California.pdf}
    \end{subfigure}
    ~~
    \begin{subfigure}{.48\textwidth}
      \centering
      \caption{Colorado}
      \includegraphics[width=1.0\textwidth]{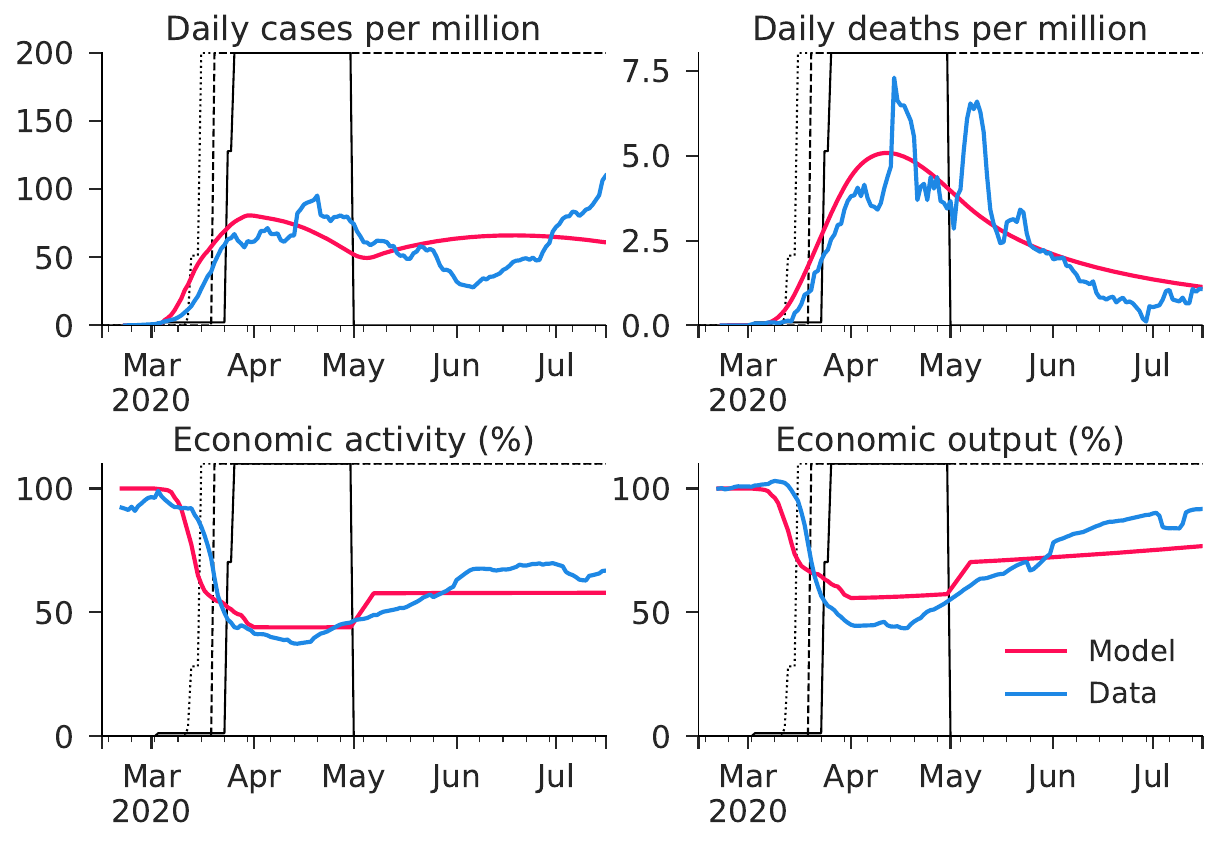}
    \end{subfigure}
\end{figure}

\begin{figure}[t]
    \captionsetup[subfigure]{labelformat=empty}
    \centering
    \begin{subfigure}{.48\textwidth}
      \centering
      \caption{Delaware}
      \includegraphics[width=1.0\textwidth]{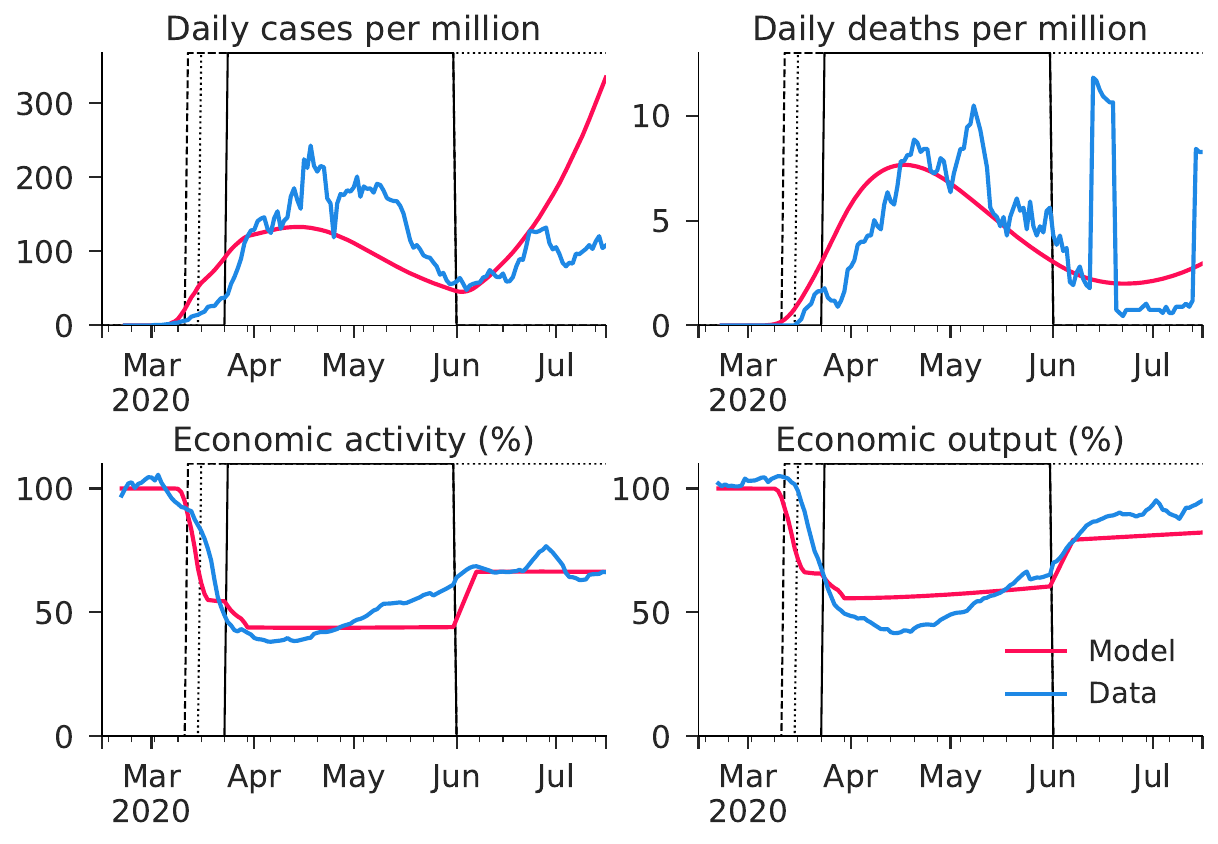}
    \end{subfigure}
    ~~
    \begin{subfigure}{.48\textwidth}
      \centering
      \caption{District of Columbia}
      \includegraphics[width=1.0\textwidth]{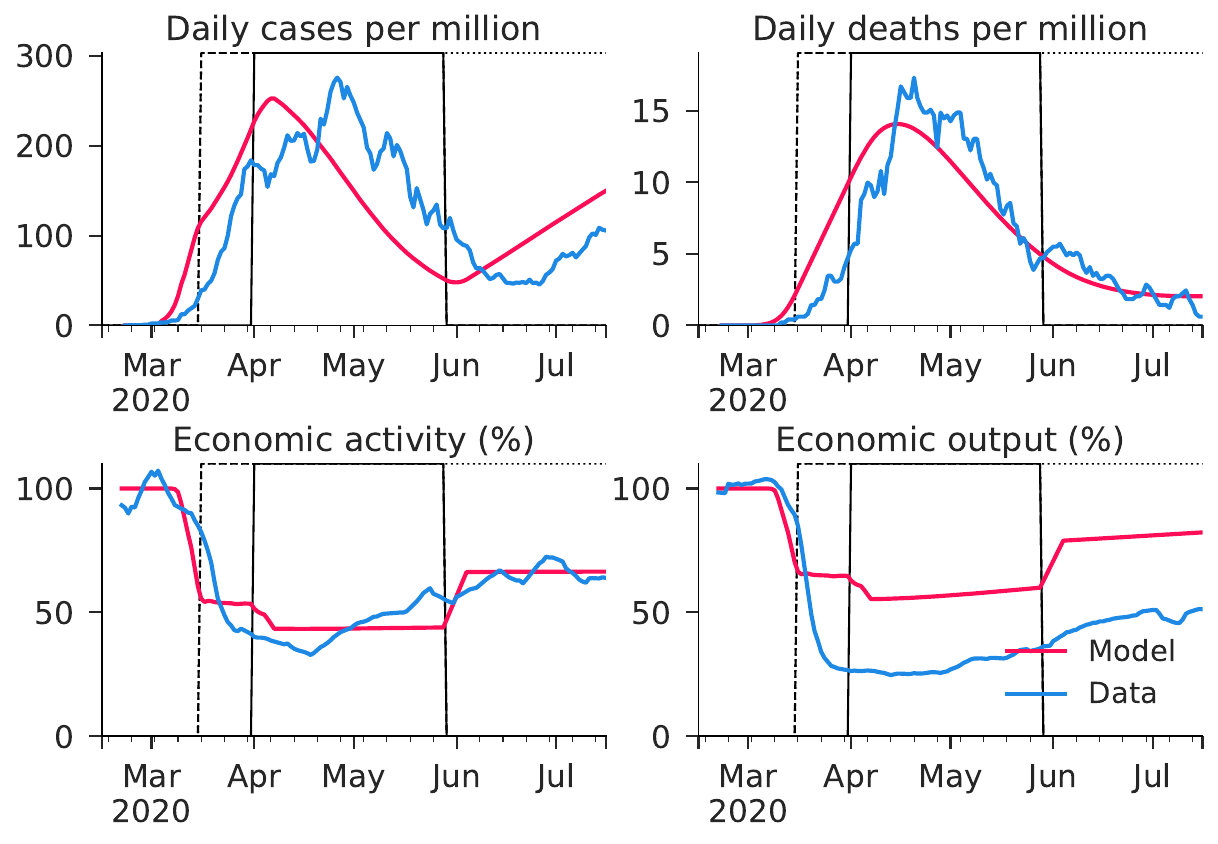}
    \end{subfigure}

    \bigskip

    \begin{subfigure}{.48\textwidth}
      \centering
      \caption{Florida}
      \includegraphics[width=1.0\textwidth]{figures/states/Florida.pdf}
    \end{subfigure}
    ~~
    \begin{subfigure}{.48\textwidth}
      \centering
      \caption{Georgia}
      \includegraphics[width=1.0\textwidth]{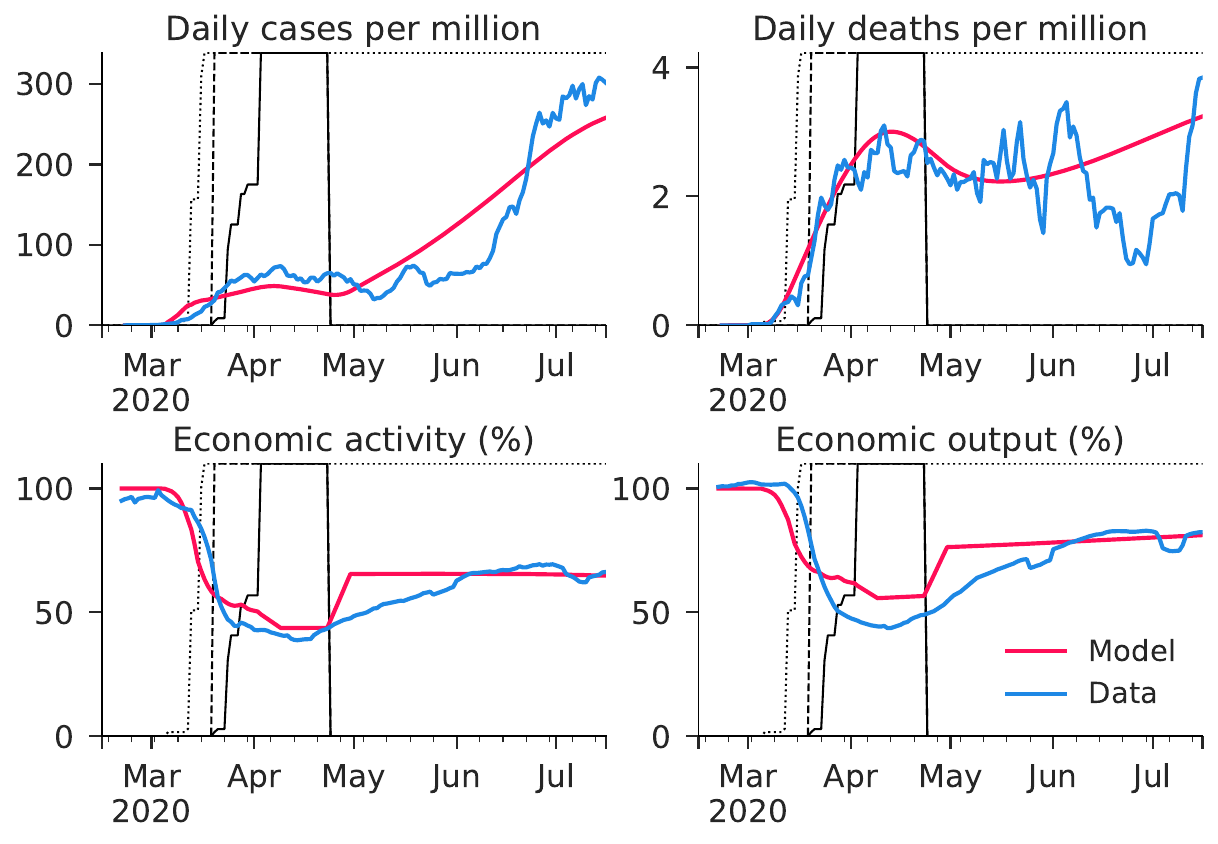}
    \end{subfigure}
    
    \bigskip
    
    \begin{subfigure}{.48\textwidth}
      \centering
      \caption{Hawaii}
      \includegraphics[width=1.0\textwidth]{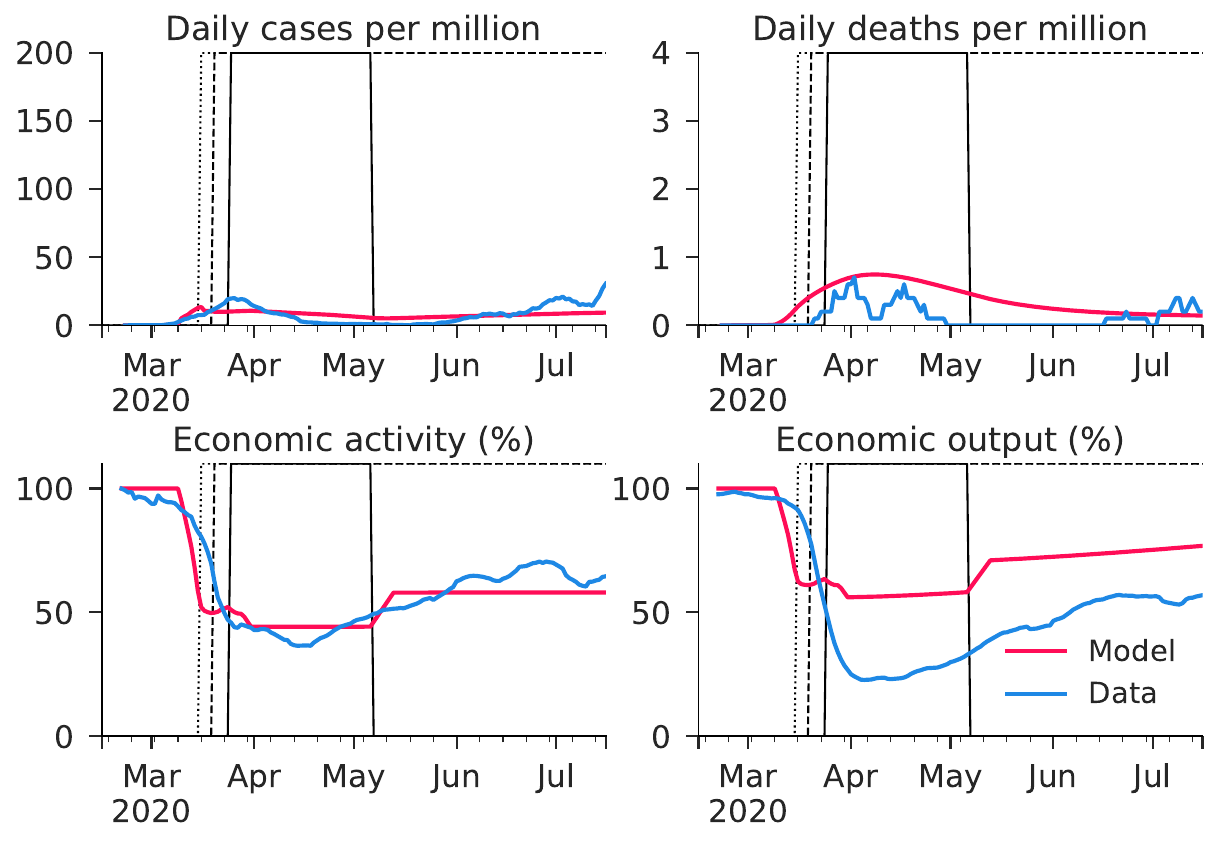}
    \end{subfigure}
    ~~
    \begin{subfigure}{.48\textwidth}
      \centering
      \caption{Idaho}
      \includegraphics[width=1.0\textwidth]{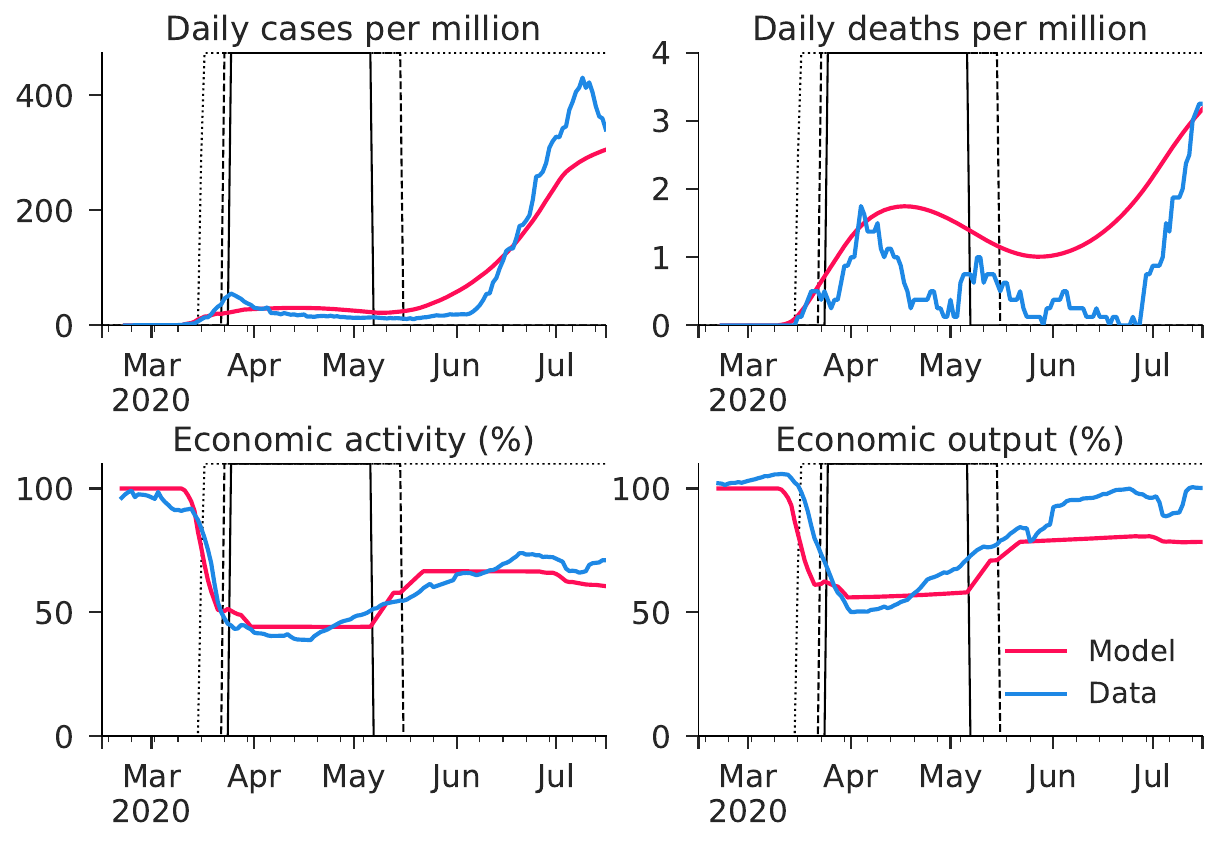}
    \end{subfigure}
\end{figure}
\begin{figure}[t]
    \captionsetup[subfigure]{labelformat=empty}
    \centering
    \begin{subfigure}{.48\textwidth}
      \centering
      \caption{Illinois}
      \includegraphics[width=1.0\textwidth]{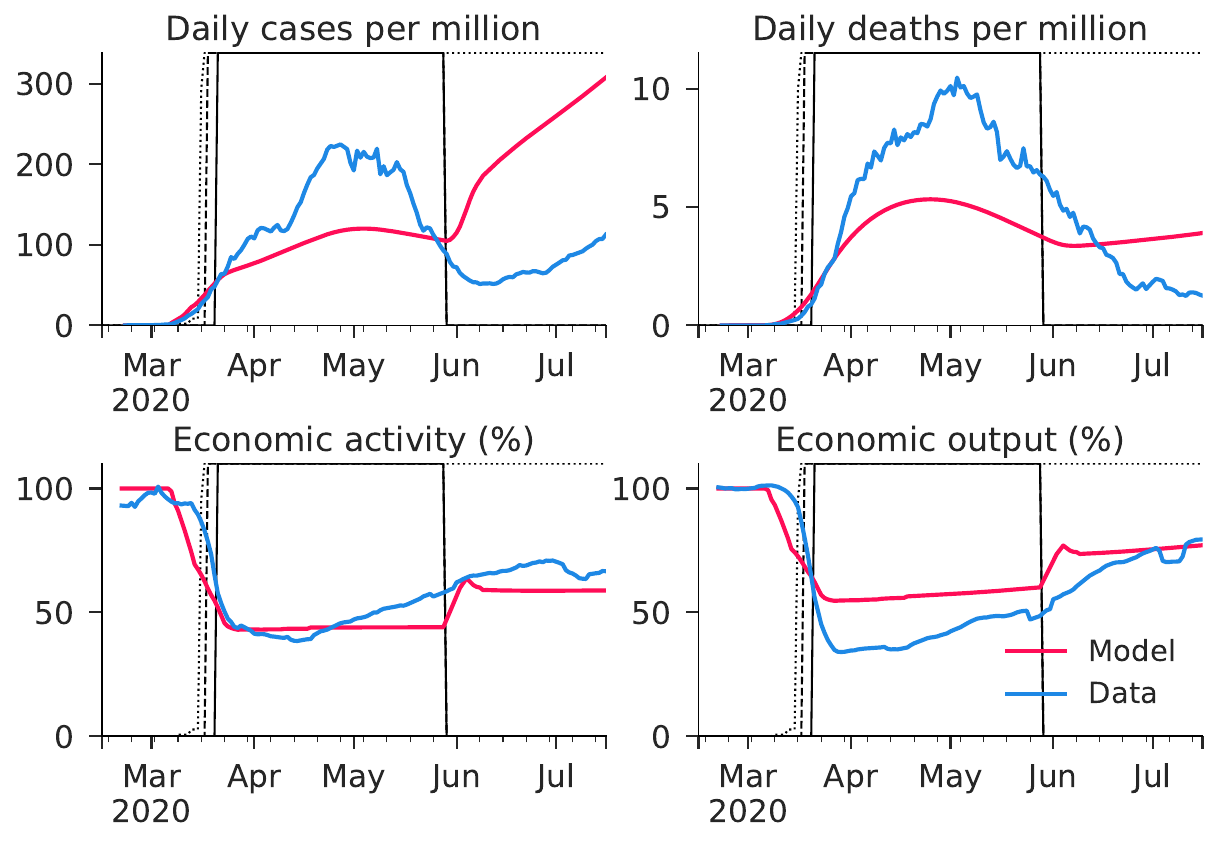}
    \end{subfigure}
    ~~
    \begin{subfigure}{.48\textwidth}
      \centering
      \caption{Indiana}
      \includegraphics[width=1.0\textwidth]{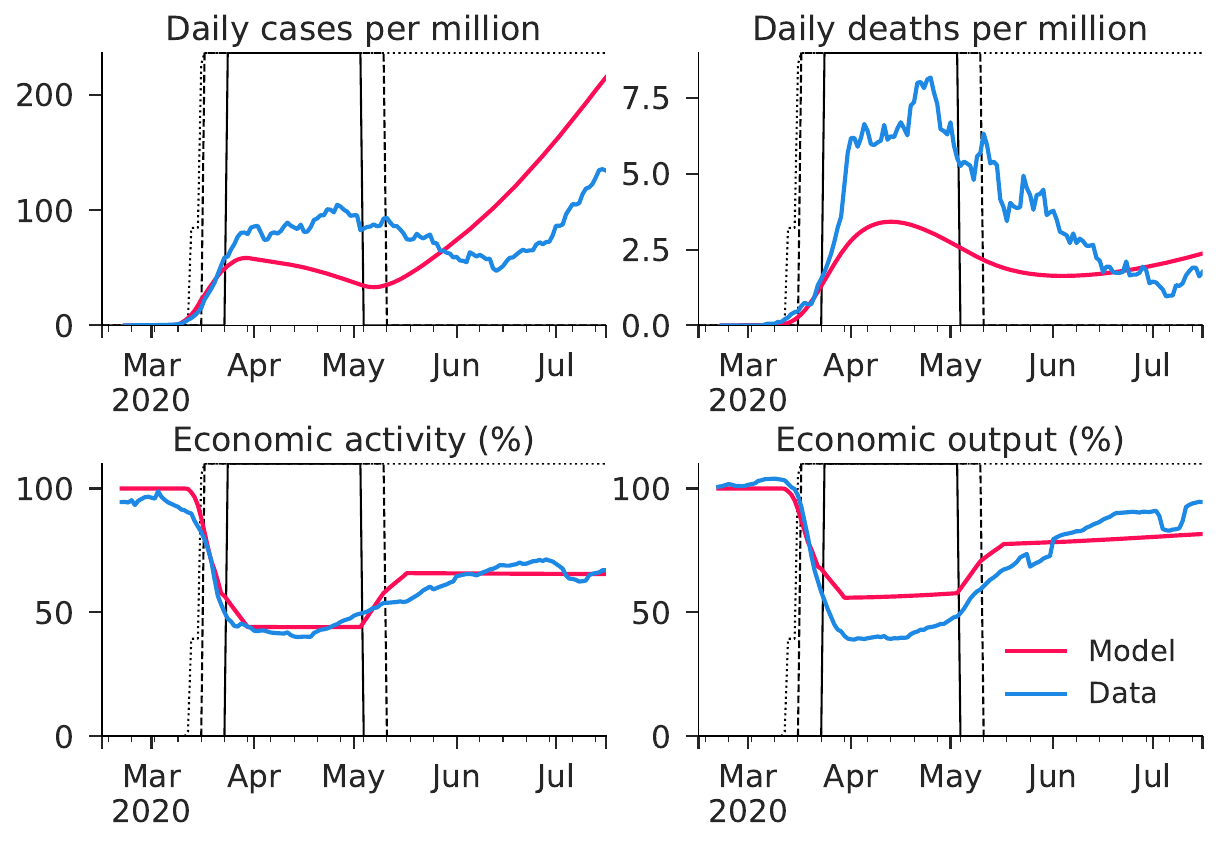}
    \end{subfigure}

    \bigskip

    \begin{subfigure}{.48\textwidth}
      \centering
      \caption{Iowa}
      \includegraphics[width=1.0\textwidth]{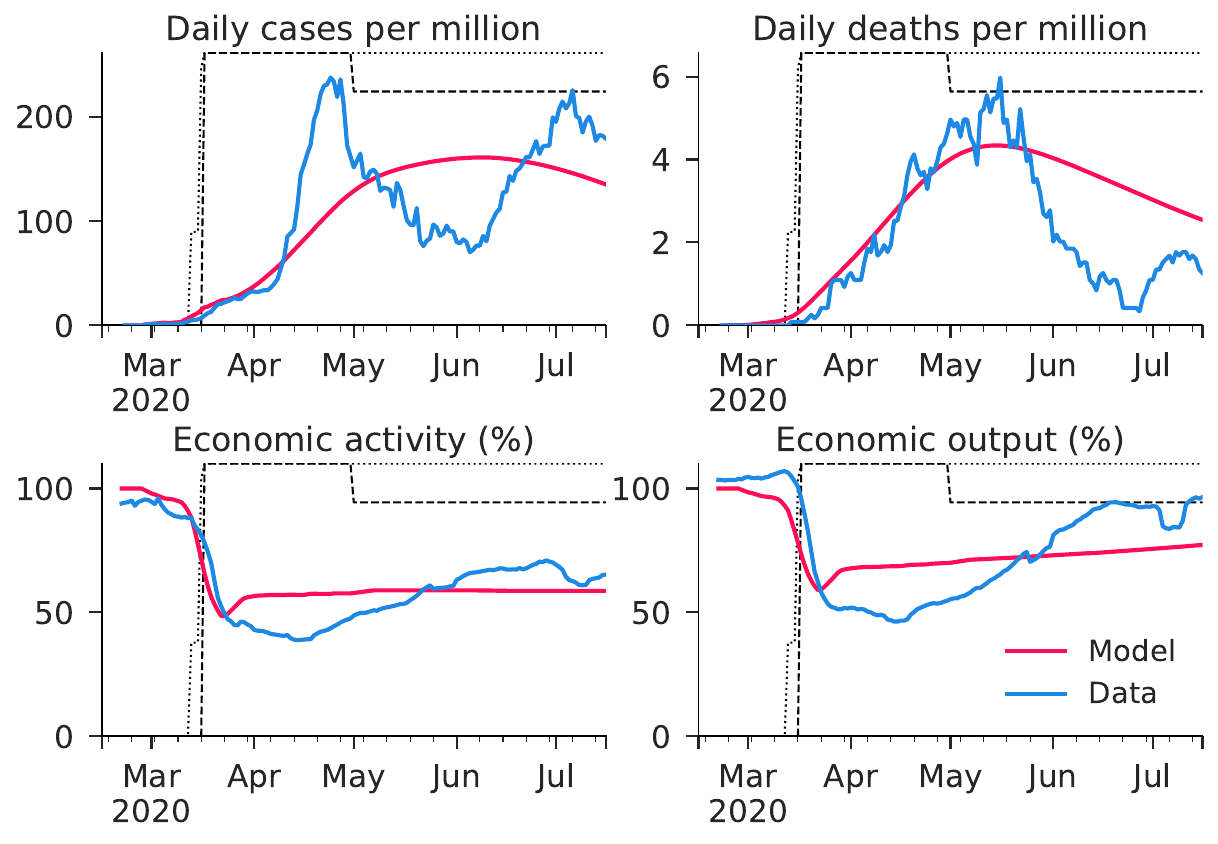}
    \end{subfigure}
    ~~
    \begin{subfigure}{.48\textwidth}
      \centering
      \caption{Kansas}
      \includegraphics[width=1.0\textwidth]{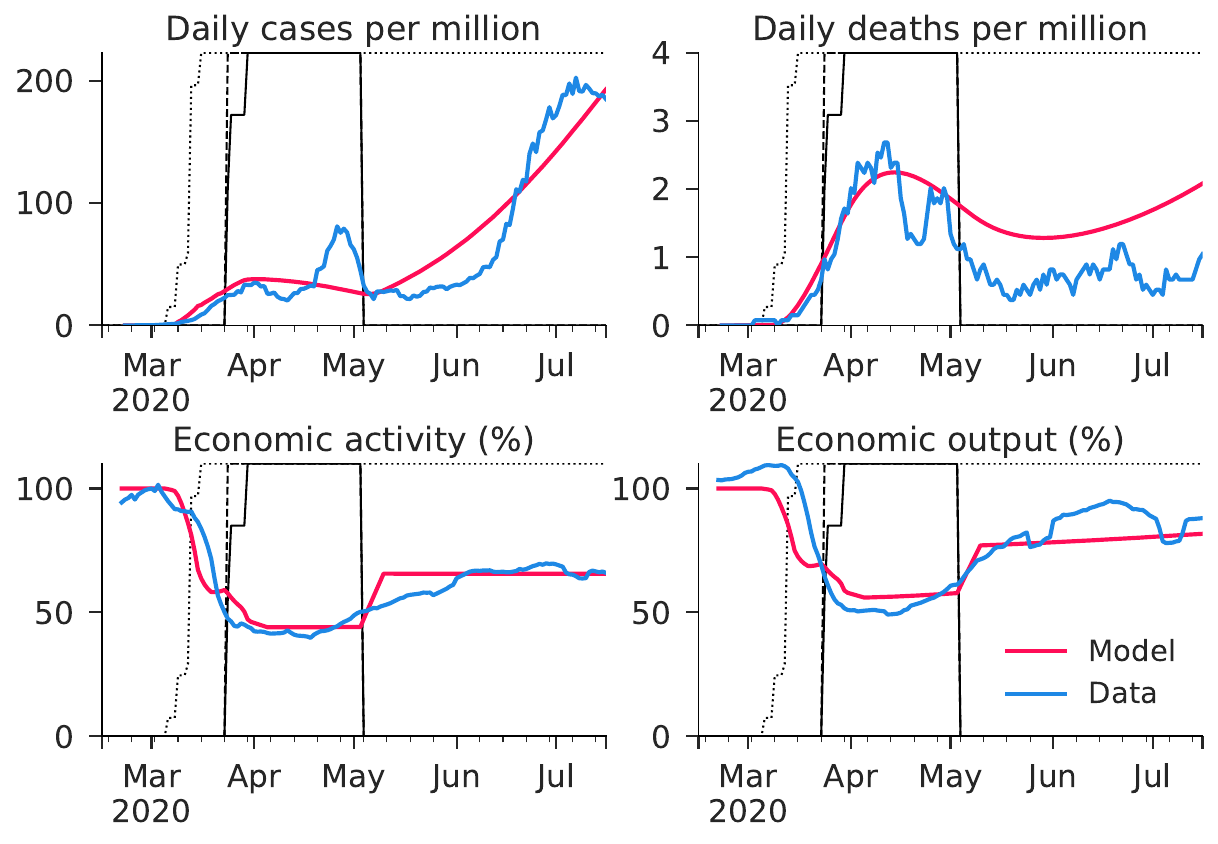}
    \end{subfigure}
    
    \bigskip
    
    \begin{subfigure}{.48\textwidth}
      \centering
      \caption{Kentucky}
      \includegraphics[width=1.0\textwidth]{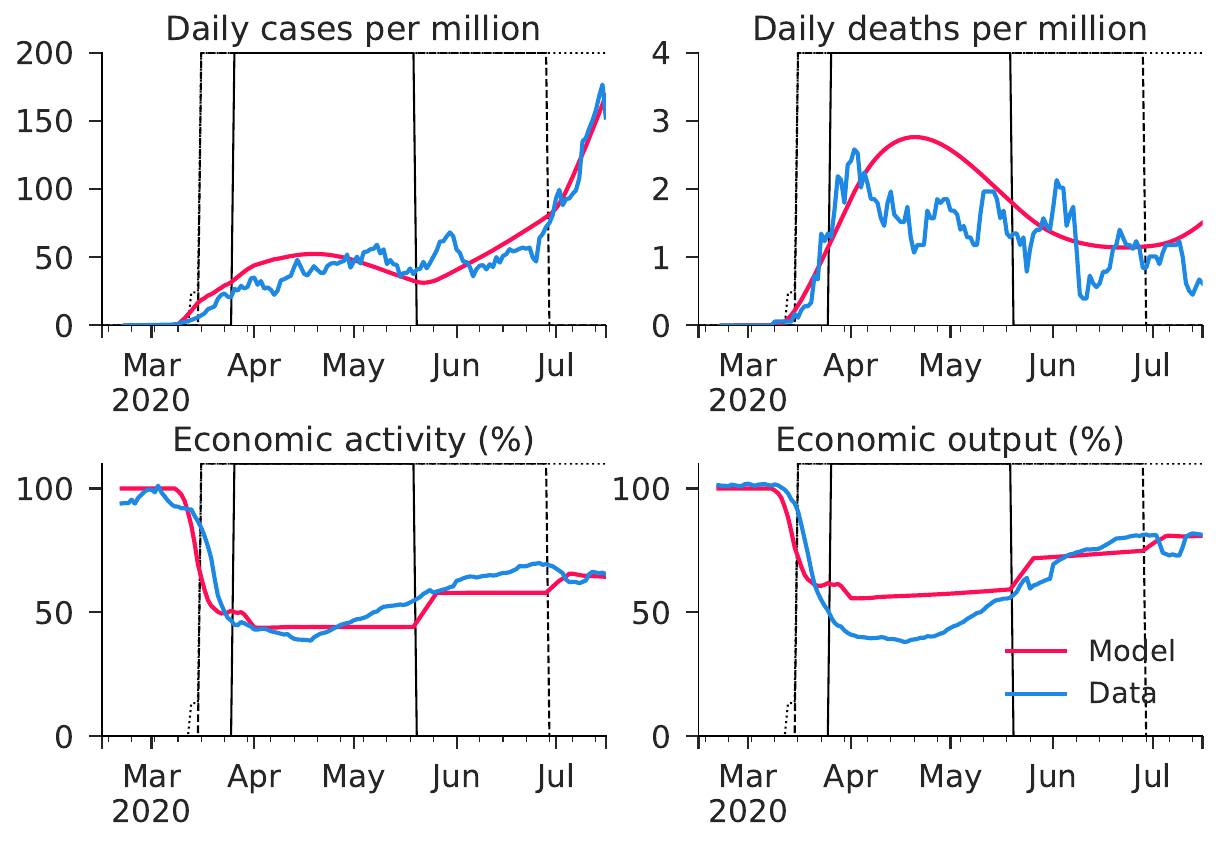}
    \end{subfigure}
    ~~
    \begin{subfigure}{.48\textwidth}
      \centering
      \caption{Louisiana}
      \includegraphics[width=1.0\textwidth]{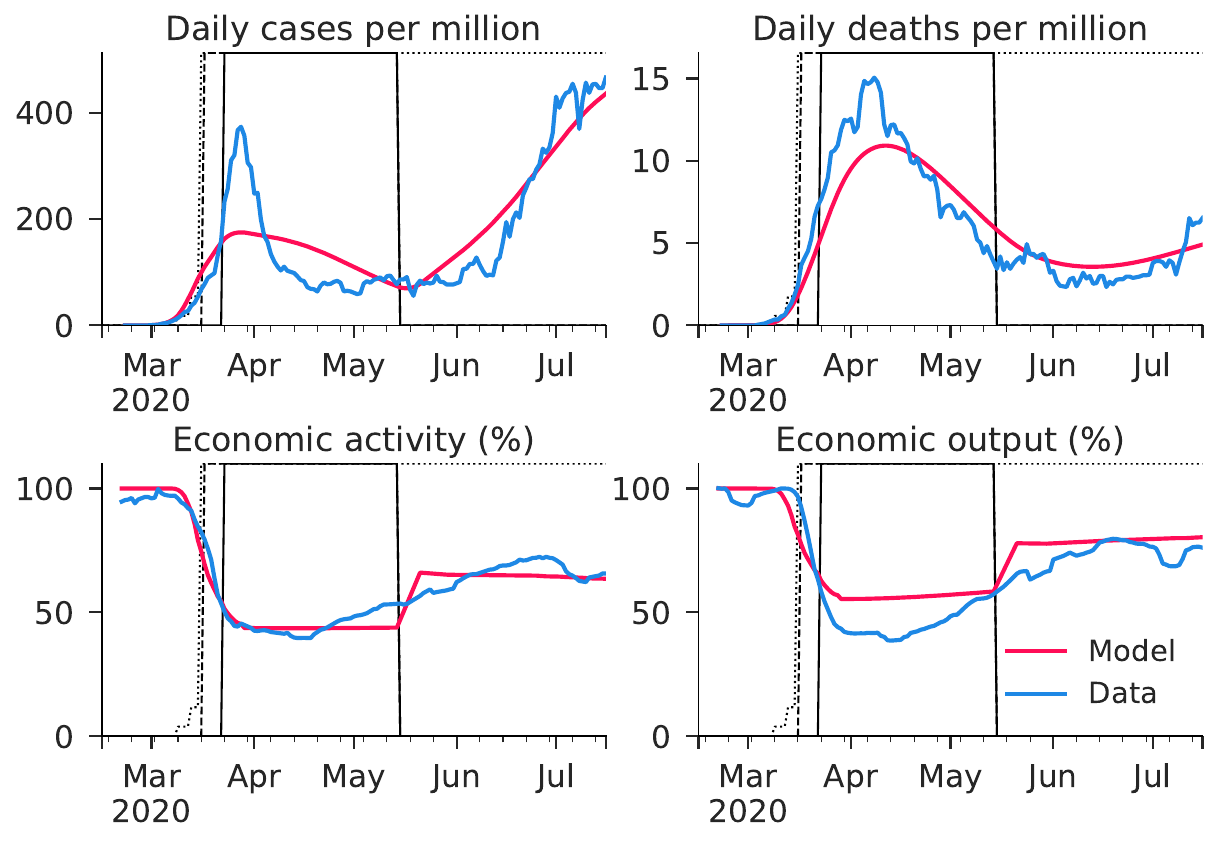}
    \end{subfigure}
\end{figure}

\begin{figure}[t]
    \captionsetup[subfigure]{labelformat=empty}
    \centering
    \begin{subfigure}{.48\textwidth}
      \centering
      \caption{Maryland}
      \includegraphics[width=1.0\textwidth]{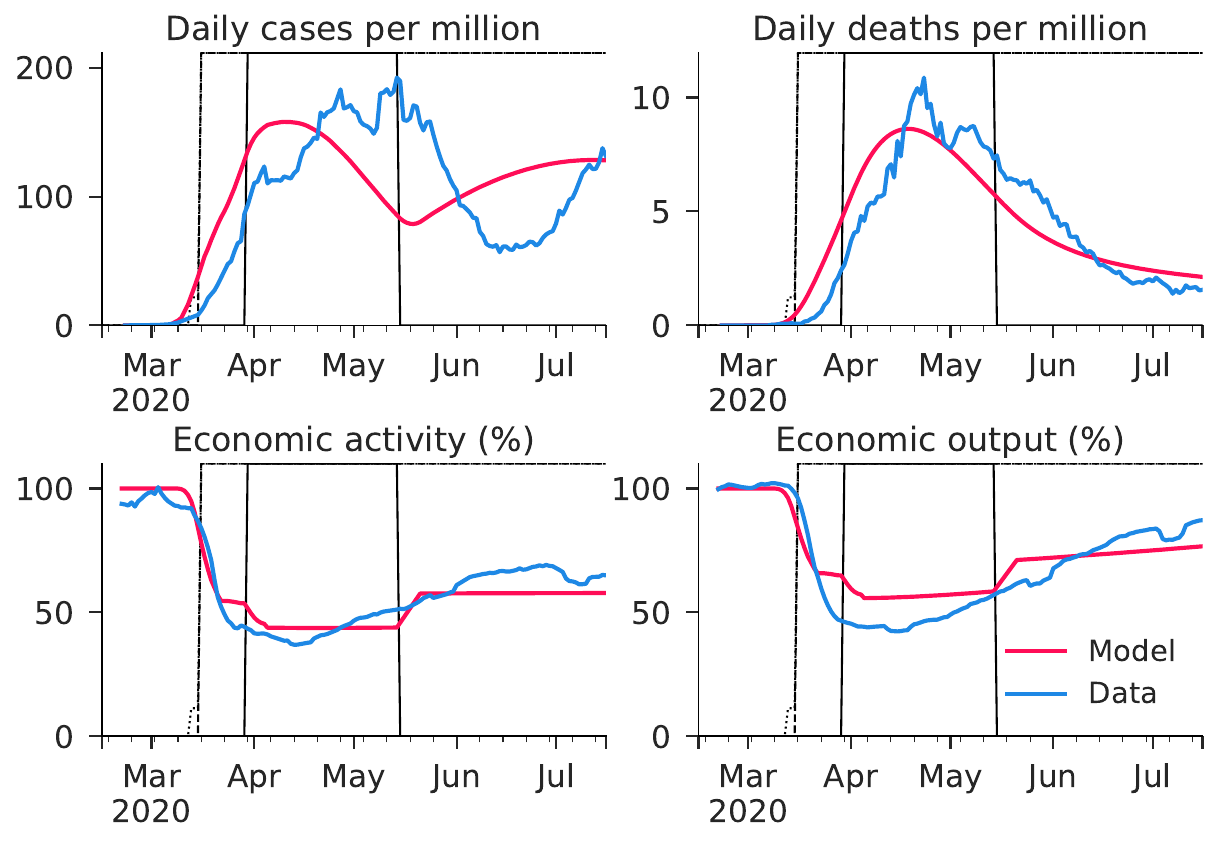}
    \end{subfigure}
    ~~
    \begin{subfigure}{.48\textwidth}
      \centering
      \caption{Michigan}
      \includegraphics[width=1.0\textwidth]{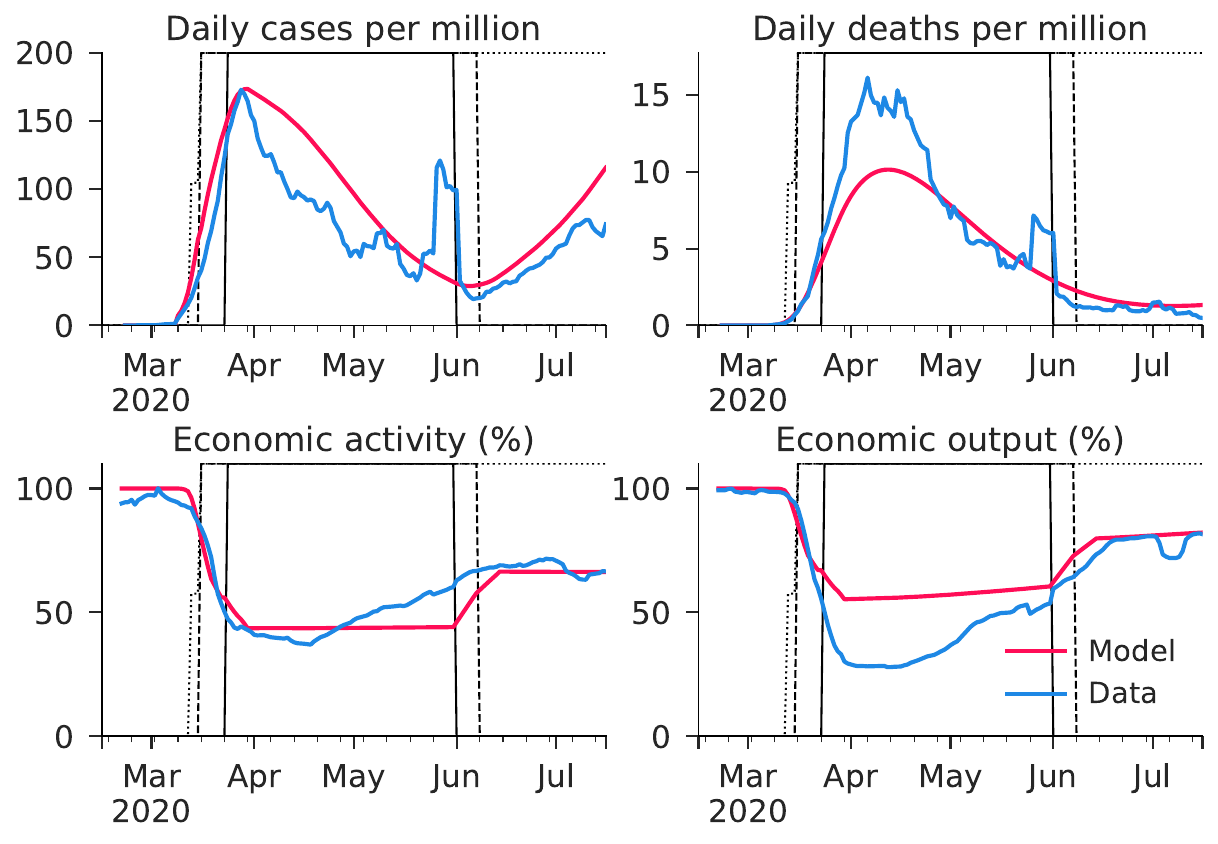}
    \end{subfigure}

    \bigskip

    \begin{subfigure}{.48\textwidth}
      \centering
      \caption{Minnesota}
      \includegraphics[width=1.0\textwidth]{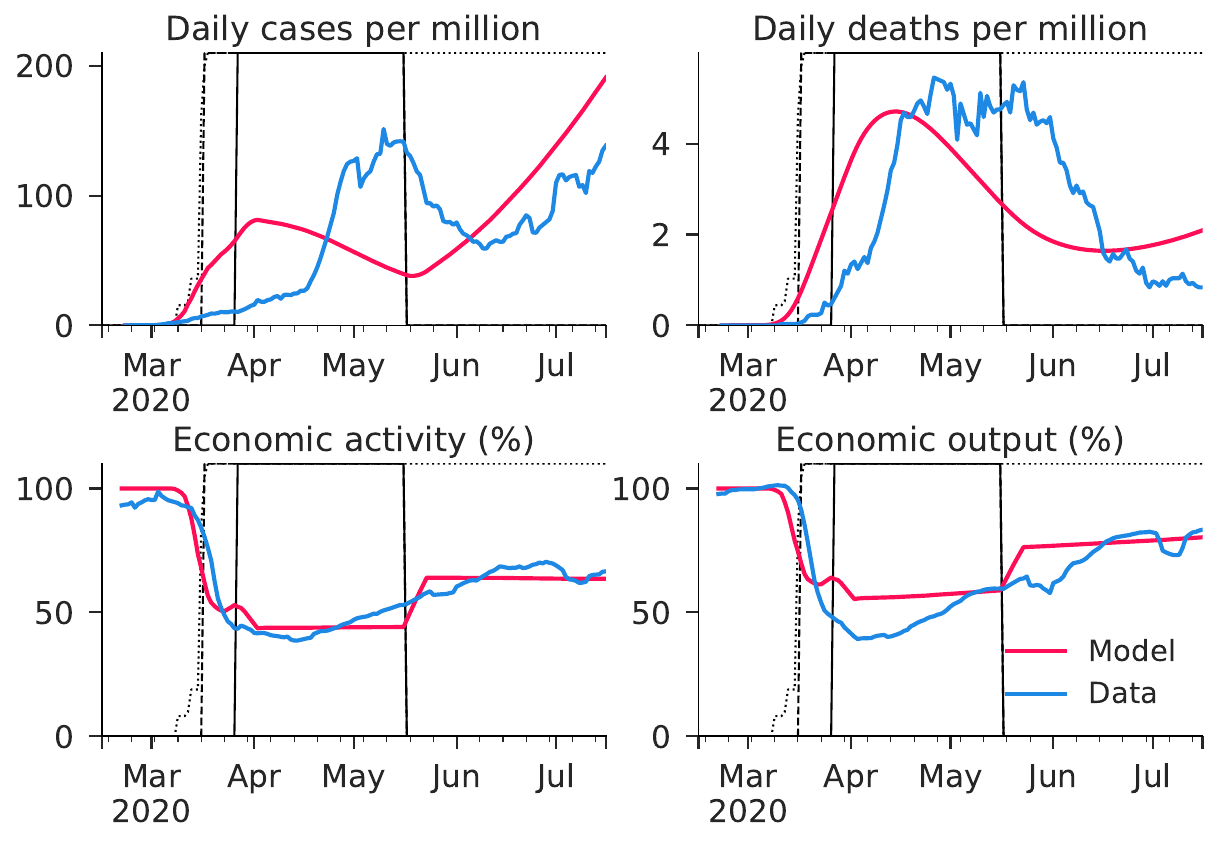}
    \end{subfigure}
    ~~
    \begin{subfigure}{.48\textwidth}
      \centering
      \caption{Mississippi}
      \includegraphics[width=1.0\textwidth]{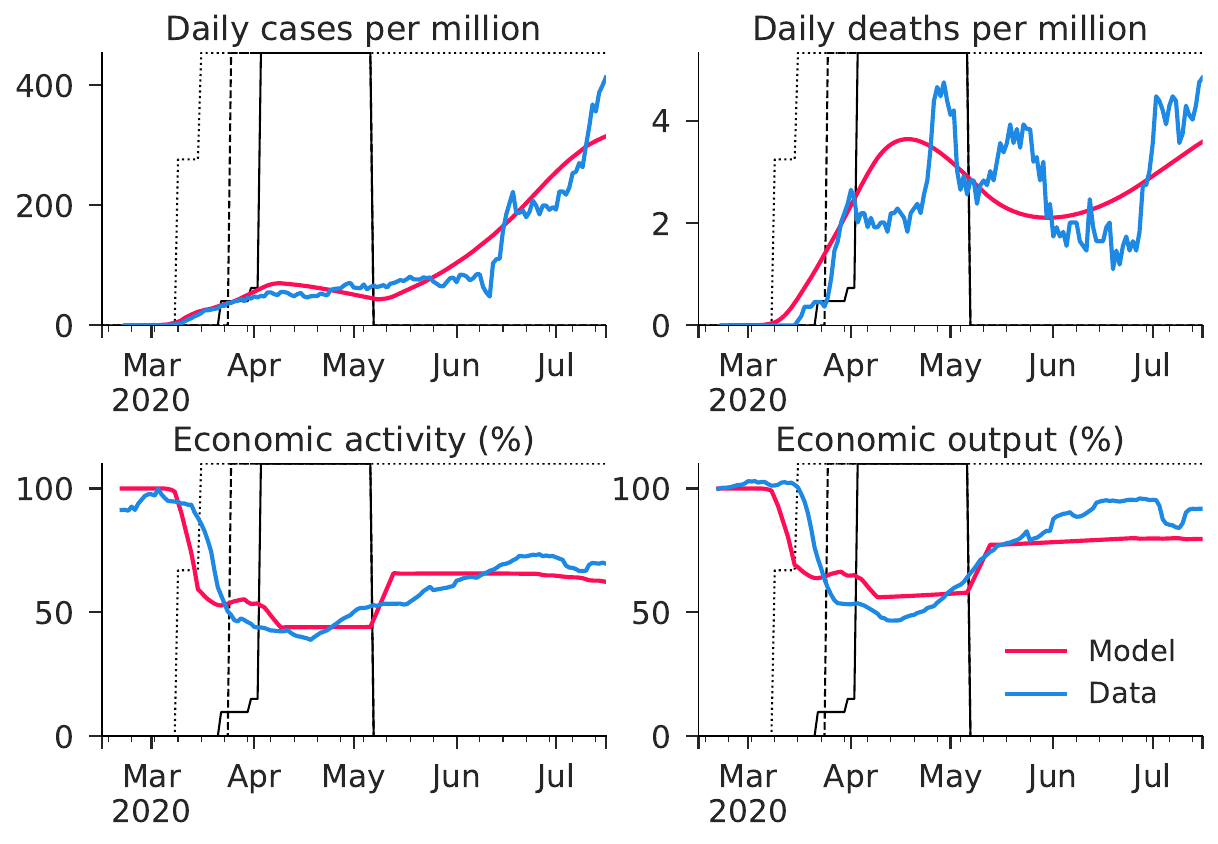}
    \end{subfigure}
    
    \bigskip
    
    \begin{subfigure}{.48\textwidth}
      \centering
      \caption{Missouri}
      \includegraphics[width=1.0\textwidth]{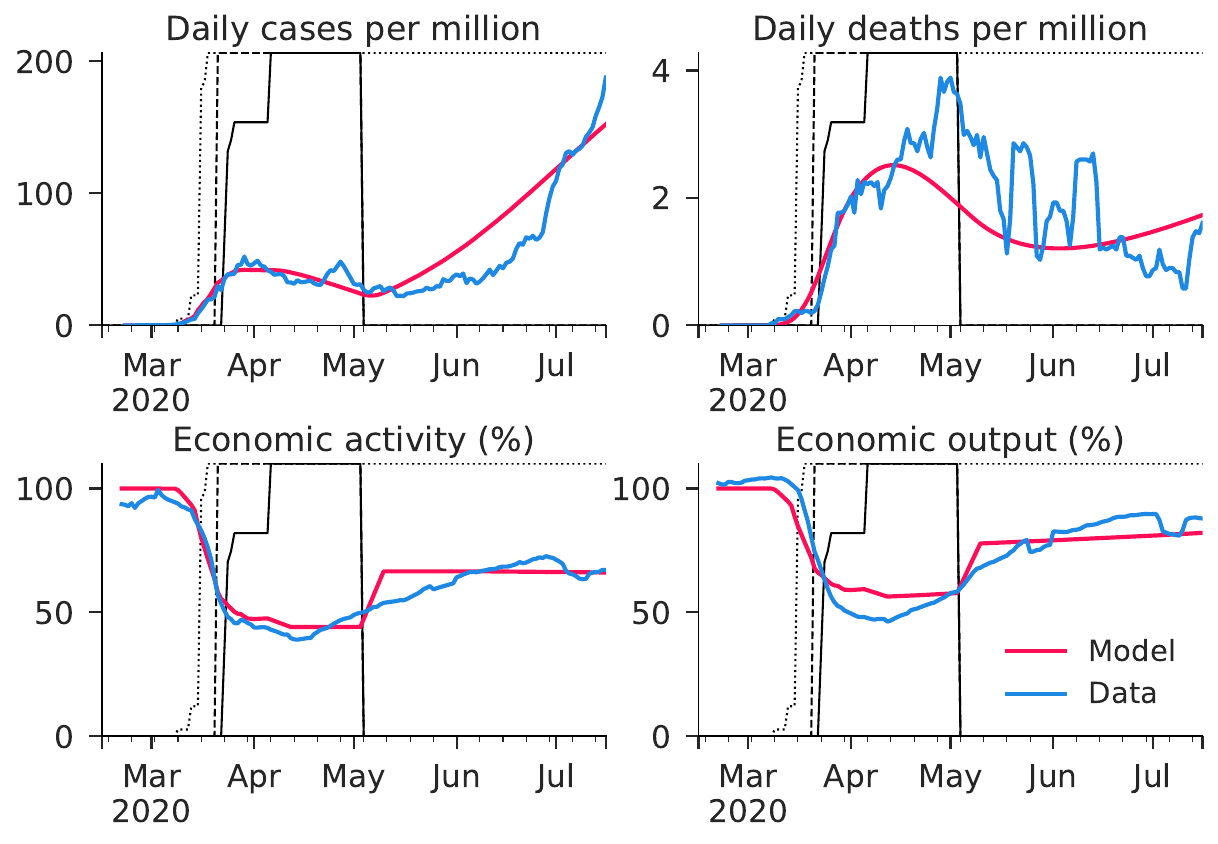}
    \end{subfigure}
    ~~
    \begin{subfigure}{.48\textwidth}
      \centering
      \caption{Montana}
      \includegraphics[width=1.0\textwidth]{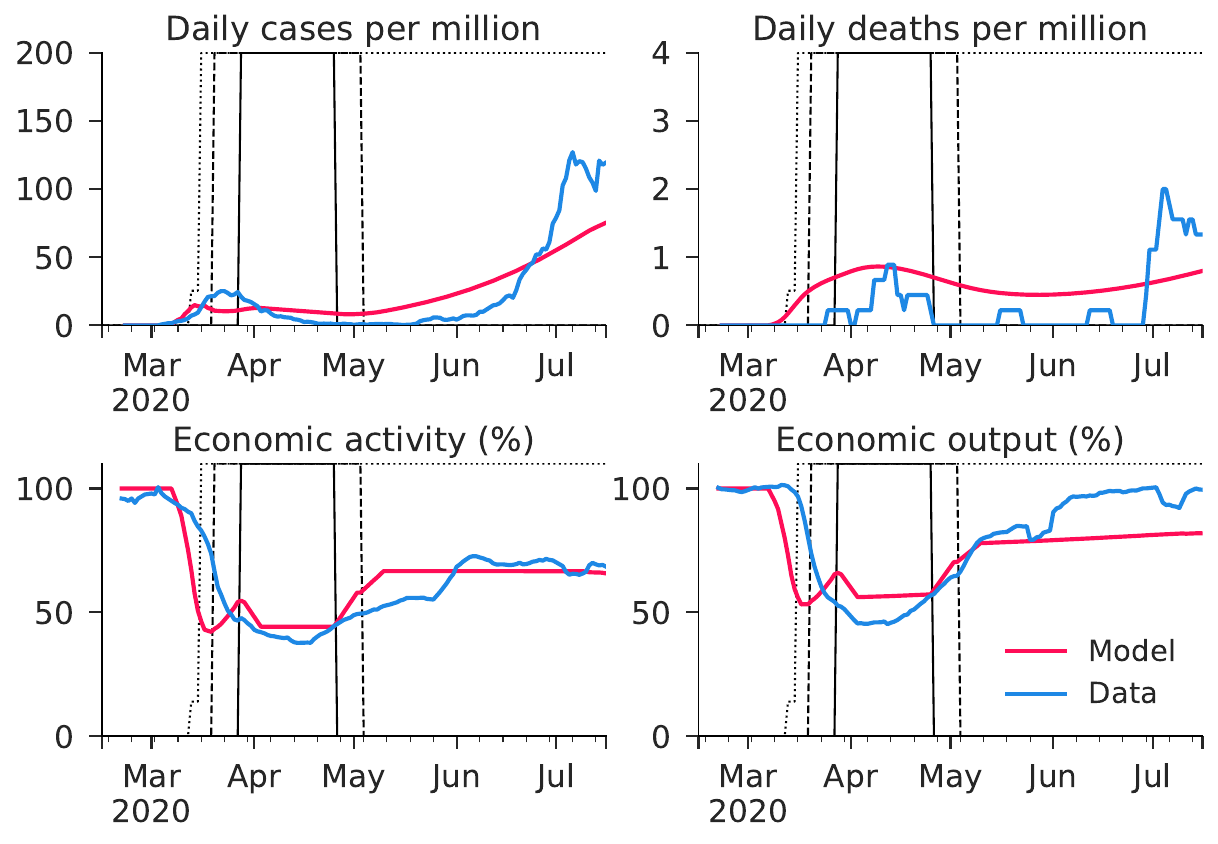}
    \end{subfigure}
\end{figure}

\begin{figure}[t]
    \captionsetup[subfigure]{labelformat=empty}
    \centering
    \begin{subfigure}{.48\textwidth}
      \centering
      \caption{Nebraska}
      \includegraphics[width=1.0\textwidth]{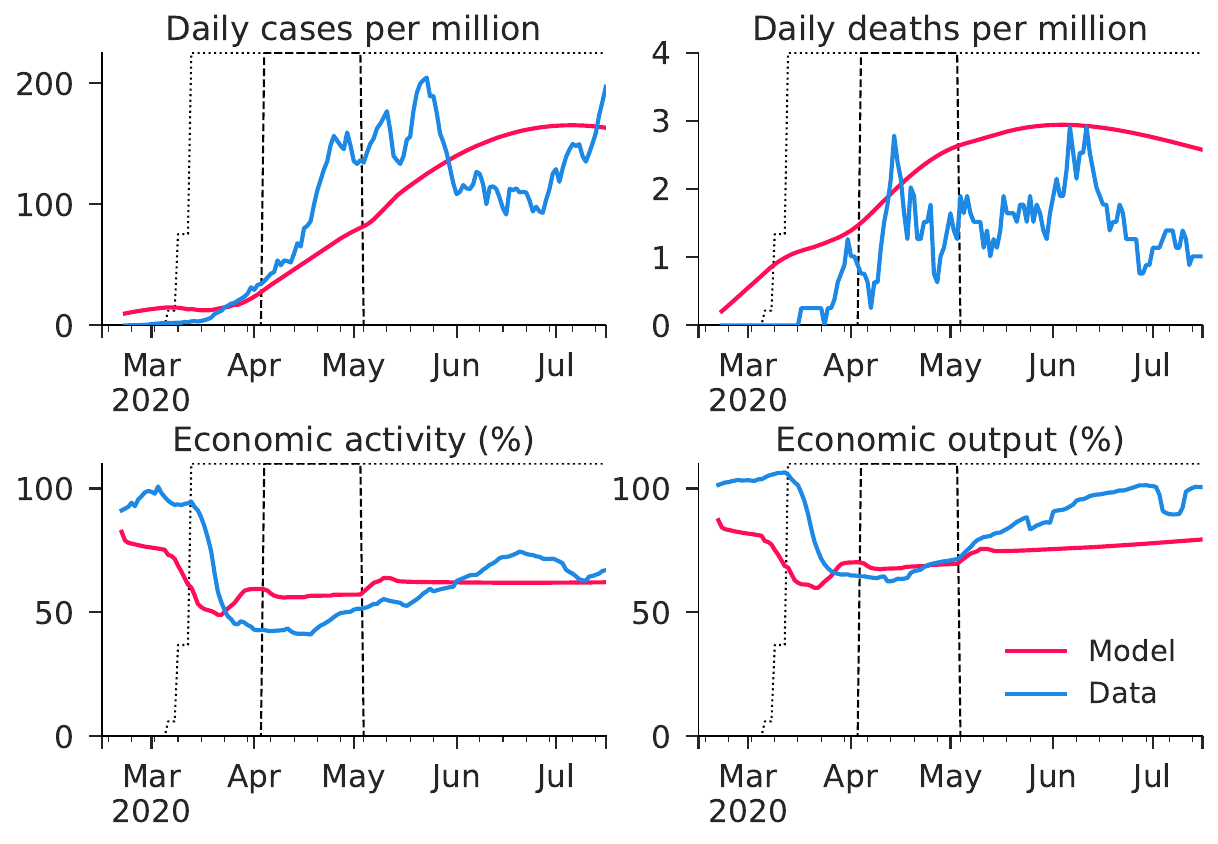}
    \end{subfigure}
    ~~
    \begin{subfigure}{.48\textwidth}
      \centering
      \caption{Nevada}
      \includegraphics[width=1.0\textwidth]{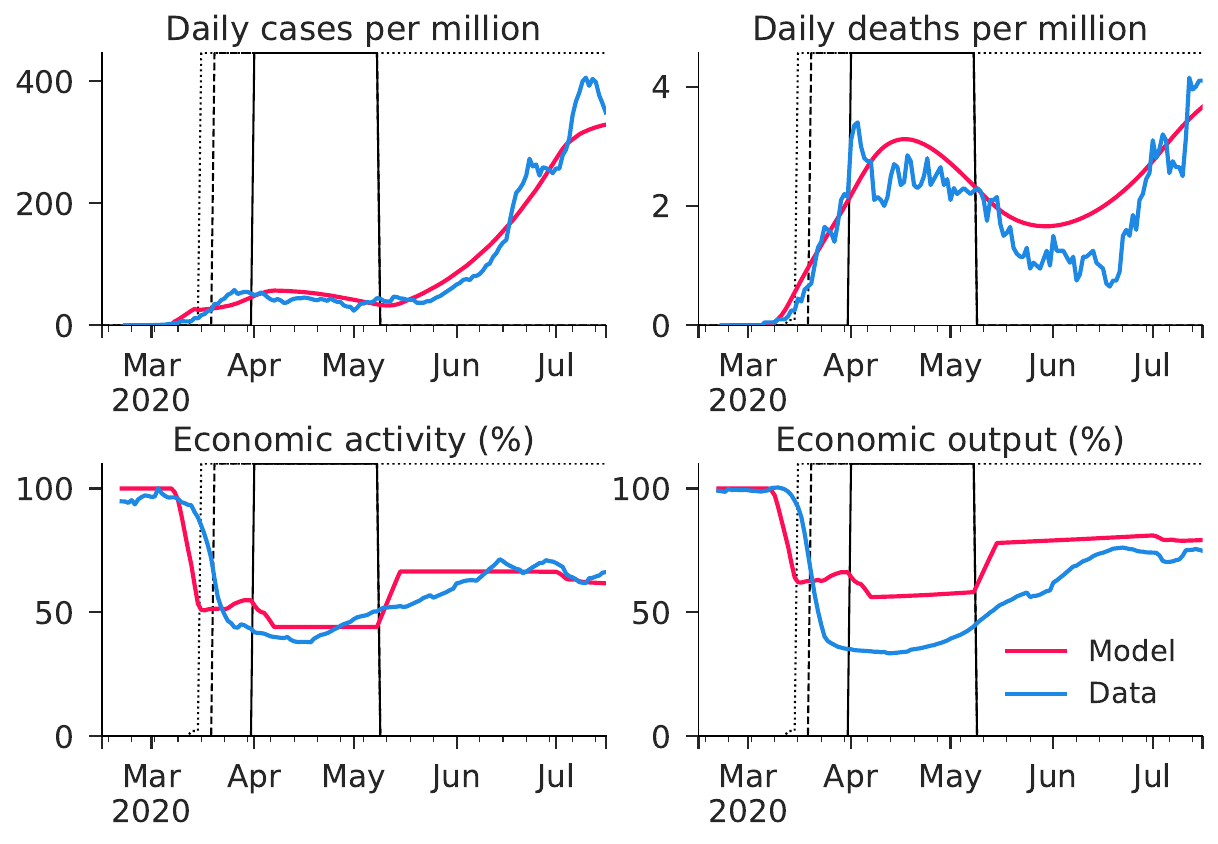}
    \end{subfigure}

    \bigskip

    \begin{subfigure}{.48\textwidth}
      \centering
      \caption{New Mexico}
      \includegraphics[width=1.0\textwidth]{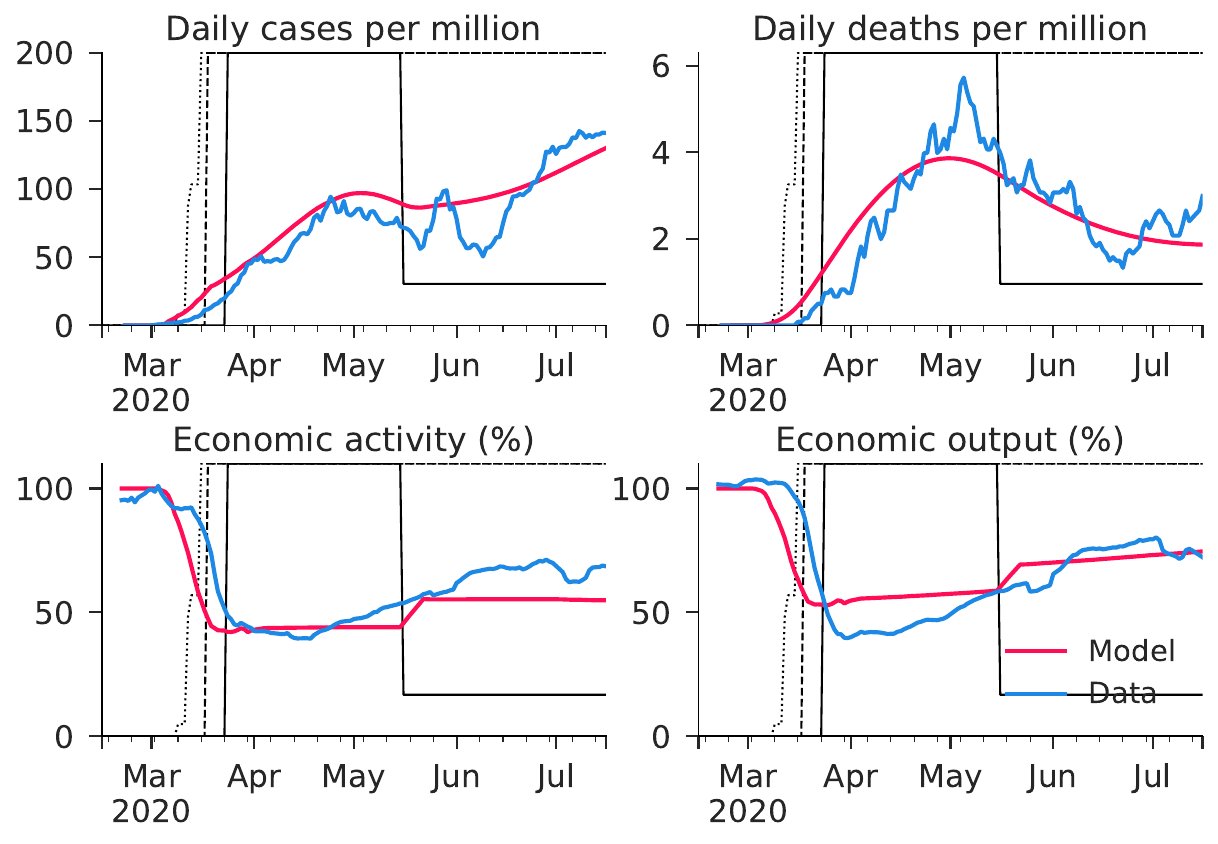}
    \end{subfigure}
    ~~
    \begin{subfigure}{.48\textwidth}
      \centering
      \caption{New York}
      \includegraphics[width=1.0\textwidth]{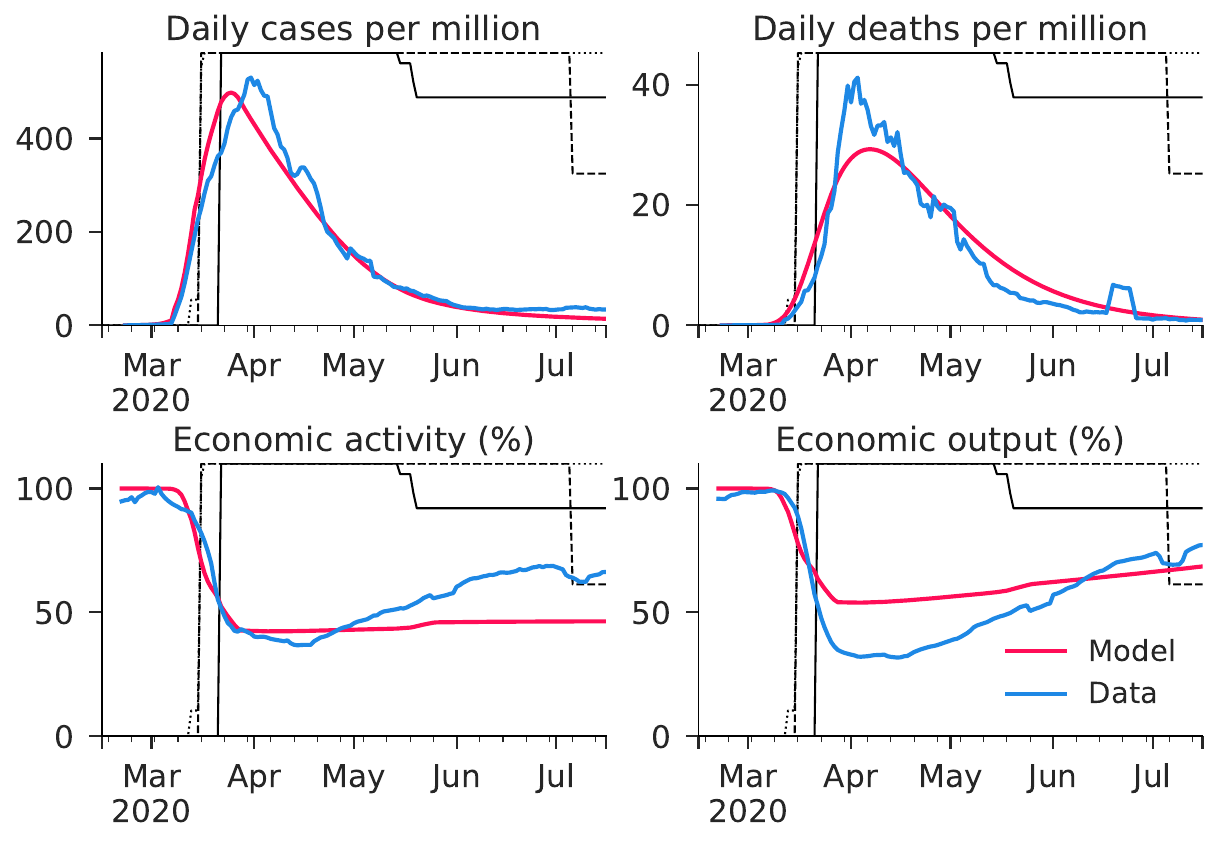}
    \end{subfigure}
    
    \bigskip
    
    \begin{subfigure}{.48\textwidth}
      \centering
      \caption{North Carolina}
      \includegraphics[width=1.0\textwidth]{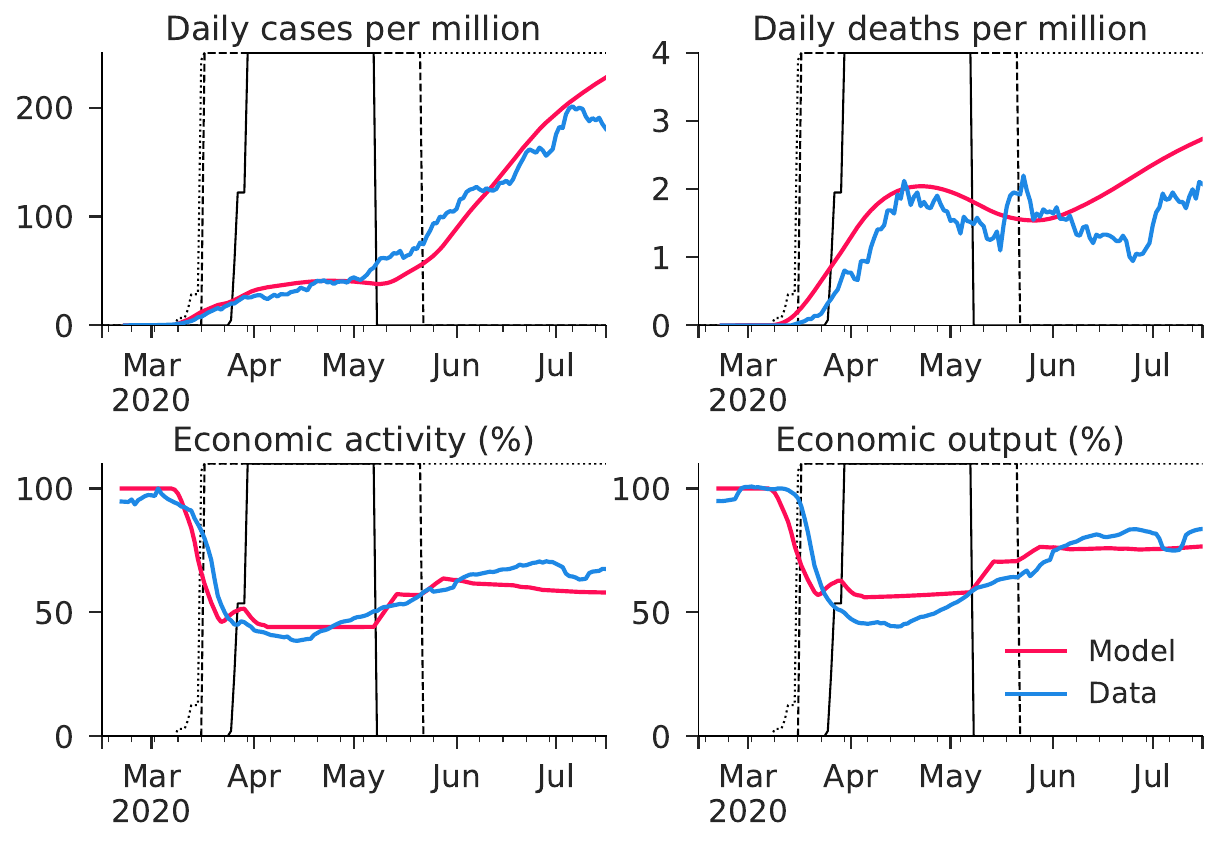}
    \end{subfigure}
    ~~
    \begin{subfigure}{.48\textwidth}
      \centering
      \caption{North Dakota}
      \includegraphics[width=1.0\textwidth]{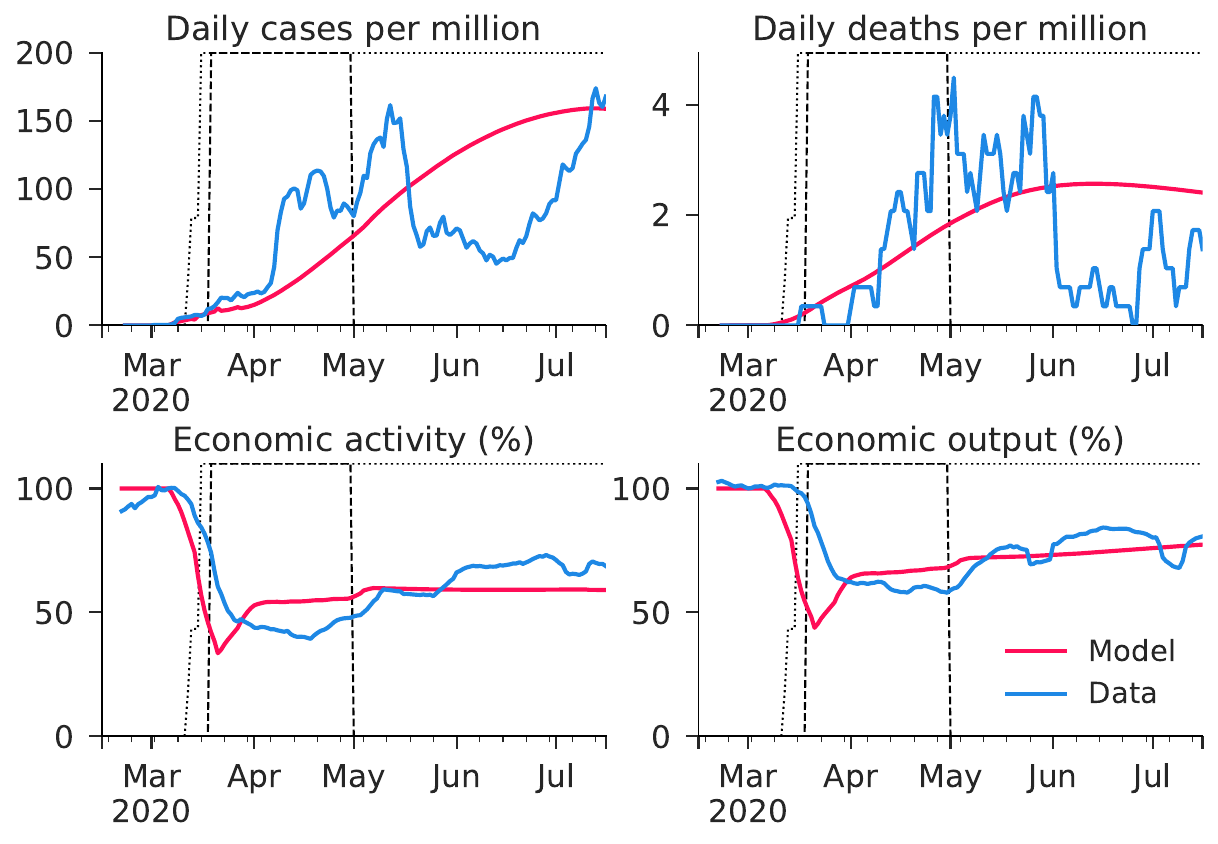}
    \end{subfigure}
\end{figure}

\begin{figure}[t]
    \captionsetup[subfigure]{labelformat=empty}
    \centering
    \begin{subfigure}{.48\textwidth}
      \centering
      \caption{Ohio}
      \includegraphics[width=1.0\textwidth]{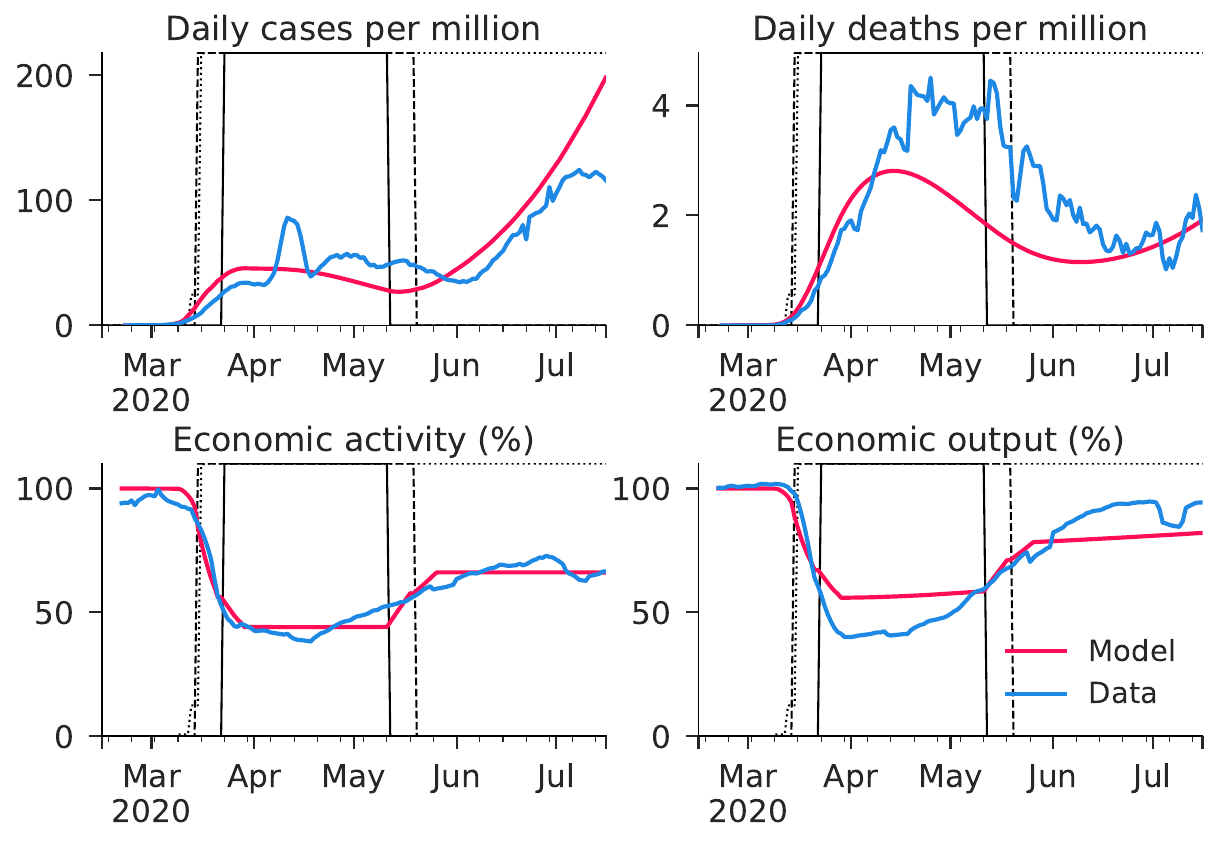}
    \end{subfigure}
    ~~
    \begin{subfigure}{.48\textwidth}
      \centering
      \caption{Oklahoma}
      \includegraphics[width=1.0\textwidth]{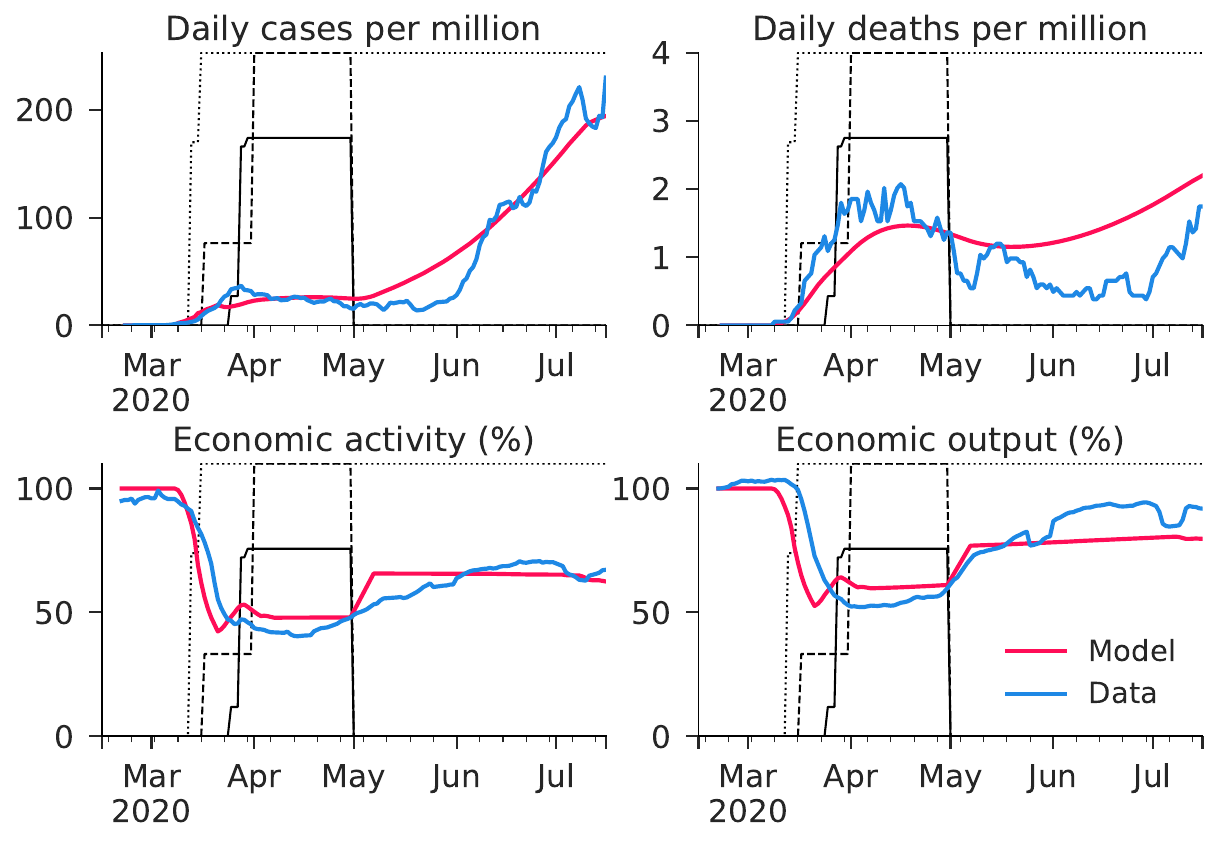}
    \end{subfigure}

    \bigskip

    \begin{subfigure}{.48\textwidth}
      \centering
      \caption{Oregon}
      \includegraphics[width=1.0\textwidth]{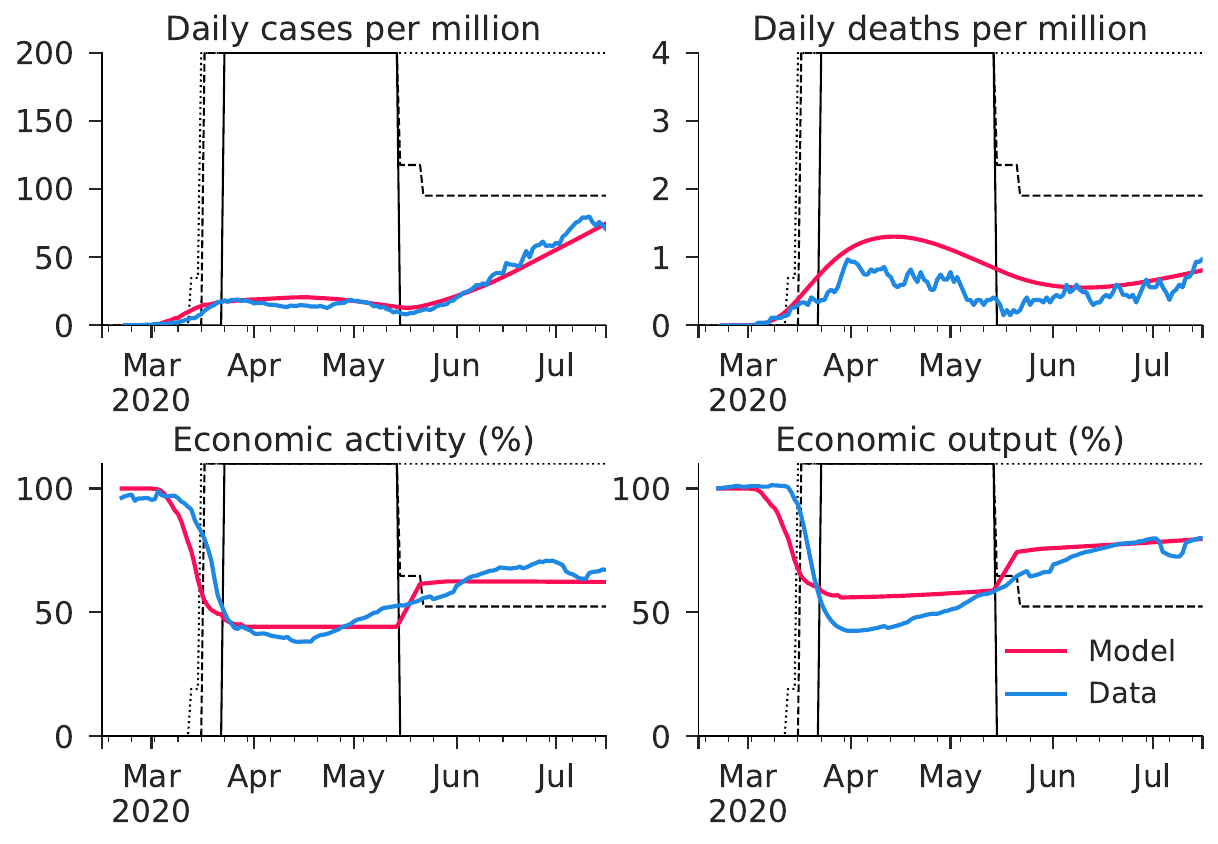}
    \end{subfigure}
    ~~
    \begin{subfigure}{.48\textwidth}
      \centering
      \caption{Pennsylvania}
      \includegraphics[width=1.0\textwidth]{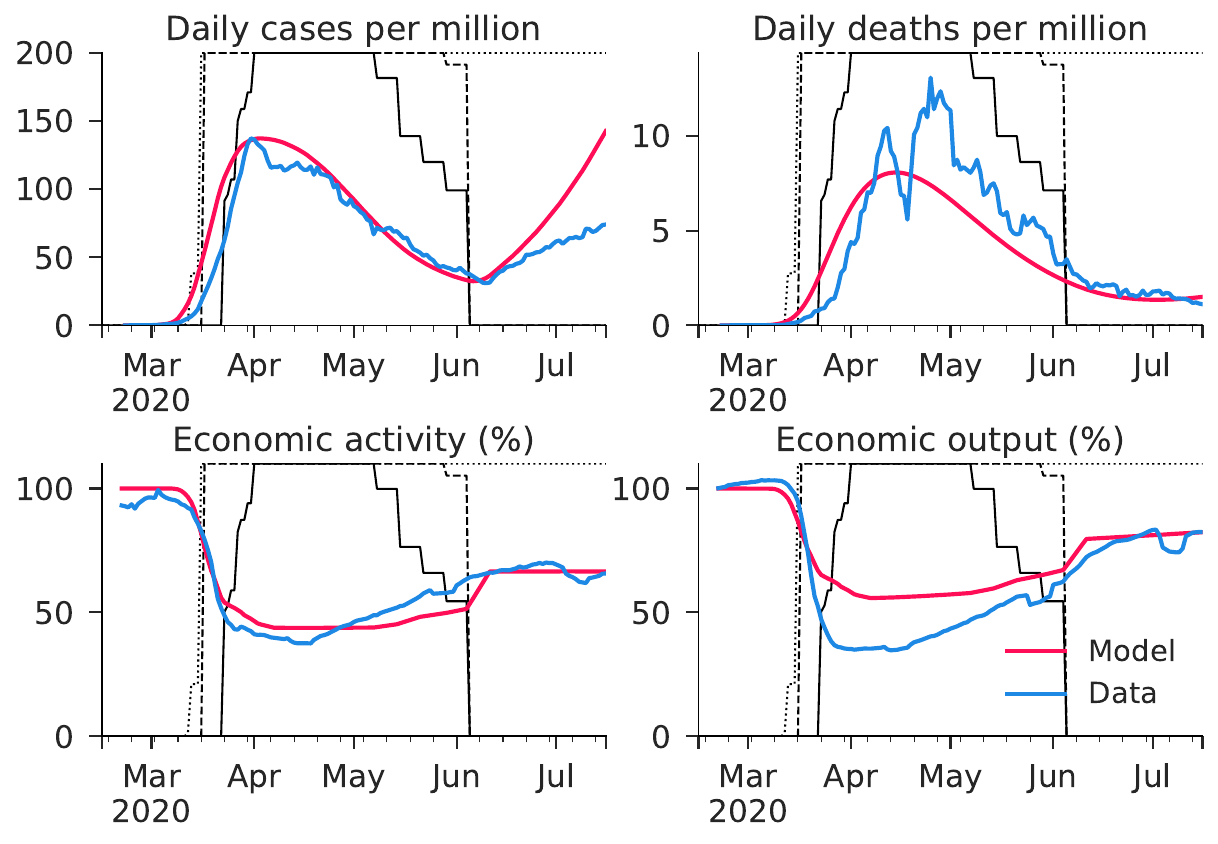}
    \end{subfigure}
    
    \bigskip
    
    \begin{subfigure}{.48\textwidth}
      \centering
      \caption{South Carolina}
      \includegraphics[width=1.0\textwidth]{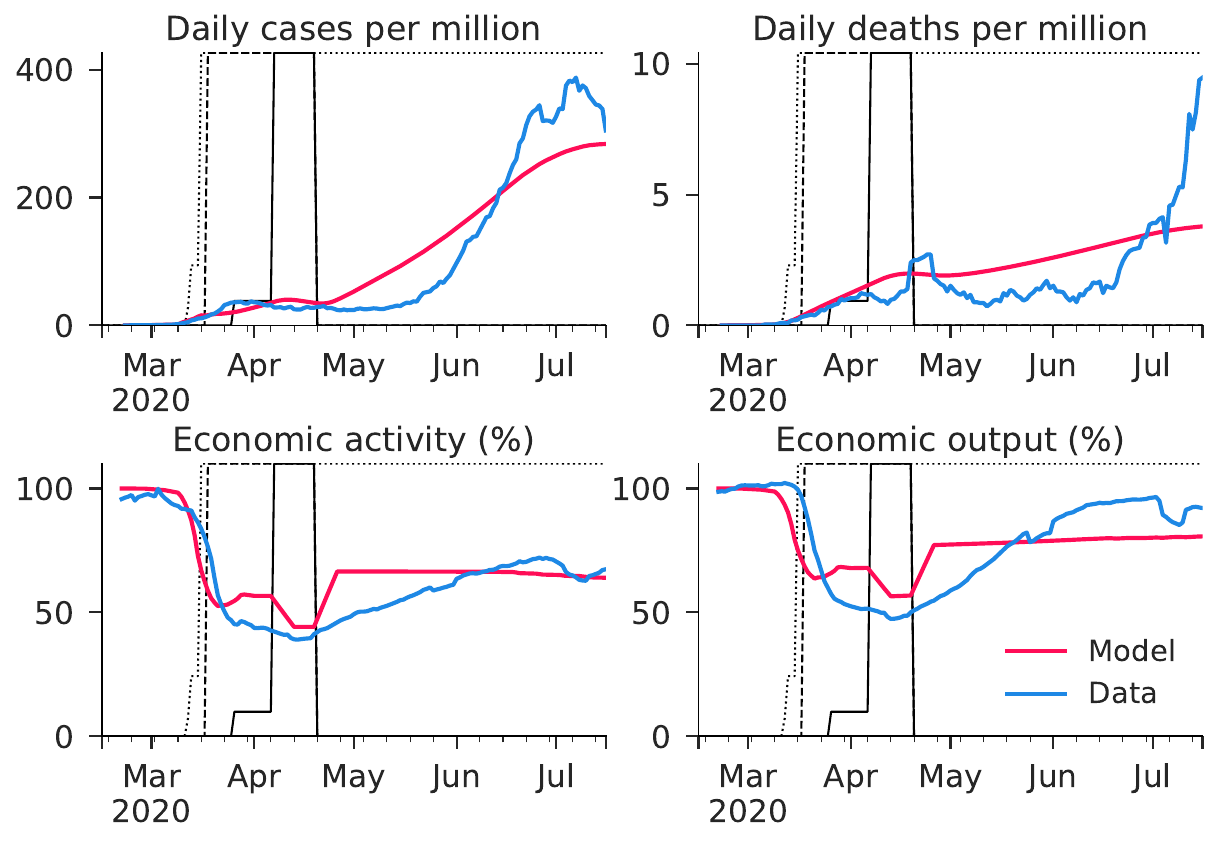}
    \end{subfigure}
    ~~
    \begin{subfigure}{.48\textwidth}
      \centering
      \caption{South Dakota}
      \includegraphics[width=1.0\textwidth]{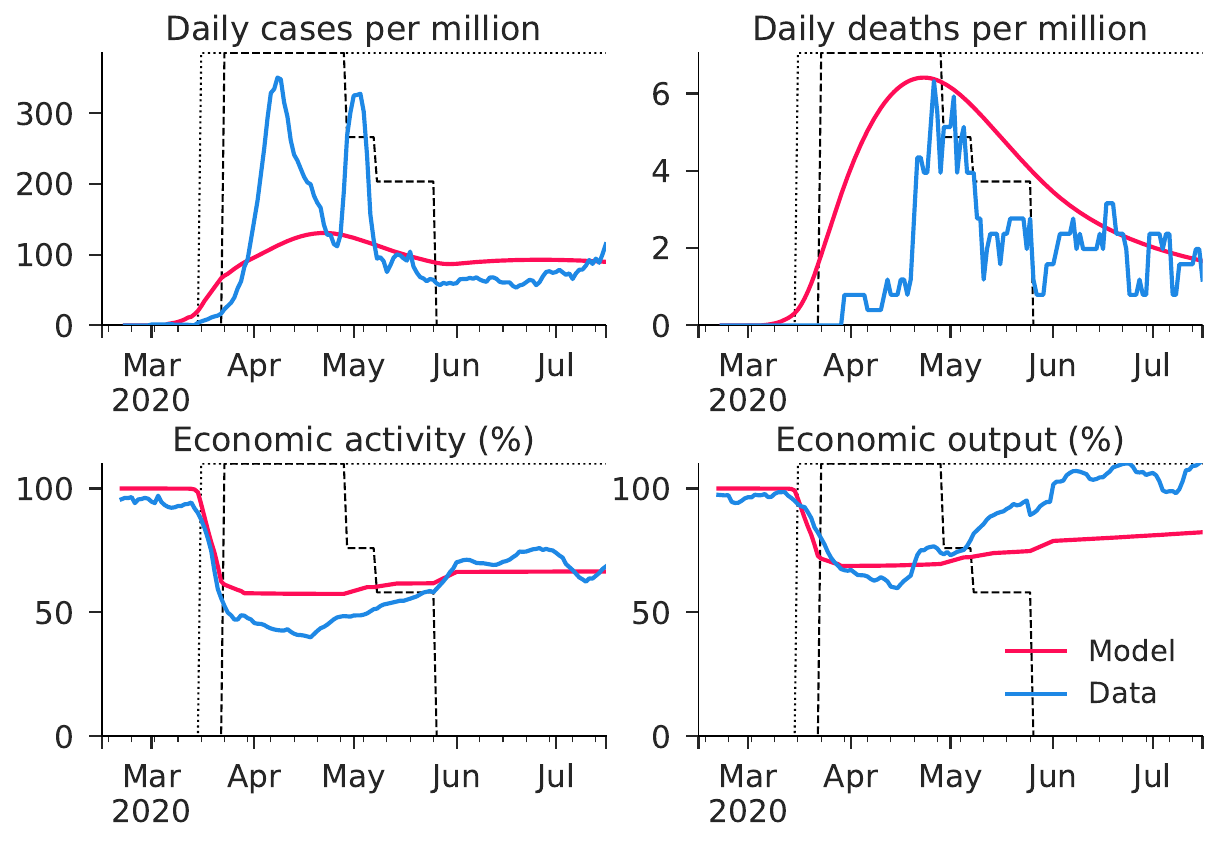}
    \end{subfigure}
\end{figure}

\begin{figure}[t]
    \captionsetup[subfigure]{labelformat=empty}
    \centering
    \begin{subfigure}{.48\textwidth}
      \centering
      \caption{Tennessee}
      \includegraphics[width=1.0\textwidth]{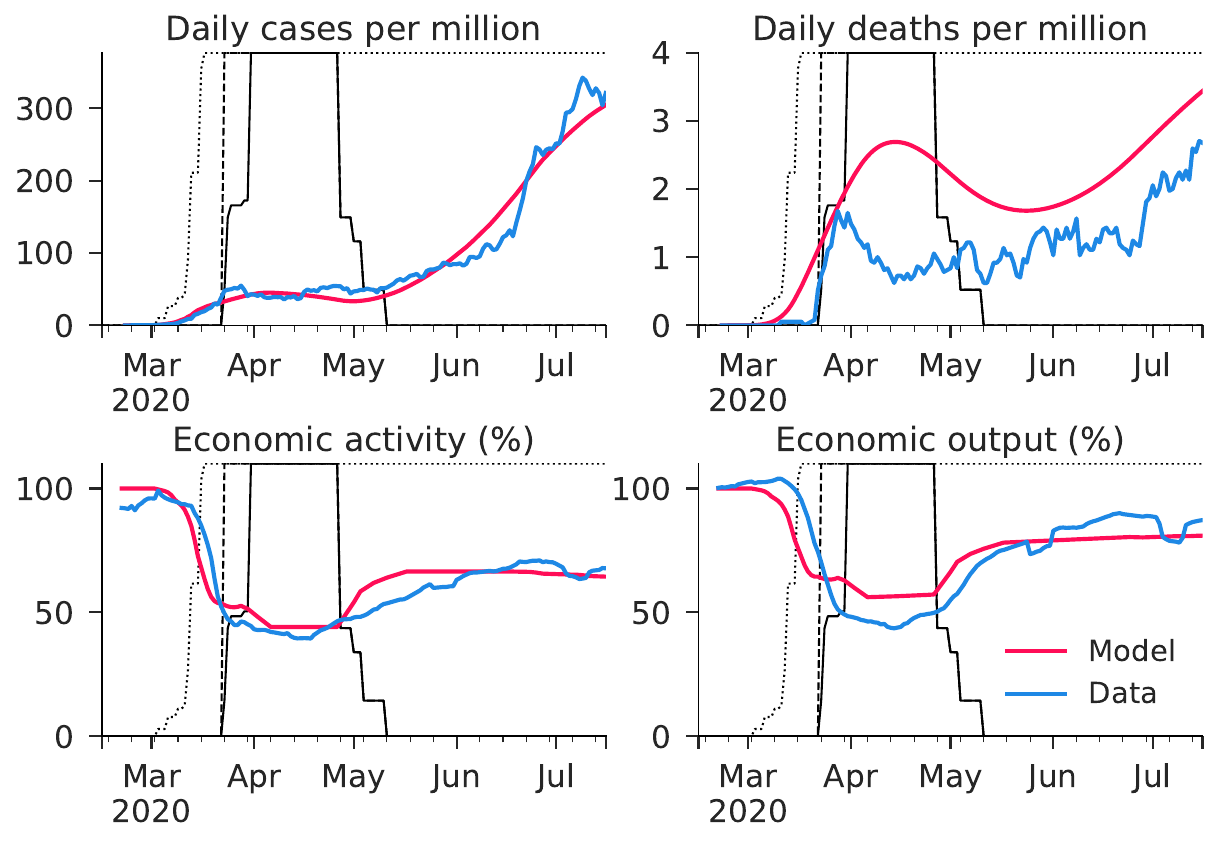}
    \end{subfigure}
    ~~
    \begin{subfigure}{.48\textwidth}
      \centering
      \caption{Texas}
      \includegraphics[width=1.0\textwidth]{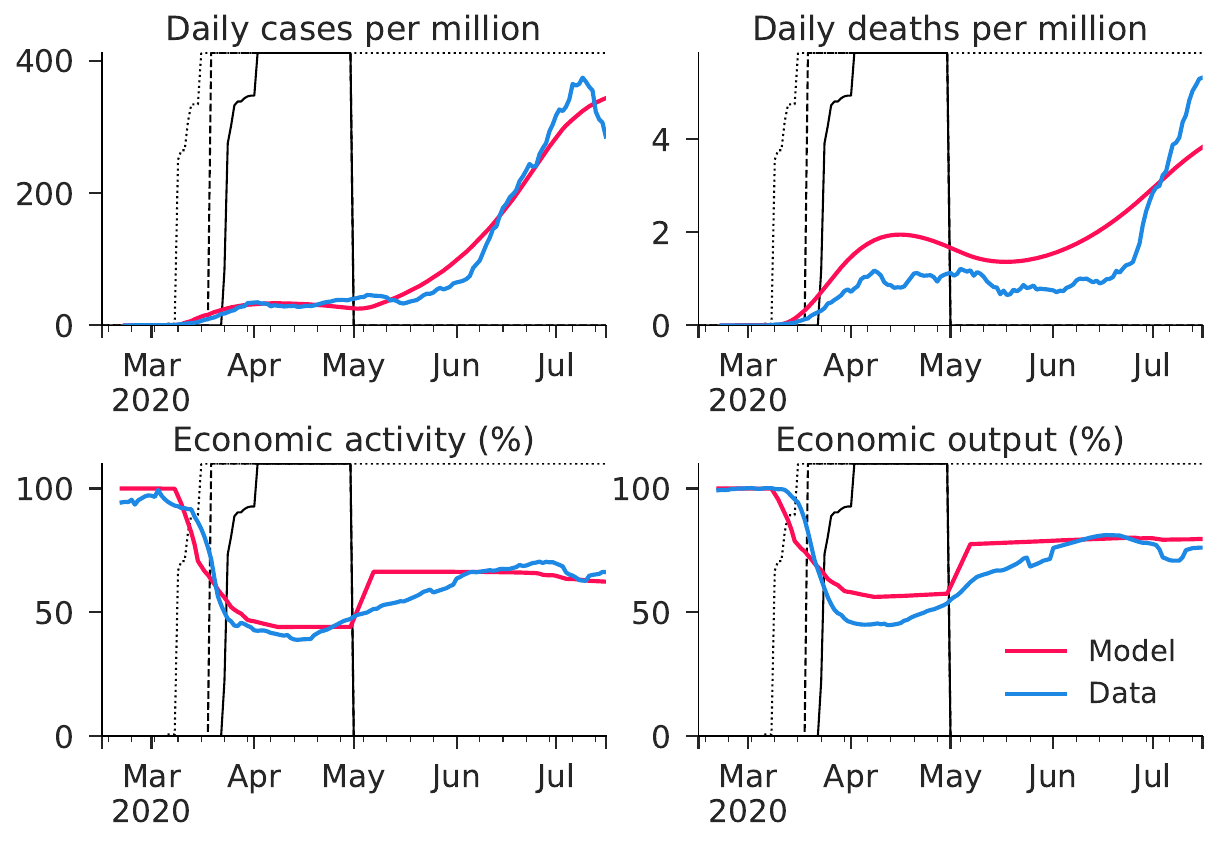}
    \end{subfigure}

    \bigskip

    \begin{subfigure}{.48\textwidth}
      \centering
      \caption{Utah}
      \includegraphics[width=1.0\textwidth]{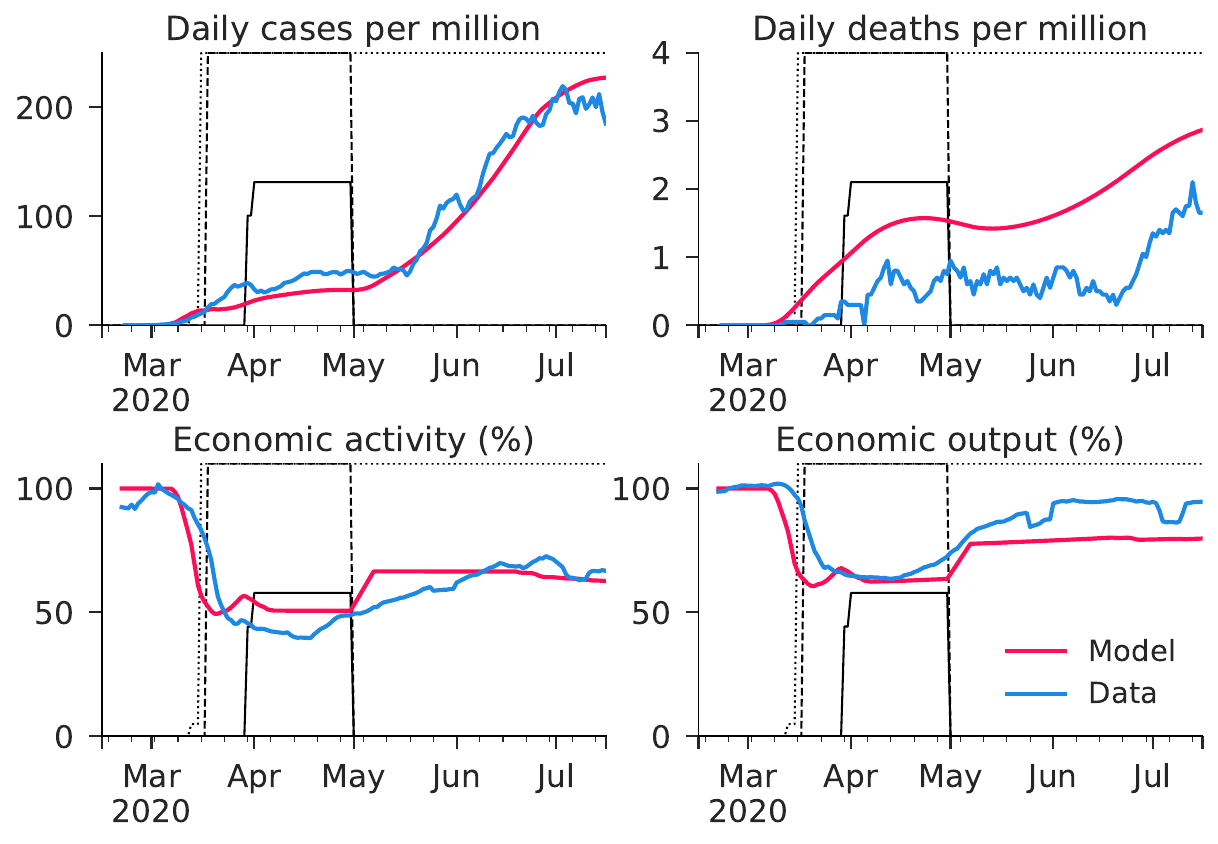}
    \end{subfigure}
    ~~
    \begin{subfigure}{.48\textwidth}
      \centering
      \caption{Virginia}
      \includegraphics[width=1.0\textwidth]{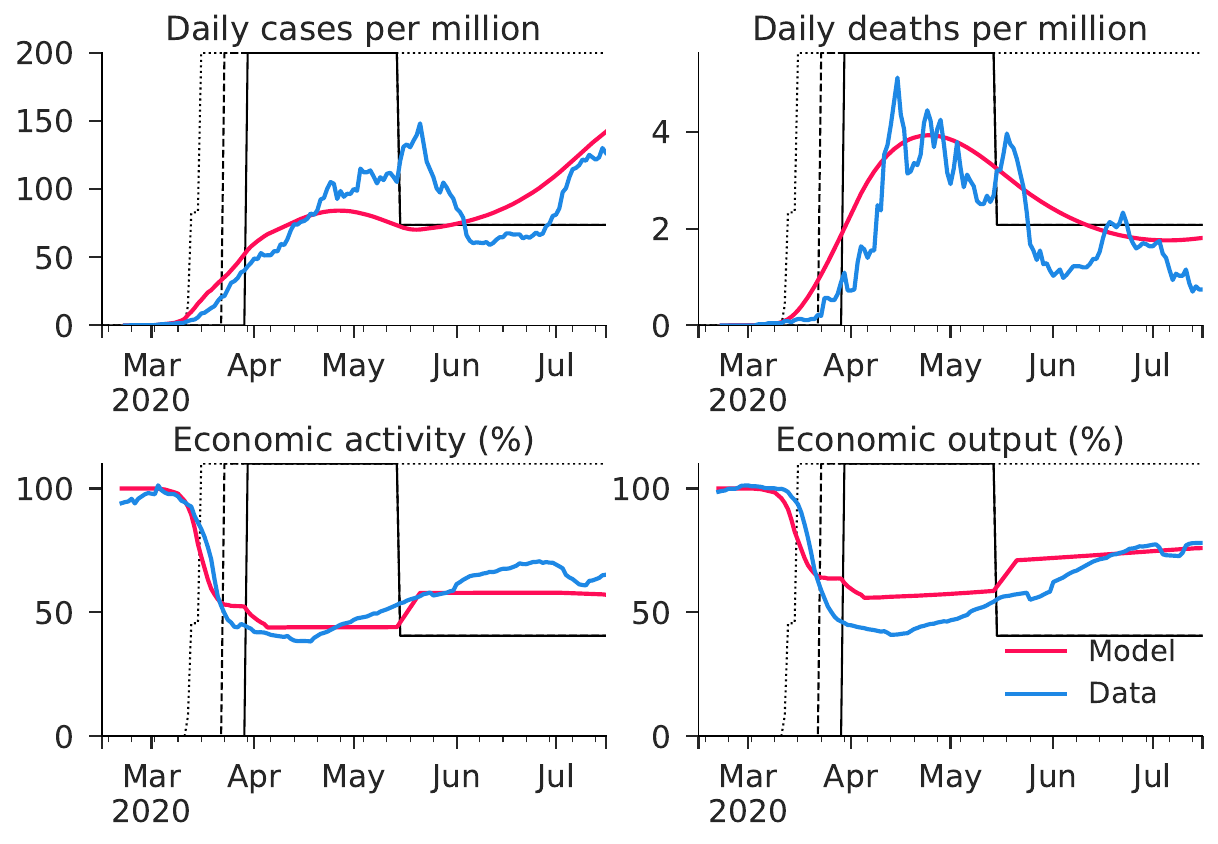}
    \end{subfigure}
    
    \bigskip
    
    \begin{subfigure}{.48\textwidth}
      \centering
      \caption{Washington}
      \includegraphics[width=1.0\textwidth]{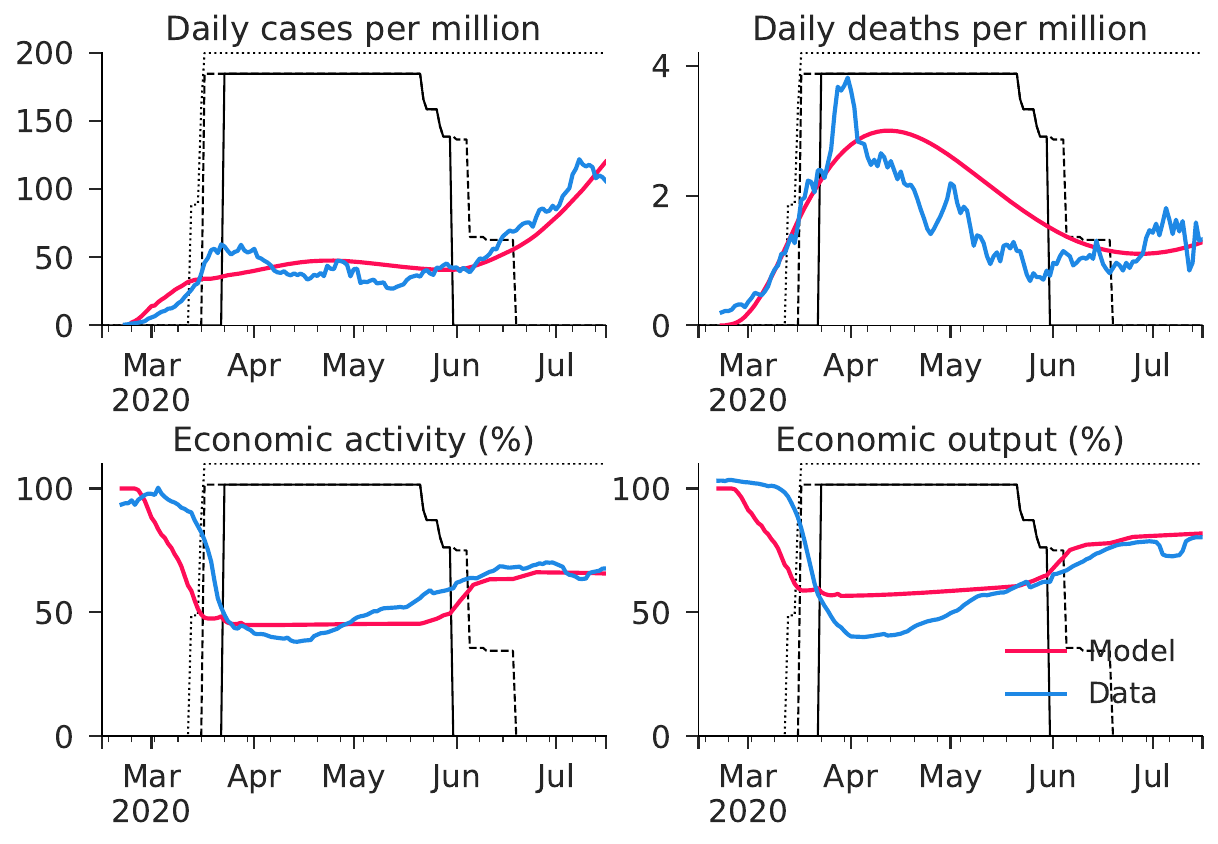}
    \end{subfigure}
    ~~
    \begin{subfigure}{.48\textwidth}
      \centering
      \caption{West Virginia}
      \includegraphics[width=1.0\textwidth]{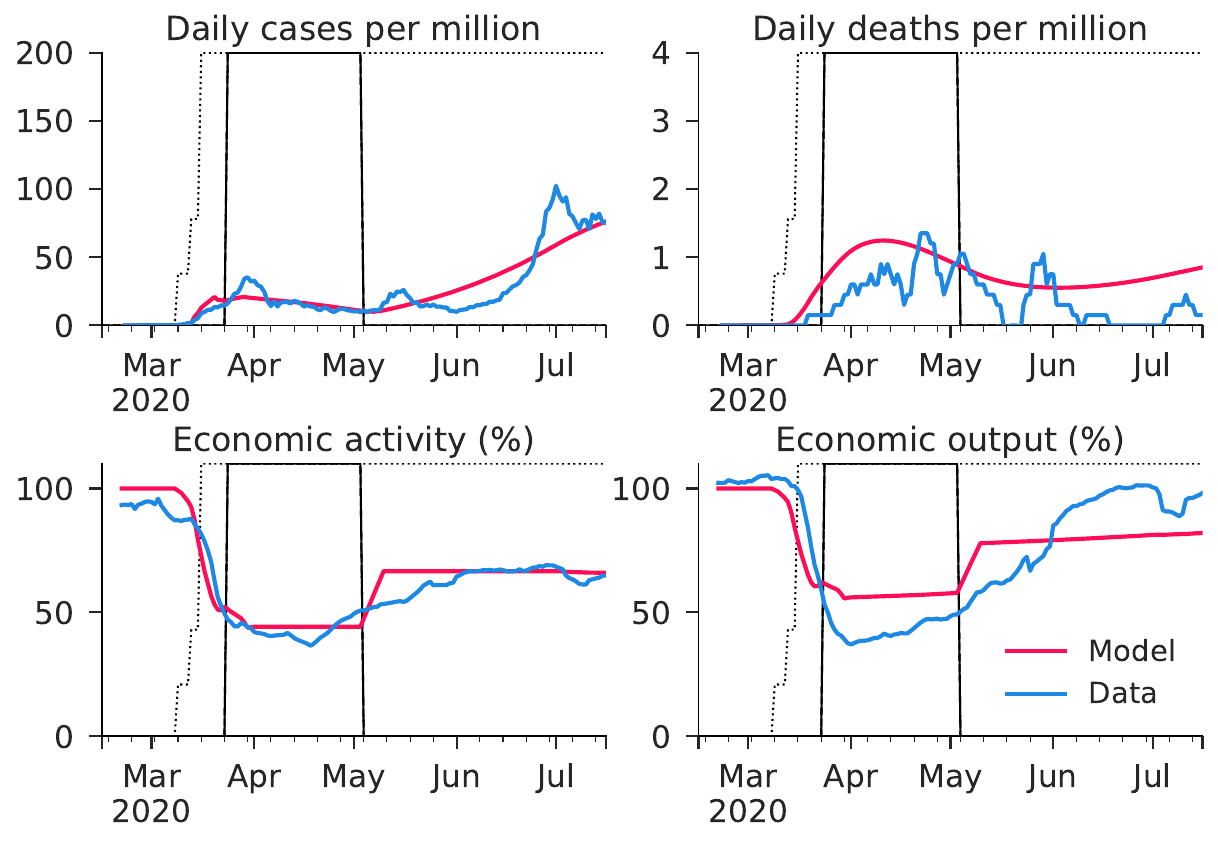}
    \end{subfigure}
\end{figure}

\begin{figure}[t]
    \captionsetup[subfigure]{labelformat=empty}
    \centering
    \begin{subfigure}{.48\textwidth}
      \centering
      \caption{Wisconsin}
      \includegraphics[width=1.0\textwidth]{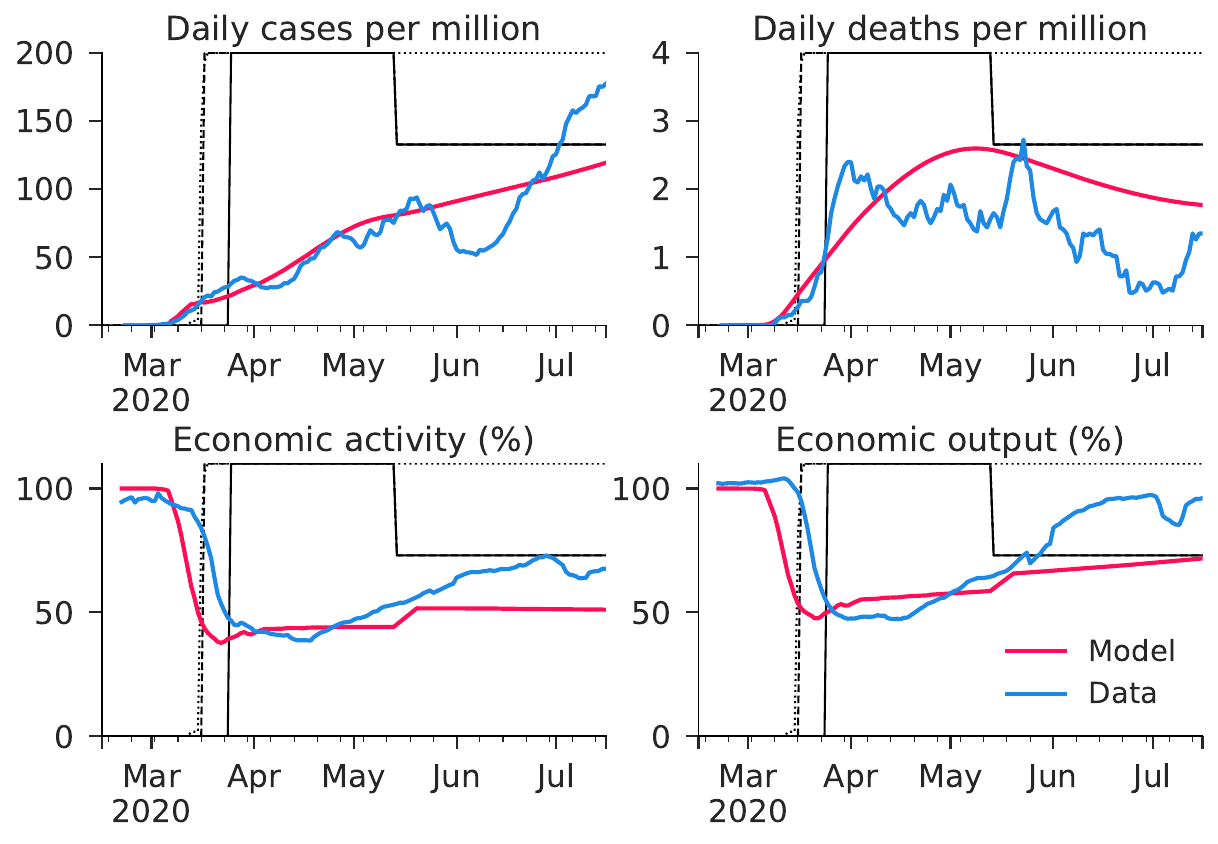}
    \end{subfigure}
    ~~
    \begin{subfigure}{.48\textwidth}
      \centering
      \caption{Wyoming}
      \includegraphics[width=1.0\textwidth]{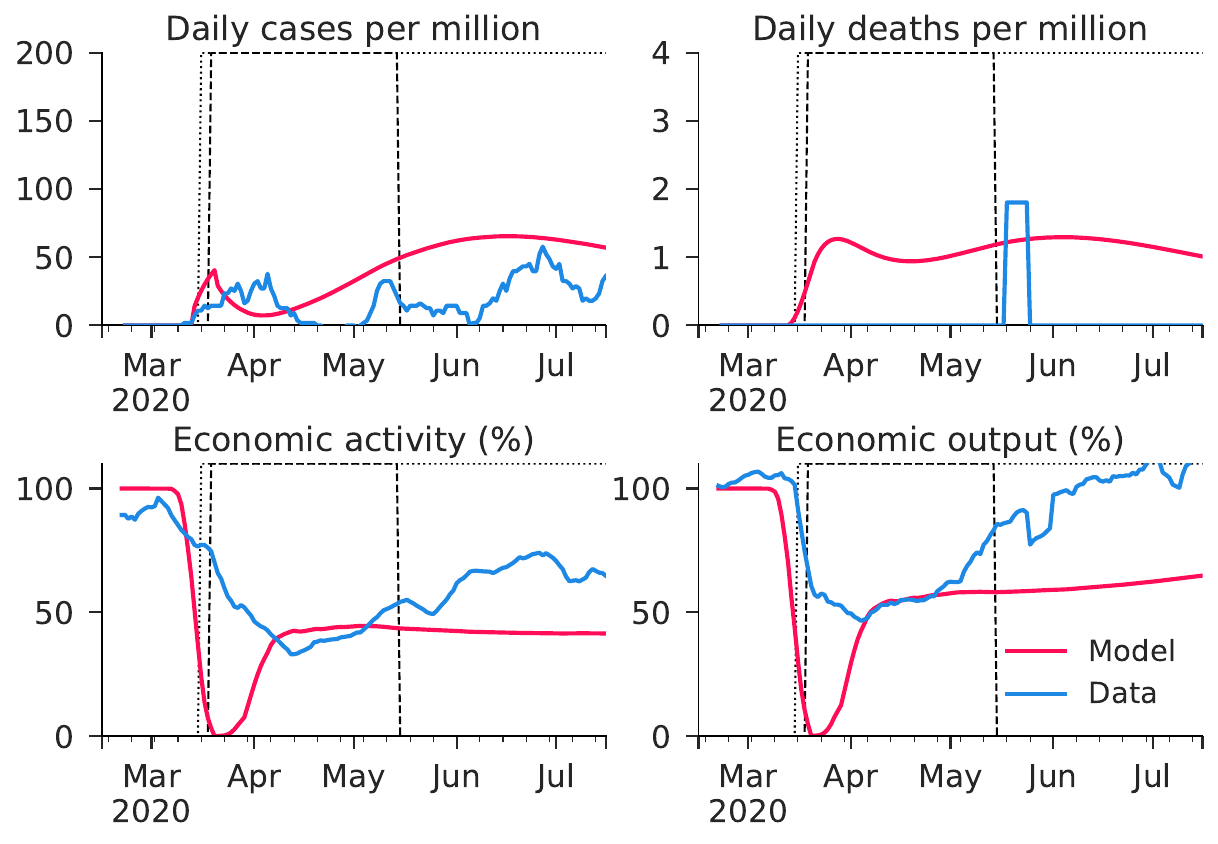}
    \end{subfigure}

\end{figure}

\end{document}

%% file: diagrams/sir_vs_econsir_big_tikzcode.tex
\tikzset{every picture/.style={line width=0.75pt}} 

\begin{tikzpicture}[x=0.75pt,y=0.75pt,yscale=-1,xscale=1]

\draw [line width=0.75]    (122.87,192.57) -- (343,192.57) ;
\draw [shift={(346,192.57)}, rotate = 180] [fill={rgb, 255:red, 0; green, 0; blue, 0 }  ][line width=0.08]  [draw opacity=0] (12.5,-6.01) -- (0,0) -- (12.5,6.01) -- cycle    ;
\draw [line width=0.75]    (408.87,192.57) -- (629,192.57) ;
\draw [shift={(632,192.57)}, rotate = 180] [fill={rgb, 255:red, 0; green, 0; blue, 0 }  ][line width=0.08]  [draw opacity=0] (12.5,-6.01) -- (0,0) -- (12.5,6.01) -- cycle    ;
\draw  [fill={rgb, 255:red, 235; green, 235; blue, 235 }  ,fill opacity=1 ][line width=0.75]  (60.13,61.57) .. controls (60.13,44.21) and (74.21,30.13) .. (91.57,30.13) .. controls (108.93,30.13) and (123,44.21) .. (123,61.57) .. controls (123,78.93) and (108.93,93) .. (91.57,93) .. controls (74.21,93) and (60.13,78.93) .. (60.13,61.57) -- cycle ;
\draw  [fill={rgb, 255:red, 235; green, 235; blue, 235 }  ,fill opacity=1 ] (346.13,61.57) .. controls (346.13,44.21) and (360.21,30.13) .. (377.57,30.13) .. controls (394.93,30.13) and (409,44.21) .. (409,61.57) .. controls (409,78.93) and (394.93,93) .. (377.57,93) .. controls (360.21,93) and (346.13,78.93) .. (346.13,61.57) -- cycle ;
\draw  [fill={rgb, 255:red, 235; green, 235; blue, 235 }  ,fill opacity=1 ] (632.13,61.57) .. controls (632.13,44.21) and (646.21,30.13) .. (663.57,30.13) .. controls (680.93,30.13) and (695,44.21) .. (695,61.57) .. controls (695,78.93) and (680.93,93) .. (663.57,93) .. controls (646.21,93) and (632.13,78.93) .. (632.13,61.57) -- cycle ;
\draw [line width=0.75]    (123,61.57) -- (343.13,61.57) ;
\draw [shift={(346.13,61.57)}, rotate = 180] [fill={rgb, 255:red, 0; green, 0; blue, 0 }  ][line width=0.08]  [draw opacity=0] (12.5,-6.01) -- (0,0) -- (12.5,6.01) -- cycle    ;
\draw [line width=0.75]    (409,61.57) -- (629.13,61.57) ;
\draw [shift={(632.13,61.57)}, rotate = 180] [fill={rgb, 255:red, 0; green, 0; blue, 0 }  ][line width=0.08]  [draw opacity=0] (12.5,-6.01) -- (0,0) -- (12.5,6.01) -- cycle    ;
\draw  [fill={rgb, 255:red, 235; green, 235; blue, 235 }  ,fill opacity=1 ][line width=0.75]  (60.13,192.43) .. controls (60.13,175.07) and (74.21,161) .. (91.57,161) .. controls (108.93,161) and (123,175.07) .. (123,192.43) .. controls (123,209.79) and (108.93,223.87) .. (91.57,223.87) .. controls (74.21,223.87) and (60.13,209.79) .. (60.13,192.43) -- cycle ;
\draw  [fill={rgb, 255:red, 235; green, 235; blue, 235 }  ,fill opacity=1 ] (346.13,192.57) .. controls (346.13,175.21) and (360.21,161.13) .. (377.57,161.13) .. controls (394.93,161.13) and (409,175.21) .. (409,192.57) .. controls (409,209.93) and (394.93,224) .. (377.57,224) .. controls (360.21,224) and (346.13,209.93) .. (346.13,192.57) -- cycle ;
\draw  [fill={rgb, 255:red, 235; green, 235; blue, 235 }  ,fill opacity=1 ] (632.13,192.57) .. controls (632.13,175.21) and (646.21,161.13) .. (663.57,161.13) .. controls (680.93,161.13) and (695,175.21) .. (695,192.57) .. controls (695,209.93) and (680.93,224) .. (663.57,224) .. controls (646.21,224) and (632.13,209.93) .. (632.13,192.57) -- cycle ;

\draw (131,131.4) node [anchor=north west][inner sep=0.75pt]  [font=\large]  {$\overbrace{\textcolor[rgb]{0,0,0}{\beta \thinspace I}\textcolor[rgb]{0,0,0}{(}\textcolor[rgb]{0,0,0}{t}\textcolor[rgb]{0,0,0}{)}\textcolor[rgb]{0,0,0}{\thinspace }\textcolor[rgb]{0.82,0.01,0.11}{[ 1-F( \beta I( t) \psi )]}}^{\textcolor[rgb]{0,0,0}{n( t)}}\textcolor[rgb]{0,0,0}{\thinspace S( t)}$};
\draw (495,161.4) node [anchor=north west][inner sep=0.75pt]  [font=\large]  {$\delta \thinspace I( t)$};
\draw (-1,169.17) node [anchor=north west][inner sep=0.75pt]   [align=left] {\begin{minipage}[lt]{36.198644pt}\setlength\topsep{0pt}
\begin{center}
{\Large \textcolor[rgb]{0.82,0.01,0.11}{Econ}}\\{\Large SIR}
\end{center}

\end{minipage}};
\draw (75.13,49.4) node [anchor=north west][inner sep=0.75pt]  [font=\large]  {${\displaystyle S( t)}$};
\draw (183,6.4) node [anchor=north west][inner sep=0.75pt]  [font=\large]  {$\overbrace{\beta \thinspace I( t)}^{n( t)}\textcolor[rgb]{0,0,0}{\thinspace S}\textcolor[rgb]{0,0,0}{(}\textcolor[rgb]{0,0,0}{t}\textcolor[rgb]{0,0,0}{)}$};
\draw (5,49) node [anchor=north west][inner sep=0.75pt]   [align=left] {\begin{minipage}[lt]{27.200000000000003pt}\setlength\topsep{0pt}
\begin{center}
{\Large SIR}
\end{center}

\end{minipage}};
\draw (364,49.4) node [anchor=north west][inner sep=0.75pt]  [font=\large]  {${\displaystyle I( t)}$};
\draw (646,49.4) node [anchor=north west][inner sep=0.75pt]  [font=\large]  {${\displaystyle R( t)}$};
\draw (75.13,180.27) node [anchor=north west][inner sep=0.75pt]  [font=\large]  {${\displaystyle S( t)}$};
\draw (364,180.4) node [anchor=north west][inner sep=0.75pt]  [font=\large]  {${\displaystyle I( t)}$};
\draw (646,180.4) node [anchor=north west][inner sep=0.75pt]  [font=\large]  {${\displaystyle R( t)}$};
\draw (496,30.4) node [anchor=north west][inner sep=0.75pt]  [font=\large]  {$\delta \thinspace I( t)$};

\end{tikzpicture}